\newcommand{\bra}[1]{\langle #1 \rvert}
\newcommand{\ket}[1]{\left\lvert #1 \right\rangle}
\newcommand{\lmap}[1]{\mathcal{L}\left(#1\right)}
\newcommand{\hil}{\mathcal{H}}
\newcommand{\hils}[1]{\mathcal{H}_\mathrm{#1}}
\newcommand{\Pis}[1]{\Pi_\mathrm{#1}}
\newtheorem{proposition}{Proposition}
\newtheorem{lemma}{Lemma}
\newtheorem{corollary}{Corollary}
\newtheorem{theorem}{Theorem}
\theoremstyle{definition}
\begin{document}
\title{Existence of Pauli-like stabilizers for every quantum error-correcting code}
\author{Jhih-Yuan Kao} \email{frankkao@ntu.edu.tw}
\affiliation{Department of Physics and Center for Theoretical Physics, National Taiwan University, Taipei 106319, Taiwan}
\affiliation{Center for Quantum Science and Engineering, National Taiwan University, Taipei 106319, Taiwan}
\author{Hsi-Sheng Goan} \email{goan@phys.ntu.edu.tw}
\affiliation{Department of Physics and Center for Theoretical Physics, National Taiwan University, Taipei 106319, Taiwan}
\affiliation{Center for Quantum Science and Engineering, National Taiwan University, Taipei 106319, Taiwan}
\affiliation{Physics Division, National Center for Theoretical Sciences, Taipei, 106319, Taiwan}

\begin{abstract}
    The Pauli stabilizer formalism is perhaps the most thoroughly studied means of procuring quantum error-correcting codes, whereby the code is obtained through commutative Pauli operators and ``stabilized'' by them. In this work we will show that every quantum error-correcting code, including Pauli stabilizer codes and subsystem codes, has a similar structure, in that the code can be stabilized by commutative ``Paulian'' operators which share many features with Pauli operators and which form a \textbf{Paulian stabilizer group}. By facilitating a controlled gate we can measure these Paulian operators to acquire the error syndrome. Examples concerning codeword stabilized codes and bosonic codes will be presented; specifically, one of the examples has been demonstrated experimentally and the observable for detecting the error turns out to be Paulian, thereby showing the potential utility of this approach. This work provides a possible approach to implement error-correcting codes and to find new codes.
\end{abstract}
\maketitle

\section{Introduction}

Quantum information is stored as quantum states. Due to defects in the devices or executions, and the inevitable interaction of the quantum system with the environment, the state of the quantum system can be changed in a nondeterministic manner, which is an error; consequently, error correction is vital for the information to stay hygienic. Using quantum error-correcting codes, states are prepared in specific subspaces such that if certain errors occur, we can detect and correct them \cite{Gottesman97,*Gottesman97arXiv,Gottesman10,*Gottesman09arXiv,Terhal15,Nielsen,Roffe19}. Even though quantum devices without error correction may serve certain purposes such as simulating physical systems \cite{Buluta09,Georgescu14}, a universal quantum computer that is scalable still requires error correction \cite{Preskill18,Cirac21}.

Pauli stabilizer codes \cite{Gottesman96,Gottesman97,Calderbank98} are an extremely important class of quantum error-correcting codes. Some of the most promising codes, such as topological codes \cite{Kitaev03,Dennis02,Bombin06,Raussendorf07PRL,Fowler12PRX} which include surface codes \cite{Bravyi98,Freedman01,Dennis02,Horsman12,Fowler12,Hill15,Versluis17,Landau16,Takita16,Yoder17,Ataides21}, and quantum low-density parity-check (LDPC) codes \cite{Camara07,Gottesman13arXiv,Baber15,Breuckmann21}, are based on Pauli stabilizer codes. An advantage of Pauli stabilizer formalism is that it informs us of which measurements to implement to detect the errors, namely the stabilizer generators.

There are several ways of generalizing the Pauli stabilizer formalism, for example, by generalizing Pauli groups, or nice error bases to nonbinary cases \cite{Knill96_1,*Knill96_1arXiv,Knill96_2,*Knill96_2arXiv,Ashikhmin01,Ketkar06,Nadkarni21}, or by considering noncommutative groups on binary codes \cite{Ni15}. In this work, instead of defining a certain group and constructing an error-correcting code from it, we will do the opposite: We investigate the structure of any error-correcting code, including subsystem codes \cite{Kribs05,Poulin05,Kribs06,*Kribs06arXiv,Bacon06,Aly06,*Aly06arXiv,Aliferis07,Higgott21}, to show that every code can be stabilized by a ``Paulian'' stabilizer group  (Proposition~\ref{pro:main} and Corollary~\ref{cor:sub}), the exact meaning of being Paulian to be explained in Sec.~\ref{sec:pau}. Identifying the Paulian stabilizer group of an error-correcting code may give us a guideline on how to implement such a code: The error syndrome can be obtained by measuring these Paulian operators, which can be conducted via controlled operations (Sec.~\ref{sec:cnot}). We will also show how to obtain the Paulian stabilizer group for a concatenated binary code (Sec.~\ref{sec:con}) \cite{Knill96,Gottesman97,Gottesman10}, and in Sec.~\ref{sec:ex} we will demonstrate some examples. For conciseness, details of some topics can be found in the appendixes.

\section{Preliminaries}
\label{sec:pre}
$\mathbb{A}\subseteq \mathbb{B}$ means $\mathbb{A}$ is a subset of $\mathbb{B}$, while $\subset$ indicates it is a proper subset. A map $f: \mathbb{X}\rightarrow \mathbb{Y}$ to the restriction of $\mathbb{X}'\subseteq \mathbb{X}$, denoted by $f|_{\mathbb{X}'}$, is a map from $\mathbb{X}'$ to $\mathbb{Y}$ with $f|_{\mathbb{X}'} (x) = f(x)$ $\forall x\in \mathbb{X}'$ \cite{Roman,Loomis,Kaophd,*KaophdarXiv}, for which we will often shrink the codomain to the image $f|_{\mathbb{X}'}(\mathbb{X}') = f(\mathbb{X}')$. The \emph{span} of a set of vectors is the set of all linear combinations thereof, which is a subspace. We will use shorthand to label sets obtained from others in a sensible way, e.g. $\hil^{\otimes 3}$ is $\hil\otimes\hil\otimes\hil$. The subscript beside an identity operator, denoted by $I$, or orthogonal projection, denoted by $\Pi$, indicates the (sub)space the operator acts on or projects onto; e.g. $\Pi_\text{C}$ projects onto $\hil_\text{C}$. 

The code space $\hils{C}$ of a quantum error-correcting code is a subspace of the entire space $\hil$ where the encoded state is stored \cite{Gottesman97,Preskill,Nielsen}; sometimes we simply refer to the code space as the code. With $\mathbb{C}^{n}$ denoting a generic $n$-dimensional complex vector space, a code is called an $[[n,k]]$-code if $\hil \cong \mathbb{C}^{2^n}$ and $\hil_\text{C} \cong \mathbb{C}^{2^k}$ for some integers $n$ and $k$, where $A \cong B$ indicates that $A$ and $B$ are isomorphic; such codes are said to be \textbf{binary}---We use the term binary codes in a stricter sense than, e.g., Ref.~\cite{Chen08PRA}, as we require the code space to be binary too. Also, an $((n,k,d))$-code has $n$ qubits, a code space of dimension $k$ and distance $d$ \cite{Cross09}. For a qubit system, $\ket{\pm 1}$ instead of $\ket{0}$ and $\ket{1}$ will denote the $\pm 1$-eigenstates. 

An operator is said to \emph{stabilize} a subspace $\hil'$ if $\hil'$ is a subspace of the operator's $1$-eigenspace. We will refer to the subspace spanned by all simultaneous eigenvectors with the same simultaneous eigenvalues as a \emph{simultaneous eigenspace}. $\mathsf{P}^n$ will denote the Pauli group on $(\mathbb{C}^{2})^{\otimes n} \cong \mathbb{C}^{2^n}$, and its members will be called \emph{Pauli operators}  \cite{Gottesman97,Knill96_1,Knill96_2}; in this work we will use $X_i$, $Y_i$ and $Z_i$ to denote Pauli $X$, $Y$, $Z$ operators on the $i$-th site. If the code space of a code is the $(1,\dotsc,1)$-simultaneous eigenspace of commutative Pauli operators, the code is called a \emph{Pauli stabilizer code}, and the abelian group generated by these operators is the \emph{stabilizer group} \cite{Gottesman96,Gottesman97,Calderbank98}.

A representation of a group $\mathsf{G}$ on a space $\mathcal{V}$ is a homomorphism $\Phi$ from $\mathsf{G}$ to the general linear group of $\mathcal{V}$, and it is said to be faithful if $\Phi$ is one-to-one \cite{Hall}. Abusing the language, we will call the image $\Phi(\mathsf{G})$ ``a representation.'' Two representations $\mathsf{G}_1$ on $\hil_1$ and $\mathsf{G}_2$ on $\hil_2$ of $\mathsf{G}$ are said to be unitarily equivalent if there exits a unitary map $V:\hil_1\rightarrow \hil_2$ such that $\mathsf{G}_1 = V^{-1} \mathsf{G}_2 V$ \cite{Landsman,Putnam19lecture,Blackadar}.

\subsection{Involutions} \label{sec:inv}
An operator is said to be an involution if it is its own inverse i.e., if it squares to $I$ \cite{Roman}; for instance, Pauli $X$, $Y$, $Z$ are all involutions. By definition, the spectrum of an involution can only contain $\pm 1$, which by the spectral theorem leads to
\begin{lemma}\label{lem:invuni}
    An involution on a Hilbert space is normal if and only if it is self-adjoint and if and only if it is unitary.
\end{lemma}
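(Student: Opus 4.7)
The plan is to prove the three-way equivalence by showing two easy implications and one nontrivial one; the real content lies in showing that a normal involution must already be self-adjoint (and unitary), for which the spectral theorem is the natural tool.

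First I would dispatch the trivial directions. If $A$ is self-adjoint then $A A^* = A^2 = A^* A$, so $A$ is normal; and if $A$ is unitary then $A A^* = I = A^* A$, so again $A$ is normal. These require no more than the definitions together with $A^2 = I$.

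The substantive step is $\text{normal} \Rightarrow \text{self-adjoint}$ and $\text{normal} \Rightarrow \text{unitary}$. Here I would invoke the spectral theorem for (possibly unbounded, but in the involution case automatically bounded) normal operators: a normal $A$ admits a spectral decomposition with support on its spectrum $\sigma(A)$. Since $A^2 = I$, the functional calculus forces $\lambda^2 = 1$ for every $\lambda \in \sigma(A)$, so $\sigma(A) \subseteq \{+1, -1\}$. Consequently $A = P_+ - P_-$, where $P_\pm$ are the orthogonal spectral projections onto the $\pm 1$-eigenspaces, and $P_+ + P_- = I$ because these two eigenspaces span the whole Hilbert space. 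Taking adjoints, $A^* = P_+ - P_- = A$, which proves self-adjointness; and then $A A^* = A^2 = I = A^* A$ gives unitarity.

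The only obstacle worth flagging is a question of setting: on an infinite-dimensional Hilbert space one should verify that an ``involution'' as understood here is bounded (which is immediate if one assumes $A \in \mathcal{B}(\hil)$, since $\|A\|^2 = \|A^2\| = 1$ once normality is invoked, via $\|A\|^2 = \|A^* A\|$), so the spectral theorem for bounded normal operators applies. Once this is in hand, the argument above closes the loop: self-adjoint $\Rightarrow$ normal, unitary $\Rightarrow$ normal, and normal $\Rightarrow$ both self-adjoint and unitary, giving the stated equivalences.
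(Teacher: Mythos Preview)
Your argument is correct and is essentially the paper's approach: the paper simply notes that an involution has spectrum contained in $\{+1,-1\}$ and invokes the spectral theorem, which is exactly what you do in detail with the decomposition $A=P_+-P_-$. The only extra content you add is the boundedness remark, which the paper leaves implicit.
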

Self-adjoint involutions are of great physical interest, because they correspond to both physical observables (self-adjoint) and evolution of a system (unitary). A Pauli group is composed of unitary involutions and operators that square to $-I$, which we call counterinvolutions. We can easily see that a counterinvolution is an involution multiplied by $i$, and vice versa.

If a pair of involutions or counterinvolutions $A$ and $B$ anticommute, for an $a$-eigenvector $\ket{v}$ of $A$, $BA\ket{v}= aB\ket{v} = -AB\ket{v}$, and since they are by definition automorphisms, $B\ket{v}\neq 0$ for all nonzero $\ket{v}$; therefore, $B$ maps the $a$-eigenspace of $A$ to the $-a$-eigenspace, and the $\pm a$-eigenspaces are thus isomorphic.

\subsection{Paulian operators} \label{sec:pau}

An operator will be called \textbf{Paulian} if 
\begin{enumerate}
    \item it is either an involution or counterinvolution;
    \item it is unitary; and
    \item it has two isomorphic eigenspaces unless it has a single eigenspace.
\end{enumerate}
Accordingly, all Pauli operators are Paulian. A Paulian operator is self-adjoint if and only if it is an involution, and it is skew-self-adjoint if and only if it is a counterinvolution. 

When the space is finite-dimensional, we could simply require Paulian operators, except for those proportional to $I$, to be traceless. As the eigenvalues have opposite signs, the two eigenspaces have the same dimension and hence are isomorphic. However, a unitary operator on an infinite-dimensional space is not trace class \cite{Blackadar,HallQ} and in general it does not have a well-defined trace, so we simply demand the eigenspaces be isomorphic. Having isomorphic eigenspaces, the unitary map between them will play the role of Pauli $Z$ [cf. \eqref{eq:binh} and the proof for Proposition~\ref{pro:main} (Sec.~\ref{sec:stbgp})]; besides, this makes it possible to find anticommuting Paulian operators, cf. the previous subsection.

To appreciate the significance of Paulian operators in physics, we remark 
\begin{enumerate}
    \item By Lemma~\ref{lem:invuni}, a Paulian involution is unitary and self-adjoint at the same time, so it can not only describe the evolution but also be an observable.
    \item Because a Paulian operator (except for those that are proportional to $I$) is traceless or has two isomorphic eigenspaces, very roughly speaking, if an observable has two possible outcomes, and if both outcomes are equally likely on average with all states considered, then it is Paulian. 
\end{enumerate}

Finally, in this work when we refer to an operator as Paulian, it may not necessarily be Paulian on the entire domain, but only Paulian to the restriction of a specific subspace, which subspace has to do with the errors the operator can detect or correct. This will be explained in more detail later.

\subsection{Condition for error correction}
\label{sec:conerr}
The necessary and sufficient condition for a set of errors $\mathbb{E}$ to be correctable is \cite{Knill96_1,Knill96_2}
\begin{equation}
    \Pi_\text{C} E^\dagger F \Pi_\text{C} \propto \Pi_\text{C} \;\forall E,F\in\mathbb{E}.\label{eq:errcon}
\end{equation}
There are other expressions for this condition, for example, $\Pi_\text{C} E^\dagger F \Pi_\text{C} = \alpha_{E,F} \Pi_\text{C}$ where $\alpha$ is a Hermitian matrix \cite{Nielsen,Gottesman97,Preskill,Gottesman10}, or in terms of inner product and basis \cite{Knill97,Gottesman97,Gottesman10}. It is worth mentioning that the common requirement that $\alpha$ is Hermitian is somewhat superfluous: If two operators $A$ and $B$ satisfy 
$$\Pi A^\dagger B \Pi = c \Pi$$ 
for some constant $c$ and orthogonal projection $\Pi$, then it must be true that 
$$\Pi B^\dagger A \Pi = \left( \Pi A^\dagger B \Pi \right)^\dagger = c^* \Pi.$$
Hence the matrix $\alpha$ above is naturally Hermitian. If a code can correct $\mathbb{E}$, it can correct any error in the span of $\mathbb{E}$.

From \cite{Nielsen,Preskill}, we can find a maximal subset $\mathbb{F}$ of $\text{span}\mathbb{E}$ whose elements obey
\begin{equation}
    \Pi_\text{C} E^\dagger F \Pi_\text{C} = \begin{cases}
        0, & E\neq F \\
        \Pi_\text{C}, & E = F
    \end{cases} \; \forall E,F\in\mathbb{F},\label{eq:errEi}
\end{equation}
and we call correctable errors in $\mathbb{F}$ \emph{orthonormal}; the set $\mathbb{F}$ is maximal in the sense that
\begin{equation}
    \sum_{E\in \mathbb{E}} E \hil_\text{C} = \bigoplus_{F\in \mathbb{F}} F \hil_\text{C}, \label{eq:ehc}
\end{equation}
where $\oplus$ denotes an orthogonal direct sum, is satisfied. Note 
\begin{equation}
    E \hils{C}\cong \hils{C} \;\forall E\in \mathbb{E},
\end{equation}
so $\bigoplus_{F\in \mathbb{F}} F \hil_\text{C}$ is an orthogonal direct sum of isomorphic spaces. On the other hand, if we have a set of errors or operators such that the operators in it are ``orthogonal'' but not necessarily ``normalized,'' i.e., $\Pi_\text{C} E^\dagger E \Pi_\text{C} = c_E \Pis{C}$ for some scalar $c_E$ that is not necessarily 1, then the set is referred to as \emph{orthogonal}.

\section{Paulian stabilizer group}
\label{sec:bistg}

Here is the main result of this work, which will be explained in detail soon after; $\dim$ below refers to the dimension of a vector space: 
\begin{proposition}\label{pro:main}
    Consider an error-correcting code, with the code space $\hil_\mathrm{C}$ belonging in $\hil$. There exist operators which stabilize $\hils{C}$ and satisfy the following properties:
    \begin{enumerate}
        \item To the restriction of a $2^m k'$-dimensional subspace $\hil'$ for some positive integer $m$ with $\hil_\mathrm{C}\subseteq \hil' \subseteq \hil$ and $k' \geq \dim \hils{C}$, these operators are mutually commutative Paulian operators, forming an abelian group $\mathsf{S}$ called the \textbf{Paulian stabilizer group}, which is generated by $m$ operators. If $\hil$ is infinite-dimensional, $\hil'$ can be as well.
        \item $\mathsf{S}$ is an abelian subgroup of a group of Paulian operators $\mathsf{P}^m_\mathrm{S}$, which is a faithful representation of $\mathsf{P}^m$.
        \item A subset of all correctable errors can be detected by measuring these operators and corrected by applying proper inverses. 
    \end{enumerate}
\end{proposition}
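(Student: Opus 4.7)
The plan is to use the orthonormal decomposition \eqref{eq:errEi}--\eqref{eq:ehc} of the space spanned by correctable errors to identify $2^m$ of the mutually orthogonal, pairwise isomorphic summands $F\hils{C}$ with the computational basis states of an auxiliary $m$-qubit register, and then transport the Pauli group on that register back to $\hil$; the Paulian stabilizers will be the images of the Pauli-$Z$ operators on the register.

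First I would fix a maximal orthonormal family $\mathbb{F}\subseteq\text{span}\,\mathbb{E}$ of correctable errors, chosen to contain $I$, let $m$ be the largest integer with $2^m\leq|\mathbb{F}|$, and pick a size-$2^m$ subfamily $\mathbb{F}'=\{F_s:s\in\{0,1\}^m\}$ with $F_{0\cdots 0}=I$. Put $\hil':=\bigoplus_{s\in\{0,1\}^m}F_s\hils{C}\supseteq\hils{C}$. The relations \eqref{eq:errEi}, rewritten as $\inner{F_s\psi}{F_{s'}\psi'}=\delta_{s,s'}\inner{\psi}{\psi'}$ for $\ket{\psi},\ket{\psi'}\in\hils{C}$, say exactly that the map $V:(\mathbb{C}^2)^{\otimes m}\otimes\hils{C}\to\hil'$ sending $\ket{s}\otimes\ket{\psi}\mapsto F_s\ket{\psi}$ is a unitary. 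I would then set $\widetilde{P}:=V(P\otimes I_{\hils{C}})V^\dagger$ for each $P\in\mathsf{P}^m$, extended as the identity on $(\hil')^\perp$, and take $\mathsf{P}^m_\text{S}:=\{\widetilde{P}:P\in\mathsf{P}^m\}$. Since $V$ is unitary, $P\mapsto\widetilde{P}$ is a faithful representation, and each $\widetilde{P}$ inherits unitarity, the involution/counterinvolution property, and isomorphic eigenspaces on $\hil'$ from its preimage, hence is Paulian to the restriction of $\hil'$ in the sense of Sec.~\ref{sec:pau}.

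Next I would take $\mathsf{S}$ to be the image of the abelian subgroup generated by $Z_1\otimes I,\dots,Z_m\otimes I$. These $m$ mutually commuting Paulian operators have $(+1,\dots,+1)$-simultaneous eigenspace $V(\ket{0\cdots 0}\otimes\hils{C})=F_{0\cdots 0}\hils{C}=\hils{C}$, so $\mathsf{S}$ stabilizes $\hils{C}$, establishing items~1 and 2 with $k'=\dim\hils{C}$. For item~3, a state $F_s\ket{\psi}\in F_s\hils{C}$ lies in the simultaneous eigenspace whose signs are dictated by $s$, so measuring the $m$ generators reads off $s$ and applying $F_s^{-1}$ restores $\ket{\psi}$.

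The hard part will be twofold. The routine but essential check is that $V$ is unitary; this reduces to matching \eqref{eq:errEi} against the standard inner product on $(\mathbb{C}^2)^{\otimes m}\otimes\hils{C}$, but it is the one place the bookkeeping could silently go wrong. The more conceptual issue --- reflected in the word ``subset'' in item~3 --- is the power-of-two mismatch: when $|\mathbb{F}|$ is not itself a power of two, some correctable orthonormal errors must be discarded, and $\hil'$ is correspondingly a proper subspace of $\sum_{E\in\mathbb{E}}E\hils{C}$. Nothing else in the argument is sensitive to the finite- or infinite-dimensionality of $\hils{C}$, since $m$ remains a finite integer and all tensor-product manipulations only multiply dimension by $2^m$.
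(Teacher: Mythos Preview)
Your proposal is correct and follows essentially the same route as the paper: choose $2^m$ orthonormal errors including $I$, build the unitary between $\bigoplus_s F_s\hils{C}$ and $(\mathbb{C}^2)^{\otimes m}\otimes\hils{C}$, and conjugate the Pauli group through it. The only cosmetic difference is that the paper allows arbitrary unitaries $V_{(t)}:F_{(t)}\hils{C}\to\hils{C}$ in assembling $U$, whereas you make the concrete choice $V_{(t)}^{-1}=F_{(t)}|_{\hils{C}}$; your version is a valid specialization and slightly cleaner to state.
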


\subsection{The minimal stabilizer group} \label{sec:stbgp}
First, we will prove a ``minimal'' version of this proposition, which yields a ``minimal'' Paulian stabilizer group. The reader may skim over the proof and come back later when necessary.

\begin{proof}
    With $\mathbb{F}$ defined in \eqref{eq:errEi}, we choose a subset of $\mathbb{F}' \subseteq \mathbb{F}$ whose cardinality is a positive integral power of 2, $m$, with $I \in \mathbb{F}'$. As long as the code is nontrivial, such a subset always exists. We want $\mathbb{F}'$ to be as large as possible, so we choose
    \begin{equation}
        m = \left \lfloor \log_2 \left|\mathbb{F}\right| \right \rfloor,\label{eq:fpri}
    \end{equation}
    where $\left\lfloor \cdot \right \rfloor$ is the floor function; we thus have $|\mathbb{F}'| = 2^m$. 

    Let $\mathbb{T}$ be the set of all tuples of $\pm 1$ with length $m$ and $(t)$ be the symbol for elements in $\mathbb{T}$, which we will use for indexing. For each $F\in\mathbb{F}'$, choose a a unique tuple $(t)\in \mathbb{T}$; to put it another way, we define a bijective ``syndrome map'' $f_{\mathrm{sym}}: \mathbb{F}'\rightarrow \mathbb{T}$ such that $f_{\mathrm{sym}}(F) \in \mathbb{T}$ is the tuple corresponding to $F$, which, as we will see, is the syndrome of $F$. $F_{(t)}$ will denote the error $(t)\in \mathbb{T}$ refers to:
    \begin{equation}
        F_{(t)} := f_{\mathrm{sym}}^{-1}\left((t)\right), \label{eq:synmap}
    \end{equation}
    and likewise\footnote{Later $\hil_{(t)}$ will be defined as the $(t)$-simultaneous eigenspace of the stabilizers. Hence \eqref{eq:ht} is not the definition of $\hil_{(t)}$, but it is true here.} 
    \begin{equation}
        \hil_{(t)} = F_{(t)} \hils{C}.\label{eq:ht}
    \end{equation}
    Among all such binary tuples $(t)$, 
    \begin{equation}
        (I) := (1,\dotsc,1)
    \end{equation}
    will serve as a convenient abbreviation; in particular we require
    \begin{equation}
        F_{(I)} = I,
    \end{equation}    
    namely $f_{\mathrm{sym}}(I)$ is selected to be $(I) = (1,\dotsc,1)$.
    We also define
    \begin{equation}
        \overline{\hil} := \bigoplus_{F\in \mathbb{F}'} F \hils{C} = \bigoplus_{(t)\in \mathbb{T}} \hil_{(t)} \subseteq \hil.\label{eq:h}
    \end{equation}

    Here let $\hil'$ of this proposition be $\overline{\hil}$. With
    \begin{equation}
        \dim \overline{\hil} = 2^m \dim\hils{C},
    \end{equation}
    it means $k'$ is $\dim\hils{C}$. We have the following isomorphism:
    \begin{equation}
        \hil_\text{B} := \hil_\text{C} \otimes \bigotimes_{i=1}^m \mathbb{C}_i^2 \cong \overline{\hil},\label{eq:binh}
    \end{equation}
    where the subscript $i$ of $\mathbb{C}_i^2$ is for indexing.

    Let's construct a unitary map $U:\overline{\hil} \rightarrow\hil_\text{B}$ as follows: Since $\hil_{(t)}$'s are isomorphic, there exist unitary maps
    \begin{equation}
        V_{(t)}: \hil_{(t)} \rightarrow \hil_\text{C} \;\forall (t)\in\mathbb{T}, \label{eq:vi}
    \end{equation}
    among which we let $V_{(I)}: \hil_\text{C}\rightarrow \hil_\text{C}$ be $I_\text{C}$. Let's also choose an orthonormal basis $\{\ket{\pm 1}_i\}$ for each $\mathbb{C}_i^2$. For any $(t) = (i_1,\dotsc,i_m) \in \mathbb{T}$ and any $\ket{v} \in \mathcal{H}_{(t)}$, let 
    \begin{equation}
        U \ket{v} := \left(V_{(t)} \ket{v}\right) \otimes \ket{(t)}, \label{eq:U}
    \end{equation}
    where 
    \begin{equation}
        \ket{(t)}:= \ket{i_1}_1 \otimes \cdots \otimes\ket{i_m}_m \in \bigotimes_{i=1}^m \mathbb{C}_i^2. \label{eq:keti}
    \end{equation}
    By definition \eqref{eq:h}, $\overline{\hil}$ is the direct sum of $\hil_{(t)}$'s, so $U$ of \eqref{eq:U} is defined on the entirety of $\overline{\hil}$. $U$ is unitary because $V_{(t)}$'s are unitary and $\{\ket{\pm 1}_i\}$'s are orthonormal bases.

    Now, for every $i=1,\dotsc,m$ let $X_i$ and $Z_i$ denote the operators on $\hil_\text{B}$ that apply Pauli $X$ and $Z$ on $\mathbb{C}^2_i$ and act trivially on the other subsystems including $\hil_\text{C}$. Their counterparts on $\overline{\hil}$ via $U$ are
    \begin{equation}
        Z_i^\text{S} := U^{-1} Z_i U,\;X_i^\text{S} := U^{-1} X_i U; \label{eq:ab}
    \end{equation}
    that is, $X_i$ and $X_i^\text{S}$, and $Z_i$ and $Z_i^\text{S}$, are unitarily similar, and in the language of group theory, this is conjugation by $U$ \cite{RomanG}. The group generated by $X_i$ and $Z_i$ is $I_\mathrm{C}\otimes \mathsf{P}^m$, which is a faithful representation of $\mathsf{P}^m$, so the group generated by $Z_i^\text{S}$'s and $X_i^\text{S}$'s, denoted by $\mathsf{P}^m_\text{S}$, is also a faithful representation of $\mathsf{P}^m$: $I_\mathrm{C}\otimes \mathsf{P}^m$ and $\mathsf{P}^m_\text{S}$ are unitarily equivalent representations, i.e.,
    \begin{equation}
        \mathsf{P}_{\mathrm{S}}^m := U^{-1} \left( I_{\mathrm{C}} \otimes \mathsf{P}^m \right) U.
    \end{equation}

    With these observations, the proposition is proved:
    \begin{enumerate}
        \item $I_{\mathrm{C}} \otimes \mathsf{P}^m$ is a group of Paulian operators, so is $\mathsf{P}_{\mathrm{S}}^m$---Note unless $\hils{C}$ is (isomorphic to) $\mathbb{C}^{2^p}$ for some integer $p$, $I_{\mathrm{C}} \otimes \mathsf{P}^m$ is not a group of Pauli operators. Besides, $Z_i$'s, $m$ in total, generate a maximal linearly independent and abelian subgroup\footnote{Please see the discussion near the end of Sec.~\ref{sec:phaseless}.} of $I_{\mathrm{C}} \otimes \mathsf{P}^m$; by unitary equivalence, $Z_i^\mathrm{S}$'s, $m$ in total, also generate a maximal linearly independent and abelian subgroup of $\mathsf{P}_{\mathrm{S}}^m$:
        \begin{equation}
            \mathsf{S} := \langle Z_1^\text{S},\dotsc,Z_m^\text{S} \rangle.
        \end{equation}
        
        \item Because $\hil_\mathrm{C} \otimes \ket{(t)}$ are the $(t)$-simultaneous eigenspaces of $Z_i$'s, $\hil_{(t)}$ are the $(t)$-simultaneous eigenspaces of $Z_i^\mathrm{S}$'s. $\hils{C} = \hil_{(I)}$ is hence stabilized by $Z_i^\mathrm{S}$'s.
        \item For any $\ket{\psi} \in \hils{C}$, if $F_{(t)} \in \mathbb{F}'$ occurs, $\ket{\psi} \in \hil_\mathrm{C}$ becomes $F_{(t)}\ket{\psi} \in \hil_{(t)}$ and it is a $(t)$-simultaneous eigenvector of $Z_i^\mathrm{S}$'s; performing the syndrome measurement by measuring $Z_i^\mathrm{S}$'s we obtain the simultaneous eigenvalues $(t)$, which are the \textbf{error syndrome} \cite{Gottesman10}, and we can correct the error by inverting $F_{(t)}$. Hence, any correctable error $E$ for which 
        \begin{equation}
            E\hil_\mathrm{C}\subseteq\overline{\hil}
        \end{equation}
        can be detected and corrected by measuring $Z_i^\mathrm{S}$.
    \end{enumerate}
\end{proof}

In a nutshell, via the isomorphism \eqref{eq:binh} and $U$ of \eqref{eq:U}, we borrow the structure from $\hil_\text{B}$ and apply it to $\hil' = \overline{\hil} \subseteq \hil$: $Z_i^\text{S}$ and $X_i^\text{S}$ are essentially Pauli $Z$ and $X$ on different subsystems or sites, and such a structure can be established for any quantum error-correcting codes. Treating $\overline{\hil}$ and $\hil_\text{B}$ as identical, $\mathbb{C}_i^2$'s of $\hil_\mathrm{B}$ are the stabilizer qubits \cite{Poulin05}. For Pauli stabilizer codes if we consider $\overline{\hil} = \hil$ and $\hil_\text{B}$ as the same space, the unitary map $U$, which becomes an operator now, is in the Clifford group \cite{Calderbank98,Gottesman98,Gottesman10,Poulin05}. 

We will call members of $\mathsf{S}$ \textbf{Paulian stabilizers}. Like Pauli stabilizer codes, we can choose any generating set of $\mathsf{S}$ for syndrome measurements. From now on, rather than \eqref{eq:ht}, $\hil_{(t)}$ will refer to the $(t)$-simultaneous eigenspace of $Z_i^\mathrm{S}$'s, and we will call it a $(t)$-\textbf{syndrome space}. Defining them this way will help us extend the Paulian stabilizers later.

\subsection{A larger stabilizer group} \label{sec:lastb}

The stabilizers depicted in Sec.~\ref{sec:stbgp} are the minimal version of Proposition~\ref{pro:main} with $\hil' = \overline{\hil}$, as the procedures laid out above are applicable to every code; however, when $\log_2 \left|\mathbb{F}\right|$ is not an integer, $\mathbb{F}' \subset \mathbb{F}$, and there are correctable errors that cannot be detected by $Z_i^\mathrm{S}$'s. 

Now suppose the code obeys
\begin{equation}
    2^{\left \lceil \log_2 \left|\mathbb{F}\right| \right \rceil} \dim \hils{C} \leq \dim \hil, \label{eq:m'+1}
\end{equation}
where $\left \lceil \cdot \right \rceil$ is the ceiling function. Let
\begin{equation}
    m = \left \lceil \log_2 \left|\mathbb{F}\right| \right \rceil,
\end{equation}
and we can consider a larger family of orthogonal operators $\mathbb{F}''$ such that $\mathbb{F} \subseteq \mathbb{F}''$, and that in addition to errors in $\mathbb{F}$ obeying \eqref{eq:errEi} we require 
\begin{equation}
    E \hils{C} \cong \hils{C} \text{ and } E\hils{C} \perp F\hils{C} \;\forall E\neq F \in \mathbb{F}''.
\end{equation}
Like before, for each element in $\mathbb{F}''$ we will associate with it a unique binary tuple of length $m$, i.e., a bijection between $\mathbb{F}''$ and $\mathbb{T}$; cf. Sec.~\ref{sec:stbgp}. This way, the Paulian stabilizers associated with $\mathbb{F}''$ covers all errors in $\mathbb{F}$. The operators in $\mathbb{F}''\setminus \mathbb{F}$ may be uncorrectable as they may not satisfy \eqref{eq:errEi}, but they are instrumental in constructing a larger Paulian stabilizer group. 

In short, we would like the Paulian stabilizers to cover all correctable errors, hence choosing $m = \left \lceil \log_2 \left|\mathbb{F}\right| \right \rceil$ if possible; if \eqref{eq:m'+1} cannot be satisfied, we resort to $m = \left \lfloor \log_2 \left|\mathbb{F}\right| \right \rfloor$. In particular, given a code with distance $d$, we have
\begin{equation}
    \left|\mathbb{F}\right| \leq \sum_{j = 0}^{\left\lfloor (d-1)/2 \right\rfloor} \begin{pmatrix}
        n \\
        j
    \end{pmatrix}
    3^j, \label{eq:fcos}
\end{equation}
which serves as an upper bound for $\left|\mathbb{F}\right|$ and is exact when the code is nondegenerate \cite{Gottesman97,Gottesman10}, cf. the quantum Hamming bound \cite{Gottesman10,Preskill}. Expression \eqref{eq:fcos} combined with \eqref{eq:m'+1}  is a sufficient condition to judge whether it is possible to find Paulian stabilizers to correct all the errors for this code, explicitly,
\begin{equation}
    2^{\left \lceil \log_2 \sum_{j = 0}^{\left\lfloor (d-1)/2 \right\rfloor} \begin{pmatrix}
        n \\
        j
    \end{pmatrix}
    3^j \right \rceil} \dim \hils{C} \leq \dim \hil, \label{eq:fcos2}
\end{equation}
which is necessary and sufficient if the code is nondegenerate.
\subsection{Extending the domain}
\label{sec:extdo}
If $2^m \dim \hils{C} < \dim \hil$, we may extend the domains of $Z_i^\mathrm{S}$'s, and they should remain self-adjoint so that they are measurable. The following fact may be utilized \cite{Axler}:
\begin{theorem}\label{thm:seadop}
    For a self-adjoint/unitary operator $A$ with an invariant subspace $\hil'$, $A|_{\hil'}$ and $A|_{{\hil'}^\perp}$ are both self-adjoint/unitary operators.
\end{theorem}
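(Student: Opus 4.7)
The plan is to reduce both statements to the invariance of the orthogonal complement $\hil'^\perp$ under $A$; once both $\hil'$ and $\hil'^\perp$ are $A$-invariant, the respective structural identity ($A = A^*$ or $A^* A = AA^* = I$) descends to each restriction almost by inspection.

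First I would handle the self-adjoint case. For any $v \in \hil'^\perp$ and any $w \in \hil'$, I would compute $\langle Av, w\rangle = \langle v, Aw\rangle$ using $A = A^*$. Since $\hil'$ is $A$-invariant, $Aw \in \hil'$, so $\langle v, Aw\rangle = 0$; hence $Av \in \hil'^\perp$ and $\hil'^\perp$ is $A$-invariant as well. With this in hand, for $u,v \in \hil'$ one has $\langle A|_{\hil'} u, v\rangle = \langle Au, v\rangle = \langle u, Av\rangle = \langle u, A|_{\hil'} v\rangle$, so by uniqueness of the adjoint on $\hil'$, $A|_{\hil'}$ is self-adjoint; the identical argument applied on $\hil'^\perp$ completes the self-adjoint half.

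For the unitary case the strategy is parallel but the key point is establishing $A\hil' = \hil'$, not merely $A\hil' \subseteq \hil'$. Unitarity gives $A^* = A^{-1}$, so showing $\hil'^\perp$ is $A$-invariant is equivalent to showing $\hil'$ is $A^*$-invariant, which in turn is equivalent to $A\hil' = \hil'$. In finite dimensions this is immediate: $A$ restricted to $\hil'$ is an injective endomorphism of a finite-dimensional space, hence surjective. Assuming this equality, the argument of the self-adjoint case runs verbatim with $A$ replaced by $A^*$ to yield the invariance of $\hil'^\perp$, after which $A|_{\hil'}$ is an isometric bijection of $\hil'$ onto itself, and likewise for $A|_{\hil'^\perp}$.

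The main obstacle, and the only real subtlety, is the infinite-dimensional unitary case, where an $A$-invariant closed subspace need not satisfy $A\hil' = \hil'$ (the unilateral shift being the textbook obstruction, though it is not unitary). For the intended application in Sec.~\ref{sec:extdo}, however, the subspace in question is $\overline{\hil} = \bigoplus_{(t)} \hil_{(t)}$, which is by construction a sum of simultaneous eigenspaces of the $Z_i^\mathrm{S}$, hence reduces each $Z_i^\mathrm{S}$ (both the operator and its inverse preserve it), so the hypothesis $A\hil' = \hil'$ is automatic and the theorem applies without further work. If desired, one could phrase the theorem as requiring $\hil'$ to be a \emph{reducing} subspace in the infinite-dimensional setting, which makes the statement unambiguously correct without affecting any use in this paper.
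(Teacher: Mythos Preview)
The paper does not supply its own proof of this theorem; it simply states the fact and cites Axler's textbook, which treats finite-dimensional spaces. So there is no in-paper argument to compare against, and your proof stands on its own.

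Your treatment of the self-adjoint case is the standard one and is correct in full generality. Your treatment of the unitary case is correct in finite dimensions, and you are right to flag the infinite-dimensional subtlety: as stated, the unitary half of the theorem is \emph{false} in infinite dimensions without an additional hypothesis. The bilateral shift on $\ell^2(\mathbb{Z})$ restricted to the span of $\{e_n:n\ge 0\}$ is the standard counterexample---the subspace is invariant, but the restriction is the unilateral shift, an isometry that is not unitary, and the orthogonal complement is not invariant. Your observation that in the paper's actual application the subspace $\overline{\hil}$ is a direct sum of simultaneous eigenspaces of the $Z_i^{\mathrm{S}}$, hence reducing, is exactly the right fix, and your suggested rephrasing (require $\hil'$ to be reducing in the infinite-dimensional setting) would make the statement unambiguously true. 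In this respect your analysis is more careful than the paper's own citation, which points to a finite-dimensional source while the surrounding discussion explicitly allows $\hil$ and $\hil'$ to be infinite-dimensional.
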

\noindent Hence, to extend a self-adjoint operator, we can define another self-adjoint operator on the space orthogonal to the domain and add them.

In particular, let's extend $Z_i^\mathrm{S}$'s as follows: We enlarge all syndrome spaces $\hil_{(t)}$'s while keeping them isomorphic and orthogonal to each other, and let their dimension be $k'$, which would be no smaller than $\dim\hils{C}$. Following how $Z_i^\mathrm{S}$'s were originally constructed on $\overline{\hil}$ in Sec.~\ref{sec:stbgp}, we reach the final form of Proposition~\ref{pro:main}, where $\hil'$ is the direct sum of all syndrome spaces and $Z_i^\mathrm{S}$'s are commutative Paulian operators to the restriction of $\hil'$. Thus, the proof in Sec.~\ref{sec:stbgp} and the discussion in Sec.~\ref{sec:lastb} and this subsection together illustrate the complete picture of Proposition~\ref{pro:main}. 

We remark
\begin{enumerate}
    \item $F_{(t)}\hils{C}$ is a subspace of the corresponding syndrome space $\hil_{(t)}$, in particular
    \begin{equation}
        \hils{C} \subseteq \hil_{(I)}:
    \end{equation}
    The code space is stabilized by $Z_i^\mathrm{S}$'s, but it is not necessarily the $(I)$-syndrome space, but a subspace thereof. 
    \item If $\hil$ is infinite-dimensional, syndrome spaces can be made infinite-dimensional while keeping them isomorphic and mutually orthogonal; an example will be given in Sec.~\ref{sec:boco}.
    \item When $\dim \hil/\dim \hils{C}$ is an integral power of 2, it is always possible to construct a Paulian stabilizer group that uses the space to its full capacity for error correction; specifically, this is true for binary codes. 
\end{enumerate}

\subsection{Measuring Paulian operators}
\label{sec:cnot}
Suppose the state is currently in $\hil'$. To measure a $Z_i^\mathrm{S}$, we can make use of a generalized CNOT: Consider an ancilla qubit $\hils{A} \cong \mathbb{C}^2$ initialized at $\ket{1}_\mathrm{A}$. To the restriction of $\hil' \otimes \hils{A}$, define
\begin{align}
    \mathrm{GCNOT}&|_{\hil' \otimes \hils{A}} := \Pi_{-} \otimes X_\mathrm{A} + \Pi_{+} \otimes I_\mathrm{A} \label{eq:cnotano} \\
    &= I_{\hil'} \otimes \ket{+}_\mathrm{A}\bra{+}_\mathrm{A} + Z_i^\mathrm{S} \otimes \ket{-}_\mathrm{A}\bra{-}_\mathrm{A}, \label{eq:cnot}
\end{align}
where $\Pi_{\pm}$ project onto the $\pm 1$-eigenspaces of $Z_i^\mathrm{S}$ and $\ket{\pm}_\mathrm{A}$ are $\pm 1$-eigenstates of $X_\mathrm{A}$; in Appendix~\ref{sec:gcnotex} it will be explained why \eqref{eq:cnotano} and \eqref{eq:cnot} are equal and how unique the generalized CNOT is. On the entire space $\hil \otimes \hils{A}$ it is thus
\begin{equation}
    \mathrm{GCNOT} = \Pi_{-} \otimes X_\mathrm{A} + \Pi_{+} \otimes I_\mathrm{A} + \Pi_{{\hil'}^\perp}\otimes U_\mathrm{A}, \label{eq:cnot1}
\end{equation}
where $U_\mathrm{A}$ can be any unitary operator; $\Pi_{{\hil'}^\perp}\otimes U_\mathrm{A}$ is there for GCNOT to be unitary, cf. Theorem~\ref{thm:seadop}. If the system is in a $-1$-eigenstate of $Z_i^\mathrm{S}$, the state of the qubit will be mapped to $\ket{-1}_\mathrm{A}$, else it remains at $\ket{1}_\mathrm{A}$, so measuring $Z_\mathrm{A}$ on the ancilla afterwards is equivalent to measuring $Z_i^\mathrm{S}$. This operator derives from the generalized CNOT or controlled-$X$ in Refs.~\cite{Di13,Di15,Pavlidis21,Saha22}, but we do not require that the system and ancilla have the same dimension. From now on for simplicity we will ignore the restriction.

Quite many implementations of Pauli measurements involve these controlled operations implicitly \cite{Dennis02,Fowler12,Terhal15}: For example, to measure $Z_1\cdots Z_j$, with the (regular) CNOT on the $i$-th (data) qubit with the ancilla as the target denoted by $\text{CNOT}_i$, it can be found
\begin{equation}
    \bigotimes_{i=1}^{j} \text{CNOT}_i = \Pi_-\otimes X_\mathrm{A} + \Pi_+\otimes I_\mathrm{A}: \label{eq:cnotcom}
\end{equation}
$\Pi_{\pm}$ are the $\pm 1$-eigenspaces of $Z_1\cdots Z_j$, so the composition is a generalized CNOT.

Using a non-qubit system as the control may not be as intuitive, so let's instead consider the controlled-$Z_i^\mathrm{S}$:
\begin{equation}
    CZ_i^\mathrm{S} = Z_i^{\mathrm{S}} \otimes \ket{-1}_\mathrm{A}\bra{-1}_\mathrm{A} + I_{\hil'} \otimes \ket{1}_\mathrm{A}\bra{1}_\mathrm{A}, \label{eq:CZ}
\end{equation}
which is a controlled-$U$ operation with $U$ being the Paulian operator $Z_i^\mathrm{S}$ \cite{Nielsen}. Compared with \eqref{eq:cnot}, we have
\begin{equation}
    \mathrm{GCNOT} = \left( I_{\mathrm{H}'} \otimes H_\mathrm{A} \right) CZ_i^\mathrm{S} \left( I_{\mathrm{H}'} \otimes H_\mathrm{A} \right), \label{eq:cnotcz}
\end{equation}
where $H_\mathrm{A}$ is the Hadamard gate. In other words, if the ancilla is initialized at $\ket{1}_\mathrm{A}$, we can perform an inverse Hadamard gate to map it to $\ket{+}_\mathrm{A}$, and apply the $CZ_i^\mathrm{S}$ gate. Measuring $X_{\mathrm{A}}$ on the ancilla, if the result is $\pm 1$, then it means the system was in a $\pm 1$-eigenstate of $Z_i^\mathrm{S}$, so the overall effect is identical to measuring $Z_i^\mathrm{S}$. 

Equation \eqref{eq:cnotcz} is in the same vein as exchanging the target and control qubits of a (regular) CNOT by composing with Hadamard gates or change of basis \cite{Nielsen}: Indeed, \eqref{eq:cnot} can be understood as using the $\ket{\pm}_\mathrm{A}$ states of the ancilla qubit to determine whether to perform $Z_i^\mathrm{S}$, so the ancilla qubit is the control in this sense; with \eqref{eq:cnotcz} we simply change the ``control states'' from $\ket{\pm}_\mathrm{A}$ to $\ket{\pm 1}_\mathrm{A}$.

Regarding the ancilla qubit as the control as in \eqref{eq:cnot} or \eqref{eq:CZ} also brings the following benefit: Suppose the system is composed of qubits, and that we have a quantum circuit for $Z_i^\mathrm{S}$ using fundamental gates, comprising single-qubit gates and CNOT or two-qubits controlled gates; let $Z_i^\mathrm{S} = \prod_i U_i$, and we have 
\begin{equation}
    CZ_i^\mathrm{S} = \prod_j CU_j.
\end{equation}
If $U_i$ is a single-qubit gate, then we can again decompose $CU_j$ as single-qubit gates and CNOT's; if $U_j$ is a CNOT or a controlled-$V_j$ for some $V_j$, then $CU_j$ is a Toffoli gate or $C^2(V_j)$ and we can again decompose it as fundamental gates \cite{Nielsen,Barenco95,Liu08}. Thus, if we are able to carry out $Z_i^\mathrm{S}$ as an operation then we are also able to measure it. Equation \eqref{eq:cnotcom} can also be better comprehended with the ancilla as the control.

We discussed extending Paulian-ness to a larger space $\hil'$ in Sec.~\ref{sec:extdo}. From a measurement point of view, how the Paulian stabilizers should be extended depends on whether the corresponding controlled operations are natural, that is whether we can couple the system with the ancilla via the controlled operations relatively easily.

We can tweak \eqref{eq:cnot} to use an ancilla qudit (which may be composed of several qubits) for the setup to be less error-prone \cite{Gottesman97,Gottesman98,Gottesman10}: Omitting $\Pi_{{\hil'}^\perp}\otimes U_\mathrm{A}$ again, let the system be coupled with the qudit initialized at $\ket{1}_\mathrm{A}$ through this generalized CNOT
\begin{equation}
    \sum_{i_+,i_-} \left( \ket{i_-}\bra{i_-} \otimes X_{i_-} + \ket{i_+}\bra{i_+} \otimes X_{i_+} \right), \label{eq:cnot2}
\end{equation}
where 
\begin{enumerate}
    \item $\left\{ \ket{i_\pm} \right\}$ are orthonormal bases of the $\pm 1$-eigenspaces of $Z_i^\mathrm{S}$,
    \item each $X_{i_\pm}$ is an X operator between the states $\ket{1}_\mathrm{A}$ and $\ket{i_\pm}_\mathrm{A}$ of the qudit \cite{Di13}, and
    \item we have an observable $Z'_\mathrm{A}$ on the qudit, with $\ket{i_\pm}_\mathrm{A}$ being $\pm 1$-eigenstates of $Z'_\mathrm{A}$; $\ket{i_-}_\mathrm{A}$'s may not be orthogonal with or even different from each other, likewise for $\ket{i_+}_\mathrm{A}$'s. $\ket{1}_\mathrm{A}$ is a $+1$-eigenstate of $Z'_\mathrm{A}$, and some of the $\ket{i_+}_\mathrm{A}$'s may be $\ket{1}_\mathrm{A}$.
\end{enumerate}
Afterwards, we measure $Z'_\mathrm{A}$, and all this combined is equivalent to measuring the Paulian operator. Expression \eqref{eq:cnot2} is again an adaption of the generalized CNOT from Refs.~\cite{Di13,Di15,Pavlidis21,Saha22}.

\subsection{Subsystem codes} 
Subsystem codes can be considered a generalization of regular error-correcting codes \cite{Kribs05,Poulin05,Kribs06,*Kribs06arXiv,Bacon06,Aly06,Aliferis07,Higgott21}: The code space becomes
\begin{equation}
    \hils{C} = \hils{L}\otimes \hils{G} \subseteq \hil.
\end{equation}
The information is stored in the logical subsystem $\hils{L}$ and the state of the gauge subsystem $\hils{G}$ does not matter. 

Proposition~\ref{pro:main} is also applicable to subsystem codes, explicitly:
\begin{corollary}\label{cor:sub}
    The code space of every subsystem code can be stabilized by operators with properties identical to those listed in Proposition~\ref{pro:main}. Hence, after obtaining the Paulian stabilizer group $\mathsf{S}$, for any nonzero $A\in \lmap{\hils{G}}$, all elements in $(I_\mathrm{L}\otimes A)\mathsf{S}$ leave the encoded state intact.
\end{corollary}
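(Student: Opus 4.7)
The plan is to reduce Corollary~\ref{cor:sub} to Proposition~\ref{pro:main} with essentially no extra work. Inspecting the proof in Sec.~\ref{sec:stbgp} together with the extensions in Secs.~\ref{sec:lastb} and~\ref{sec:extdo}, every ingredient---the orthonormal error set $\mathbb{F}$ (or its enlargement $\mathbb{F}''$), the syndrome map $f_{\mathrm{sym}}$, the syndrome spaces $\hil_{(t)}=F_{(t)}\hils{C}$, the unitary $U$ of \eqref{eq:U}, and the Paulian operators $Z_i^\mathrm{S}$ and $X_i^\mathrm{S}$---depends only on $\hils{C}$ being a subspace of $\hil$ together with the error-correction condition \eqref{eq:errcon}; no step of the argument ever inspects any internal tensor decomposition of $\hils{C}$. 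Thus, taking $\hils{C}=\hils{L}\otimes\hils{G}$ and rerunning the construction yields a Paulian stabilizer group $\mathsf{S}$ enjoying every property listed in Proposition~\ref{pro:main}, which establishes the first assertion of the corollary.

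For the second assertion, fix $S\in\mathsf{S}$ and a nonzero $A\in\lmap{\hils{G}}$. Because $\mathsf{S}$ stabilizes $\hils{C}$, every $\ket{\psi}\in\hils{C}$ satisfies $S\ket{\psi}=\ket{\psi}$, so
\begin{equation*}
    (I_\mathrm{L}\otimes A)\,S\ket{\psi} = (I_\mathrm{L}\otimes A)\ket{\psi}.
\end{equation*}
By the very definition of a subsystem code, $\hils{G}$ is a pure gauge factor and the encoded (logical) information lives solely in $\hils{L}$, so any operator of the form $I_\mathrm{L}\otimes A$ acts trivially on the logical content. Consequently every element of $(I_\mathrm{L}\otimes A)\mathsf{S}$ leaves the encoded state intact, as claimed.

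I expect no real technical obstacle; the only subtlety is conceptual. One must notice that each syndrome space $\hil_{(t)}$ produced by the proof of Proposition~\ref{pro:main} is an isomorphic copy of the \emph{entire} $\hils{L}\otimes\hils{G}$ rather than of $\hils{L}$ alone, and that this is automatic from $\hil_{(t)}=F_{(t)}\hils{C}$. This observation is what guarantees that the gauge tensor structure is faithfully replicated on every syndrome block, so the second sentence of the corollary applies uniformly across all syndromes rather than only on $\hils{C}=\hil_{(I)}$.
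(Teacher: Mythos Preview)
Your reduction to Proposition~\ref{pro:main} glosses over precisely the point that makes subsystem codes different from ordinary ones. The construction in Sec.~\ref{sec:stbgp} does not start from \eqref{eq:errcon} directly; it starts from the existence of an orthonormal error set $\mathbb{F}$ satisfying \eqref{eq:errEi} and \eqref{eq:ehc}, which in Sec.~\ref{sec:conerr} was \emph{derived from} \eqref{eq:errcon}. But for a subsystem code the Knill--Laflamme condition is not \eqref{eq:errcon}: the correctable errors obey only the weaker requirement $\Pis{C}E^\dagger F\Pis{C}=I_\mathrm{L}\otimes g_{EF}$ for some operators $g_{EF}$ on $\hils{G}$ \cite{Kribs05,Kribs06}, and this does \emph{not} immediately yield orthonormal errors in the sense of \eqref{eq:errEi}. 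So the sentence ``every ingredient\ldots depends only on $\hils{C}$ being a subspace of $\hil$ together with the error-correction condition \eqref{eq:errcon}'' is exactly where the argument breaks: the subsystem-correctable errors need not satisfy \eqref{eq:errcon} on $\hils{C}=\hils{L}\otimes\hils{G}$, so you have not established that the set $\mathbb{F}$ you want to feed into Sec.~\ref{sec:stbgp} exists.

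The paper closes this gap by invoking Ref.~\cite{Kribs06PRA}, which shows that even under the subsystem condition one can still extract a set $\mathbb{F}$ of orthogonal correctable errors obeying \eqref{eq:ehc}. Only once that nontrivial fact is in hand does the machinery of Sec.~\ref{sec:stbgp} apply verbatim. Your second paragraph, handling the $(I_\mathrm{L}\otimes A)\mathsf{S}$ assertion, is fine.
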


Let's provide a simple argument as to why this is true: According to Ref.~\cite{Kribs06PRA}, with $\mathbb{E}$ denoting the set of correctable errors, for a subsystem code it is possible to find a set $\mathbb{F}$ of ``orthogonal'' correctable errors that obeys \eqref{eq:ehc}, just like an ordinary error-correcting code. Utilizing $\mathbb{F}$ and the corresponding syndrome spaces, we can obtain Paulian stabilizers by following the steps in Sec.~\ref{sec:stbgp}.

\section{Concatenation of binary codes}
\label{sec:con}

Binary codes with appropriate parameters can be concatenated, and we will illustrate how to acquire the Paulian stabilizer group of the new code, in a similar fashion to Pauli stabilizer codes \cite{Preskill}. A symbol with sub- or superscript in, out, or $+$ ($+$ for ``adding'') indicates it belongs to the inner, outer, or concatenated codes respectively, and the sub- or superscript $w$ means it can be one of those three.

Let them be $[[n_w,k_w]]$-codes, and the inner and outer codes have Paulian stabilizers $Z_{i}^w$'s. To concatenate them,
\begin{equation}
    q := n_\text{out}/k_{\text{in}}
\end{equation}
should be an integer. Let $\hil_w$ be the space each code belongs in, and
\begin{equation}
    \hil_+ = \hil_\text{in}^{\otimes q}. \label{eq:con}
\end{equation}
Define the following operators on $\hil_+$:
\begin{equation}
    Z_{i,j}^+ := \underbrace{I_\text{in}\otimes \cdots \otimes I_\text{in}}_{j-1 \text{ subsystems}}\otimes \underbrace{Z^{\text{in}}_i}_{j\text{-th }\hil_\text{in}} \otimes I_\text{in}\otimes\cdots\otimes I_\text{in}, \label{eq:aij}
\end{equation}
which are independent and commute with each other, and which stabilize
\begin{equation}
    \hil_\text{in,C}^{\otimes q} \subseteq \hil_+,
\end{equation} 
where $\hil_\text{in,C}$ is the code space of the inner code. 

Next, let the logical Pauli operators on the inner code commute with all $Z_{i}^\text{in}$'s, and expand every $Z_{i}^\text{out}$ in terms of Pauli operators. Since $\hil_\text{in,C}$ and $\hil_\text{out}$ are composed of $k_\text{in}$ and $q k_\text{in}$ qubits respectively, regarding every $k_\text{in}$ qubits of $\hil_\text{out}$ as $\hil_\text{in,C}$ we can replace each Pauli operator in every $Z_{i}^\text{out}$ by the corresponding logical Pauli operator on the inner code, and the resultant operator will be denoted by $Z_{i}^+$. $Z_{i}^+$'s together with $Z_{j,k}^+$'s are mutually commutative and independent Paulian operators, composing the stabilizer generators of the concatenated code. 

How to get $Z_{i}^+$'s may be a little hard to comprehend, so here's a quick demonstration: Suppose $n_\mathrm{out} = 4$ and $k_\mathrm{in} =2 $. As $q=4/2=2$,
\begin{equation}
    \hil_+ = \hil_\text{in}\otimes \hil_\text{in}. \label{eq:hcon}
\end{equation}
If $Z_{1}^\text{out} = (X \otimes X \otimes Z \otimes Y + Z \otimes Z \otimes X \otimes I_2)/2$, then
\begin{equation}
    Z_{1}^+ = \left( \overline{X}_1 \overline{X}_2 \otimes \overline{Z}_1 \overline{Y}_2 + \overline{Z}_1 \overline{Z}_2 \otimes \overline{X}_1 I_\text{in} \right)/2, \label{eq:a1con}
\end{equation}
where $\overline{L}_i$ denotes the logical $L$ operator of the $i$-th logical qubit. In \eqref{eq:a1con}, $\overline{X}_1 \overline{X}_2$ and $\overline{Z}_1 \overline{Z}_2$ act on the first $\hil_\text{in}$ of \eqref{eq:hcon}, and $\overline{Z}_1 \overline{Y}_2$ and $\overline{X}_1 I_\text{in}$ on the second one. Notice logical operators acting on different logical qubits commute with one another, so e.g. $\overline{X}_1 \overline{X}_2 = \overline{X}_2 \overline{X}_1$. Also a logical identity operator is simply the identity operator on the system, so in \eqref{eq:a1con} instead of $\overline{X}_1 \overline{I}_2$ we had $\overline{X}_1 I_\text{in} = \overline{X}_1$; we spelled $I_\text{in}$ out for clarity.

The methods to find the parameters and codewords of concatenated codes are well established \cite{Knill96,Gottesman97,Grassl09,Wang13}, on which we will provide a short discussion in Appendix~\ref{sec:cococo}.

\section{Examples}
\label{sec:ex}
Here we will show the Paulian stabilizers of some codes, or how to find them. 

\subsection{Transformed from Pauli stabilizer codes}
Given a Pauli stabilizer code that can correct a set of operators $\mathbb{E}$ with stabilizers $Z_i^\mathrm{S}$, we can perform a unitary transformation $U$ on the system, which can be seen as a change of orthonormal basis. The transformed stabilizer generators
\begin{equation}
    {Z_i^\mathrm{S}}' = U Z_i^\mathrm{S} U^{-1}
\end{equation}
will be Paulian, and they can correct a set of operators $U \mathbb{E} U^{-1}$. If the unitary transformation is local in each (physical) qubit, the distance shall stay the same. 

For illustration, consider an $n$-qubit repetition code with stabilizers \cite{Preskill,Gunther22}
\begin{equation}
    Z_1 Z_2,\dotsc,Z_{n-1}Z_n ,
\end{equation}
which can fix an $X$ error on every qubit. We can construct a generalized repetition code for normal operators with
\begin{lemma}\label{lem:i+u}
    An operator on $\mathbb{C}^2$ is normal if and only if it can be a linear combination of the identity and a Paulian operator that is not proportional to the identity. Note as the Paulian operator is not proportional to the identity, it has two eigenvalues.
\end{lemma}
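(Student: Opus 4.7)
The plan is to prove the two directions separately, with the spectral theorem for normal operators on $\mathbb{C}^2$ doing essentially all the work, together with the characterization of Paulian operators from Sec.~\ref{sec:pau}.

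For the ``if'' direction, suppose $A = \alpha I + \beta P$ with $P$ Paulian and not proportional to $I$. By the definition in Sec.~\ref{sec:pau} together with Lemma~\ref{lem:invuni}, $P$ is either a self-adjoint involution or a skew-self-adjoint counterinvolution, so $P^\dagger = \pm P$. Hence $A^\dagger \in \mathrm{span}\{I, P\}$. Since any two operators in $\mathrm{span}\{I, P\}$ commute, $A A^\dagger = A^\dagger A$, and $A$ is normal.

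For the ``only if'' direction, I would apply the spectral theorem to a normal $A$ on $\mathbb{C}^2$, writing $A = \lambda_1 \Pi_1 + \lambda_2 \Pi_2$ with $\Pi_1, \Pi_2$ mutually orthogonal rank-one projections satisfying $\Pi_1 + \Pi_2 = I$. If $\lambda_1 = \lambda_2$, then $A = \lambda_1 I$ and the claim is trivial: take any Paulian $P$ not proportional to $I$ (say Pauli $Z$) and write $A = \lambda_1 I + 0 \cdot P$. If $\lambda_1 \neq \lambda_2$, set $P := \Pi_1 - \Pi_2$. Using $\Pi_1 \Pi_2 = 0$ and $\Pi_1 + \Pi_2 = I$ one checks $P^2 = I$ and $P^\dagger = P$, so $P$ is a unitary self-adjoint involution by Lemma~\ref{lem:invuni}, whose two $\pm 1$-eigenspaces are one-dimensional and therefore trivially isomorphic; in particular $P$ is Paulian and not proportional to $I$. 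Inverting via $\Pi_{1,2} = (I \pm P)/2$ then yields $A = \tfrac{\lambda_1 + \lambda_2}{2} I + \tfrac{\lambda_1 - \lambda_2}{2} P$.

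I do not anticipate any serious obstacle. The only point to verify carefully is that $P = \Pi_1 - \Pi_2$ satisfies all three Paulian criteria of Sec.~\ref{sec:pau}, but these reduce to elementary properties of orthogonal projections in dimension two. Viewed abstractly, the lemma is essentially a restatement of the spectral theorem on $\mathbb{C}^2$ in the Paulian language: a normal operator there has at most two spectral projectors, whose difference is precisely the ``sign'' Paulian capturing the nontrivial part of $A$.
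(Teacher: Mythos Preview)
Your proof is correct and follows essentially the same approach as the paper's: both directions rely on the spectral theorem, and your Paulian operator $P = \Pi_1 - \Pi_2$ is exactly the paper's $V^{-1} Z V$ written in terms of spectral projections rather than via an explicit diagonalizing unitary. The only cosmetic difference is in the ``if'' direction, where the paper invokes unitarity of $P$ (so $P P^\dagger = P^\dagger P$) while you use $P^\dagger = \pm P$ to place $A^\dagger$ in the commutative span; both arguments are equally short and valid.
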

\noindent The proof can be found at Appendix~\ref{sec:prlem}. From the discussion in Secs.~\ref{sec:inv} and \ref{sec:pau}, the Paulian operator in Lemma~\ref{lem:i+u} can be chosen to be self-adjoint so that it has eigenvalues $1$ and $-1$.

By this lemma, consider any normal operator $E = a I + b V$ on $\mathbb{C}^2$, where $a,b\in \mathbb{C}$ and $V$ is self-adjoint and Paulian. After obtaining $V$, as $X$ and $V$ have the same spectrum, $X$ and $V$ are unitarily similar via some unitary $U$; in other words,  we can perform local unitary transformations $V_i$ such that $V_i = U_i X_i U_i^{-1}$ for all the (physical) qubits, where the subscript $i$ indicates which qubit the operator act on nontrivially. This way, we acquire Paulian stabilizers that can correct $E$ on a single qubit: 
\begin{equation}
    \left( U_i Z_i U_i^{-1} \right) \left( U_{i+1} Z_{i+1} U_{i+1}^{-1} \right), \; i = 1,\dotsc,n-1.
\end{equation}
Hence, we can correct any error that is a normal operator on a single qubit; specifically, the normal operator can be any unitary operator.

\subsection{Bosonic codes} \label{sec:boco}

Let's first consider this bosonic binomial code \cite{Michael16,Hu19,Terhal20}:
\begin{equation}
    \ket{\overline{1}} := \ket{2},\; \ket{\overline{-1}} := \left( \ket{4} + \ket{0} \right)/\sqrt{2},
\end{equation}
which was experimentally demonstrated in Ref.~\cite{Hu19} and can correct orthogonal errors $I$ and the annihilation operator $a$. The space can be partitioned according to parity \cite{Sun14,Michael16,Hu19,Terhal20}---The code space $\hils{C}\subset \hil_{(1)}$ has even parity while $a\hils{C}\subset \hil_{(-1)}$ has odd parity. With $N := a^\dagger a$, the parity operator
\begin{equation}
    Z^\mathrm{S} = e^{i\pi N} \label{eq:parity}
\end{equation}
is actually Paulian (see Sec.~\ref{sec:pau}): 
\begin{enumerate}
    \item It is clear that \eqref{eq:parity} is unitary;
    \item Because its eigenvalues are $\pm 1$, it is an involution;
    \item Finally, because $\left\{\ket{2n-2}\right\}_{n\in \mathbb{N}}$ and $\left\{\ket{2n-1}\right\}_{n\in \mathbb{N}}$ are the bases of its $\pm 1$-eigenspaces, their orthonormal bases have the same cardinality---Namely we can establish a bijection between $\left\{\ket{2n-2}\right\}_{n\in \mathbb{N}}$ and $\left\{\ket{2n-1}\right\}_{n\in \mathbb{N}}$---The two eigenspaces are thus isomorphic.
\end{enumerate}

The parity operator is Paulian on the whole space, i.e., $\hil' = \hil$, which is infinite-dimensional. To measure the syndrome, the controlled phase gate $I\otimes \ket{-1}\bra{-1} + e^{i\pi N} \otimes \ket{1}\bra{1}$ \cite{Vlastakis13,Sun14} is the controlled operations \eqref{eq:cnot} and \eqref{eq:CZ} along with appropriate rotation on the ancilla.

The next one is the bosonic code from Ref.~\cite{Chuang97}:
\begin{equation}
    \ket{\overline{1}} := \ket{22},\; \ket{\overline{-1}} := \left( \ket{40} + \ket{04} \right)/\sqrt{2},
\end{equation}
which protects up to one photon loss, and we have the following orthogonal correctable errors \cite{Chuang97,Terhal20}:
\begin{equation}
    \mathbb{E} = \left\{ I,\,A_{1,1},A_{1,2} \right\},
\end{equation}
where $A_{i,j}$ is the damping operator for which the $j$-th mode losing $i$ photons (hence $I$ corresponds to $A_{0,1}$ and $A_{0,2}$). We can again choose parity operators as the Paulian stabilizers:
\begin{equation}
    Z_1^\mathrm{S} = e^{i\pi N_1}, \; Z_2^\mathrm{S} = e^{i\pi N_2}.
\end{equation}
The correctable errors $I$, $A_{1,1}$ and $A_{1,2}$ will have syndromes $(I) = (1,1)$, $(-1,1)$ and $(1,-1)$ respectively. Note in this case, we not only extend the domains to the entire space but also enlarge the Paulian group as $2 = \lceil \log_2 3 \rceil > \lfloor\log_2 3 \rfloor =1$; see Secs.~\ref{sec:lastb} and \ref{sec:extdo}. Photon loss for the bosonic four-legged cat code can also be detected by parity \cite{Leghtas13,Mirrahimi14,Ofek16,Terhal20}, so we also have a Paulian stabilizer for such a code. 

Finally, in Appendix~\ref{sec:gkp} we will have a brief discussion about Gottesman-Kitaev-Preskill codes \cite{Gottesman01,Campagne20,Terhal20}, where we will show a way to construct commutative Paulian stabilizers for these codes and issues with them.

\subsection{Codeword stabilized code} \label{sec:csc}
For an $n$-qubits system, a codeword stabilized code \cite{Cross09,Chuang09,Chen08PRA} is obtained in the following way:
\begin{enumerate}
    \item We start with a maximally linearly independent and abelian subgroup of a Pauli group (please refer to Appendix~\ref{sec:commpa} for the exact meaning), called the \textbf{word stabilizer}. 
    \item We also need a set of Pauli operators $\left\{ W_i \right\}$, called the \textbf{word operators}.
    \item As the word stabilizer is maximally linearly independent and abelian, each of its simultaneous eigenspace is one-dimensional, i.e., it stabilizes a unique quantum state; let it be $\ket{\psi}$.
    \item The codewords are then $W_i\ket{\psi}$'s; that is, the code space is $\mathrm{span} \left\{ W_i\ket{\psi} \right\}$.
\end{enumerate}
The following result can be utilized to construct Paulian stabilizers of a codeword stabilized code:
\begin{corollary}\label{cor:csc}
    For a codeword stabilized code:
    \begin{enumerate}
        \item If $P_1$ and $P_2$ are correctable Pauli errors, they are either orthonormal or act identically on the code space bar a multiplication factor.
        \item Assuming the code has distance $d$, it is nondegenerate \cite{Gottesman97,Gottesman10} if and only if every operator in the word stabilizer except $I$ has distance no smaller than $d$.
    \end{enumerate}
\end{corollary}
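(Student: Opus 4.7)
The plan is to derive Part~2 from Part~1, so I would settle Part~1 first. Writing $P := P_1^\dagger P_2$, which is a Pauli operator up to an overall phase, I would invoke the Pauli-group dichotomy: $P$ either commutes with every element of the word stabilizer $\mathsf{S}_{\mathrm{w}}$ or anticommutes with at least one element. Since $\mathsf{S}_{\mathrm{w}}$ is maximal linearly independent and abelian (cf.~Appendix~\ref{sec:commpa}), in the commuting case $P$ equals some element of $\mathsf{S}_{\mathrm{w}}$ up to a phase $\alpha$, whence $P\ket{\psi} = \alpha\ket{\psi}$ and $P W_j \ket{\psi} = s_j \alpha W_j \ket{\psi}$, where $s_j = \pm 1$ records the commutation of $P$ with $W_j$. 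Expanding $\Pi_\mathrm{C} P \Pi_\mathrm{C}$ in the orthonormal basis $\{W_j \ket{\psi}\}$ of the code space, the off-diagonal entries vanish by orthogonality of distinct codewords and the diagonal entries are $s_j \alpha$; the correctability constraint $\Pi_\mathrm{C} P \Pi_\mathrm{C} = c \Pi_\mathrm{C}$ then forces all $s_j$ to agree, giving $P \Pi_\mathrm{C} = c \Pi_\mathrm{C}$ and hence $P_2 \Pi_\mathrm{C} = c P_1 \Pi_\mathrm{C}$, the ``identical up to a scalar'' case.

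In the anticommuting case, picking any $S \in \mathsf{S}_{\mathrm{w}}$ with $SP = -PS$, each $P W_k \ket{\psi}$ lies in an $S$-eigenspace opposite to that of $W_i \ket{\psi}$; orthogonality of distinct $S$-eigenspaces yields $\bracket{W_i \psi}{P}{W_k \psi} = 0$ for all $i, k$, so $\Pi_\mathrm{C} P \Pi_\mathrm{C} = 0$. Combined with $\Pi_\mathrm{C} P_j^\dagger P_j \Pi_\mathrm{C} = \Pi_\mathrm{C}$, this is the orthonormal case, finishing Part~1.

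For Part~2, the reverse direction should fall out of Part~1 at once. For any two distinct correctable Pauli errors $E_1, E_2$, the product $E_1^\dagger E_2$ has weight at most $2 \lfloor (d-1)/2 \rfloor \leq d - 1$; under the hypothesis that every non-identity element of $\mathsf{S}_{\mathrm{w}}$ has weight $\geq d$, this product cannot lie in $\mathsf{S}_{\mathrm{w}}$ up to phase unless it is a phase times $I$, which would force $E_1$ and $E_2$ to coincide up to phase. Hence $E_1^\dagger E_2$ must anticommute with some stabilizer element, and Part~1's second branch yields $\Pi_\mathrm{C} E_1^\dagger E_2 \Pi_\mathrm{C} = 0$, giving nondegeneracy. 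For the forward direction, I would argue by contraposition: given $S \in \mathsf{S}_{\mathrm{w}} \setminus \{I\}$ with weight $w < d$, I would exhibit distinct correctable Paulis whose product is $S$ up to phase. When $w \leq \lfloor (d-1)/2 \rfloor$, the pair $(I, S)$ works immediately because $\bracket{\psi}{S}{\psi} = 1 \neq 0$ triggers Part~1's first branch. Otherwise, I would partition $\mathrm{supp}(S)$ into two balanced disjoint subsets and let $E_1, E_2$ be the restrictions of $S$ to these subsets, so that each has weight at most $\lceil w/2 \rceil$ and $E_1 E_2 = S$.

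The main obstacle is this last decomposition when $d$ is even and $w = d - 1$ is odd: a balanced disjoint split of a support of odd size $d-1$ leaves a weight-$(d/2)$ piece, exceeding the correctable weight $\lfloor (d-1)/2 \rfloor = d/2 - 1$. I expect this edge case to be handled either by a convention on $d$ implicit in the CWS framework of the paper (with only odd $d$ effectively in play, making $2\lfloor(d-1)/2\rfloor + 1 = d$ a tight threshold) or by a more refined choice of $E_1, E_2$ with overlapping support whose Pauli factors cancel on the overlap so that the maximum single-factor weight drops to $\lfloor (d-1)/2 \rfloor$.
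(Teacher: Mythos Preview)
Your overall strategy matches the paper's: Part~1 is proved via the dichotomy ``$P=P_1^\dagger P_2$ lies in the phase-extended word stabilizer vs.\ it does not,'' and Part~2 is derived from Part~1 in both directions exactly as the paper does (the paper packages the first branch as Corollary~\ref{cor:codep}(b) via Lemmas~\ref{lem:p1p2or} and~\ref{lem:a}, and the second as Corollary~\ref{cor:nonde}(iv), but the content is the same).

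There is one genuine slip in your anticommuting branch of Part~1. You assert that for every pair $(i,k)$ the vector $PW_k\ket{\psi}$ lies in the $S$-eigenspace opposite to that of $W_i\ket{\psi}$. That is false whenever $W_i$ and $W_k$ have \emph{different} commutation signs with $S$: then $W_i\ket{\psi}$ and $W_k\ket{\psi}$ sit in opposite $S$-eigenspaces, and $PW_k\ket{\psi}$ lands back in the \emph{same} $S$-eigenspace as $W_i\ket{\psi}$, so your orthogonality argument does not apply. The fix is immediate once you use what you already have: correctability gives $\Pi_\mathrm{C}P\Pi_\mathrm{C}=a\Pi_\mathrm{C}$, so all off-diagonal entries $\bracket{W_i\psi}{P}{W_k\psi}$ with $i\neq k$ vanish automatically, and it remains to kill the diagonal ones, for which your $S$-eigenspace argument with $i=k$ is valid. (The paper sidesteps this by using the full $\mathsf{S}_\mathrm{w}$-eigenspace decomposition rather than a single $S$; see Lemma~\ref{lem:p1p2or}.)

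On the even-$d$ edge case you flag in Part~2: your second proposed resolution cannot work, since overlapping supports can only raise the individual weights of $E_1,E_2$ relative to a disjoint split (any overlap qubit still counts toward $\mathrm{wt}(E_j)$ whether or not it cancels in $E_1^\dagger E_2$); the bound $\mathrm{wt}(E_1^\dagger E_2)\le \mathrm{wt}(E_1)+\mathrm{wt}(E_2)$ already shows no pair of weight-$\le d/2-1$ Paulis can produce a weight-$(d-1)$ product. The paper's proof of the ``only if'' direction makes exactly the same splitting claim you do without addressing this case, so you are not missing anything the paper supplies.
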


In Appendix~\ref{sec:costb}, specifically Sec.~\ref{sec:conpastb}, we provide a procedure to construct Paulian stabilizers that is applicable to every codeword stabilized code; here is the essence: We first determine whether there exist Paulian stabilizers that can correct all the relevant errors by \eqref{eq:m'+1} or \eqref{eq:fcos2}; then according to Corollary~\ref{cor:csc} we can choose linearly independent Pauli errors as orthonormal correctable errors, and to be definite we can check whether the code is nondegenerate again by Corollary~\ref{cor:csc}. We then use the simultaneous eigenspaces of the word stabilizer to build syndrome spaces, which lead to Paulian stabilizers.

An example would be the $((9,12,3))$-code from Refs.~\cite{Yu08,Cross09}: Each element of the word stabilizer except $I$ has at least weight $3$, so the code is nondegenerate according to Corollary~\ref{cor:csc}, and we can choose all linearly independent weight-1 Pauli errors, along with $I$ as the orthonormal errors $\mathbb{F}$ of \eqref{eq:errEi}. By \eqref{eq:fcos}, 
$$\left| \mathbb{F} \right| = 3\times 9 + 1 = 28,$$
so \eqref{eq:m'+1} or [\eqref{eq:fcos2}] is satisfied, and we can construct a Paulian stabilizer group to correct all the relevant errors, generated by 
$\log_2 \left| \mathbb{F} \right| = 5$
Paulian operators. Furthermore, we can extend the stabilizers so that each is Paulian on the whole space. 

In fact, ``Paulian stabilizers'' for this code have already been found in Ref.~\cite{Yu08}, among which some are Pauli.\footnote{That this code can be stabilized by nontrivial Pauli operators can also be verified with Corollary~\ref{cor:stbco} in the Appendix.} Note, however, that the ``Paulian stabilizers'' from Ref.~\cite{Yu08} possess a different structure from those presented in this work: The Paulian stabilizers of Proposition~\ref{pro:main} are elements of a faithful representation of the Pauli group, so they are reminiscent of Pauli stabilizers of a Pauli stabilizer code. On the other hand, those from Ref.~\cite{Yu08} are not, so different sequences of measurements are needed for different errors, and more than five observables are needed to detect all the errors, whereas with the Paulian stabilizers of Proposition~\ref{pro:main} we require only five commutative observables for measurement. Even though Paulian stabilizers like those in Ref.~\cite{Yu08} are interesting and useful \emph{per se}, we will not delve into them. More details about this code can be found in Appendix~\ref{sec:9123}. 

Now let's consider the $((5,6,2))$-code from Refs.~\cite{Rains97,Cross09}. Due to its distance, this code is an error-detecting code. It can be found that, with $\mathbb{P}_1$ denoting the set of all weight-1 Pauli errors, we have
\begin{equation}
    \hils{C}^\perp = \sum_{P\in \mathbb{P}_1} P\hils{C},
\end{equation}
which implies for the stabilizers to detect all errors in $\mathbb{P}_1$, we must have
\begin{align}
    \hils{C} &= \hil_{(I)} \label{eq:i(I)} \\
    \hils{C}^\perp &= \bigoplus_{(t)\in \mathbb{T}\setminus \{(I)\}} \hil_{(t)}, \label{eq:ct}
\end{align}
where $\mathbb{T}$ is the set of all syndromes; see Sec.~\ref{sec:stbgp}. For the stabilizers to be Paulian and commutative, each syndrome space must have the same dimension, so \eqref{eq:i(I)} and \eqref{eq:ct} together imply
\begin{equation}
    \dim \hil = 2^m \dim\hils{C}
\end{equation}
for some positive integer $m$, which is impossible for this system as $\dim\hils{C} = 6$ and $\dim \hil = 2^5 = 32$. 

Hence, we cannot find commutative Paulian stabilizers for the $((5,6,2))$-code to detect all weight-1 errors---This is one of the cases where Paulian stabilizer groups may not be suitable for error correction or detection, cf. the discussion in Sec.~\ref{sec:lastb}. Regardless, because this code has low dimensions, it is easier to demonstrate how to find its Paulian stabilizers as every step can be made explicit without being too clumsy; in addition, we can show how to adapt our approach to error-detecting codes. Details can be found in Appendix~\ref{sec:562}.

\section{Discussion and conclusion} \label{sec:dis}

We showed that every quantum error-correcting code, including the subsystem code, can be stabilized by operators which are Paulian and commutative to the restriction of a subspace $\hil'$, which may or may not be the entire system $\hil$ (Proposition~\ref{pro:main} and Corollary~\ref{cor:sub}), with examples given in Sec.~\ref{sec:ex}. In addition, we showed that the error syndrome can be obtained by measuring the Paulian stabilizers $Z_i^\mathrm{S}$'s, which can be achieved by performing controlled operations $CZ_i^\mathrm{S}$'s, so the quantum circuits for conducting $Z_i^\mathrm{S}$'s can be transferred to those for measuring them (Sec.~\ref{sec:cnot}).

In terms of tensor product structure \cite{Zanardi97,Zanardi01PRL,Zanardi04PRL}, $\hil'$ is composed of $m$ stabilizer qubits \cite{Poulin05} generated by the Paulian operators, and a subsystem isomorphic to the syndrome spaces, whose dimension $k'$ is no less than $\dim \hils{C}$, so we can embed $\hils{C}$ into them. This generalizes the observation made in Refs.~\cite{Zanardi04PRL,Poulin05}, that for a system composed of qubits, commutative Paulian operators can partition the system into virtual qubits; if the Paulian operators are Pauli it becomes a Pauli stabilizer code. 

Paulian stabilizers may be employed to realize codes that are not Pauli stabilizer codes, showcased in Sec.~\ref{sec:boco}. As discussed in Sec.~\ref{sec:lastb}, \eqref{eq:m'+1} is the condition for Paulian stabilizers to cover all correctable errors. Hence, binary codes may in particular benefit from the existence of Paulian stabilizers, because \eqref{eq:m'+1} is always satisfied; the same is true in the case where the code space is finite-dimensional while the entire system is infinite-dimensional, such as the bosonic codes in Sec.~\ref{sec:boco}. Furthermore, as we have demonstrated how to obtain the Paulian stabilizer group of a binary concatenated code in Sec.~\ref{sec:con}, it may help us obtain a code with higher distance along with the means to realize it.

There are questions still left unanswered that may be worthy of further investigation: There is no unique Paulian stabilizer group for a code, and the ideal Paulian stabilizers are those that are easy to measure or conduct---With a universal set of quantum gates, we can in theory approach them \cite{Nielsen,Preskill}, but it may need many gates to implement. Hence, for Paulian stabilizers to be useful, how to find the ideal ones is a key issue, which depends on the physical system in question. Also, we showed the existence of Paulian stabilizers for error-correcting codes, but knowing this, can it help us find nontrivial new codes by using Paulian operators that are not Pauli as the stabilizers or correctable errors?

\section*{Acknowledgments}
J.-Y.~K. would like to thank Prof. Chung-Hsien Chou for introducing him to this area of research, and we thank Dr.~Tanmay Singal for very fruitful discussion about various aspects of quantum error correction.
H.-S.G. acknowledges support from the National Science and
Technology Council, Taiwan under Grants 
No.~NSTC 112-2119-M-002-014, 
No.~NSTC 111-2119-M-002-006-MY3, 
No.~NSTC 111-2119-M-002-007,   
No.~NSTC 110-2627-M-002-002, 
No.~NSTC 111-2627-M-002-001,
and No.~NSTC 111-2627-M-002-006,
from the US Air Force Office of Scientific Research under
Award Number FA9550-23-S-0001,
and from the National Taiwan University under
Grant No.~NTU-CC-112L893404. 
H.-S.G. is also grateful for the support from the
“Center for Advanced Computing and Imaging in Biomedicine (NTU-112L900702)” through The Featured Areas Research Center Program within the framework of the Higher Education Sprout Project by the Ministry of Education (MOE), Taiwan, and
the support from the Physics Division, National
Center for Theoretical Sciences, Taiwan.

\begin{appendix}

\section{Generalized CNOT} \label{sec:gcnotex}

Consider any self-adjoint Paulian operator $P$ on $\hil$, and an ancilla qubit $\hils{A} \cong \mathbb{C}^2$. We define the corresponding generalized CNOT as the following unitary operator on $\hil \otimes \hils{A}$:
\begin{equation}
    \mathrm{GCNOT} := \Pi_{-} \otimes X_\mathrm{A} + \Pi_{+} \otimes I_\mathrm{A}, \label{eq:gcnot}
\end{equation}
where $\Pi_{\pm}$ are the orthogonal projections onto the $\pm 1$-eigenspaces of $P$ and A refers to the ancilla qubit. Here let's have a quick discussion about why GCNOT is equal to
\begin{equation}
    I_{\hil} \otimes \Pi_{+}^\mathrm{A} + P \otimes \Pi_{-}^\mathrm{A} ,\label{eq:gcnot2}
\end{equation}
where 
\begin{equation}
    \Pi_{\pm}^\mathrm{A} := \ket{\pm}_\mathrm{A}\bra{\pm}_\mathrm{A}
\end{equation}
are orthogonal projections onto the $\pm 1$-eigenspaces of $X_\mathrm{A}$.

If $P$ has only one eigenvalue, then $P$ is either $I_{\hil}$ or $-I_{\hil}$. For the former, it is fairly easy to see both \eqref{eq:gcnot} and \eqref{eq:gcnot2} are $I_\mathrm{\hil} \otimes I_\mathrm{A}$, so they are identical. For the latter, \eqref{eq:gcnot} becomes $I_\mathrm{\hil} \otimes X_\mathrm{A}$, whereas \eqref{eq:gcnot2} becomes 
\begin{align}
    I_{\hil} \otimes \Pi_{+}^\mathrm{A} - I_{\hil} \otimes \Pi_{-}^\mathrm{A} &= I_{\hil} \otimes \left( \Pi_{+}^\mathrm{A} - \Pi_{-}^\mathrm{A} \right) \nonumber \\
    &= I_{\hil} \otimes X_\mathrm{A},
\end{align}
so they are again the same.

If $P$ has two eigenvalues, namely $\pm 1$, then we have
\begin{align}
    \Pi_{-} \otimes X_\mathrm{A} + \Pi_{+} \otimes I_\mathrm{A} = & \Pi_- \otimes \left(\Pi_{+}^\mathrm{A} - \Pi_{-}^\mathrm{A} \right) \nonumber \\
    & + \Pi_+ \otimes \left(\Pi_{+}^\mathrm{A} + \Pi_{-}^\mathrm{A} \right) \nonumber \\
    = & \left( \Pi_+ + \Pi_- \right) \otimes \Pi_{+}^\mathrm{A} \nonumber \\
    &+ \left( \Pi_+ - \Pi_- \right)\otimes \Pi_{-}^\mathrm{A} \nonumber \\
    =& I_\hil \otimes \Pi_{+}^\mathrm{A} + P \otimes \Pi_{-}^\mathrm{A},
\end{align}
which is \eqref{eq:gcnot2}. 

In fact, as the relations above do not depend on the dimension of the eigenspaces of $X_\mathrm{A}$, we can replace the ancilla qubit with a system with even dimension and $X_\mathrm{A}$ with another Paulian operator, and identical results will hold.

Second, let's try to answer this question: Given a Paulian operator $P$, is the generalized CNOT of \eqref{eq:gcnot} the only [besides a global phase factor or a phase difference between the two terms on the right hand side of \eqref{eq:gcnot}] unitary operator that can achieve what we want of it? Specifically, let $\overline{\mathrm{GCNOT}}$ denote the ``most general'' GCNOT for $P$; the property we desire is
\begin{align}
    \left( I\otimes \Pi_{+}^\mathrm{A} \right)\overline{\mathrm{GCNOT}}\left( \ket{\psi}\otimes \ket{1} \right) &= e^{i\theta_+}\left( \Pi_{+}\ket{\psi} \right)\otimes \ket{1}, \nonumber \\
    \left( I\otimes \Pi_{-}^\mathrm{A} \right)\overline{\mathrm{GCNOT}}\left( \ket{\psi}\otimes \ket{1} \right) &= e^{i\theta_-}\left( \Pi_{-}\ket{\psi} \right)\otimes \ket{-1},\label{eq:gcnotge}
\end{align}
for every $\ket{\psi} \in \hil$, where $\theta_\pm \in [0,2\pi)$. Thus, 
\begin{align}
    \overline{\mathrm{GCNOT}}\left( \ket{\psi} \otimes \ket{1} \right) =& \left( I\otimes \Pi_{+}^\mathrm{A} + I\otimes \Pi_{-}^\mathrm{A} \right) \overline{\mathrm{GCNOT}} \nonumber \\
    & \left( \ket{\psi} \otimes \ket{1} \right) \nonumber \\
    = & e^{i\theta_+}\left( \Pi_{+}\ket{\psi} \right)\otimes \ket{1} \nonumber \\
    &+ e^{i\theta_-}\left( \Pi_{-}\ket{\psi} \right)\otimes \ket{-1}.
\end{align}
This defines the action of $\overline{\mathrm{GCNOT}}$ on $\hil\otimes \ket{1}$,\footnote{$\hil\otimes \ket{1}$ and $\hil\otimes \text{span}(\ket{1})$ are identical, so the former is a space as well.} so we can complete it by defining it on the orthogonal complement, namely $\hil\otimes \ket{-1}$. Note
\begin{equation}
    \overline{\mathrm{GCNOT}}\left( \hil \otimes \ket{1} \right) = \hil_+ \otimes \ket{1} \oplus \hil_- \otimes \ket{-1},
\end{equation}
where $\hil_\pm$ are the $\pm 1$-eigenspaces of the Paulian operator $P$. As $\overline{\mathrm{GCNOT}}$ is unitary, 
\begin{equation}
    \overline{\mathrm{GCNOT}}\left( \hil \otimes \ket{1} \right) \perp \overline{\mathrm{GCNOT}}\left( \hil \otimes \ket{-1} \right),
\end{equation}
which suggests
\begin{equation}
    \overline{\mathrm{GCNOT}} \left( \hil \otimes \ket{-1} \right) = \hil_- \otimes \ket{1} \oplus \hil_+ \otimes \ket{-1}.
\end{equation}
$\overline{\mathrm{GCNOT}}|_{\hil \otimes \ket{-1}}$ can therefore be any unitary map from $\hil \otimes \ket{-1}$ to $\hil_- \otimes \ket{1} \oplus \hil_+ \otimes \ket{-1}$; a simple example is
\begin{align}
    \overline{\mathrm{GCNOT}} \left( \ket{\psi} \otimes \ket{-1} \right) := &\left( U_- \Pi_- \ket{\psi} \right) \otimes \ket{1} \nonumber \\
    &+ \left( U_+ \Pi_+ \ket{\psi} \right) \otimes \ket{-1},\label{eq:gcnotex}
\end{align}
where $U_\pm$ are any unitary maps from $\hil_\pm$ to themselves (i.e., operators); we may also choose
\begin{align}
    \overline{\mathrm{GCNOT}} \left( \ket{\psi} \otimes \ket{-1} \right) := &\left( U_+' \Pi_+ \ket{\psi} \right) \otimes \ket{1} \nonumber \\
    &+  \left( U_-' \Pi_- \ket{\psi} \right) \otimes \ket{-1},
\end{align}
where $U_\pm'$ are any unitary maps from $\hil_\pm$ to $\hil_\mp$. \eqref{eq:gcnot} is a special case of \eqref{eq:gcnotex} with $U_\pm$ being identities and $\theta_\pm$ of \eqref{eq:gcnotge} both being $0$.

\section{Code parameters and codewords of a concatenated binary code} \label{sec:cococo}

Because $\hil_+$ is $\hil_\mathrm{in}^{\otimes q}$ and there are $q(n_\text{in}-k_\text{in})$ of $Z_{i,j}^+$'s and $n_\text{out}-k_\text{out}$ of $Z_{i}^+$'s,
\begin{align}
    n_+ &= n_\text{in} q,\\
    k_+ &= n_\text{in} q - q(n_\text{in}-k_\text{in}) - \left(n_\text{out}-k_\text{out}\right) = k_\text{out}.
\end{align}
When $k_\text{out} = k_\text{in} = 1$, $n_+ = n_\text{in} n_\text{out}$ and $k_+ = 1$, as expected \cite{Preskill,Gottesman10}. That $k_+ = k_\mathrm{out}$ is also obvious from the way the codewords are obtained, as will be shown below. 

To find the distance, in the simple case of $k_\text{in}=1$, to change the logical state of the concatenated code an operator has to act nontrivially on at least $d_\text{out}$ inner logical qubits; for the logical state of the inner code to change, the operator needs to act nontrivially on at least $d_\text{in}$ physical qubits of $\hil_\text{in}$, so the distance is at least $d_\text{out} d_\text{in}$ \cite{Gottesman97,Gottesman10,Grassl09}.

More generally:
\begin{enumerate}
    \item To change the logical state of the concatenated code at least $d_\text{out}$ inner logical qubits should be acted upon nontrivially.
    \item Each $\hil_\text{in}$ subsystem contains $k_\text{in}$ inner logical qubits.
    \item To change the logical state of an $\hil_\text{in}$ system, i.e. the state of its logical qubits, at least $d_\text{in}$ physical qubits need to be acted on nontrivially.
\end{enumerate}
Therefore, the distance of the concatenated code satisfies \cite{Gottesman97,Gottesman10,Grassl09}
\begin{equation}
    d_+ \geq \lceil d_\text{out}/k_\text{in}\rceil d_\text{in}.
\end{equation} 

We can obtain the codewords given those of the outer and inner codes. For example, say that $(\ket{1,-1,-1,1} + \ket{-1,-1,-1,-1})/\sqrt{2}$ is a codeword of the outer code, and the inner code has two logical qubits. This codeword will become
$$ \left( \overline{\ket{1,-1}} \otimes  \overline{\ket{-1,1}} + \overline{\ket{-1,-1}} \otimes \overline{\ket{-1,-1}} \right)/\sqrt{2},$$
where $\overline{\ket{i,j}}\in\hil_\text{in,C}$ are logical states of the inner code, $i$ for the first logical qubit and $j$ for the second one. 

So far we have taken the outer code as composed of $n_\text{out}$ qubits, but we can treat it as composed of $q$ subsystems each with dimension $\dim \hil_\text{in,C}$ instead. We can then follow the standard procedure for finding the codewords and parameters of a concatenated code as in, e.g., Refs.~\cite{Knill96,Gottesman97,Grassl09,Wang13}: Replacing each $\dim \hil_\text{in,C}$-dimensional subsystem of the outer code by $\hil_\mathrm{in}$, the concatenated code hence has $\dim \hil_+ = (2^{n_\text{in}})^{q}$ and $\dim \hil_{+,\mathrm{C}} = 2^{k_\text{out}}$. The distance is no smaller than the product of that of the outer code and that of the inner one; note as compared to when it is regarded as composed of qubits, the distance of the outer code now should be reduced by a factor of $k_\mathrm{in}$ because each of its $q$ subsystems is composed of $k_\mathrm{in}$ qubits.

\section{Phaseless group}
\label{sec:phaseless}

For a group of operators $\mathsf{G}$ containing $\{I,-I,iI,-iI\}$ as a subgroup, we define 
\begin{equation}
    \hat{\mathsf{G}} := \mathsf{G}/\{I,-I,iI,-iI\}\label{eq:quo},
\end{equation}
As $\{I,-I,iI,-iI\}$ is clearly normal, $\hat{\mathsf{G}}$ is a quotient group \cite{Lang}. Specifically, for the Pauli group $\hat{\mathsf{P}}^n$ can be regarded as a ``phaseless'' version of the Pauli group: Abusing the language, we will regard the coset representatives as elements of $\hat{\mathsf{P}}^n$; consequently, we will also call $\hat{\mathsf{P}}^n$ a Pauli group and its elements \emph{Pauli operators}. By removing the phases, $\hat{\mathsf{P}}^n$ becomes linearly independent; in particular, $\hat{\mathsf{P}}^n$ is a basis of $\lmap{\mathbb{C}^{2^n}}$ \cite{Gottesman97,Knill96_1,Knill96_2}. The phaseless Pauli group $\hat{\mathsf{P}}^1$ is isomorphic to the Klein four-group and $\hat{\mathsf{P}}^n$ is isomorphic to the direct product of $n$-copies of the Klein four-group \cite{Levick16}. 

For the same reason as why we introduced the phaseless Pauli group, given a subgroup of a Pauli/Paulian group, it is convenient to consider the phaseless version of it. As $\left\{ I,-I,iI,-iI \right\}$ may not always be in such a subgroup, we define: 
\begin{enumerate}
    \item If $\mathsf{G}$ has $\left\{ I,-I,iI,-iI \right\}$ as a subgroup, then its phaseless counterpart is defined like before, i.e., \eqref{eq:quo}. 
    \item If $iI \notin \mathsf{G}$ but $-I\in \mathsf{G}$, then 
    \begin{equation}
        \hat{\mathsf{G}} := \mathsf{G}/\{I,-I\}.
    \end{equation}
    \item  Finally, if $-I\notin \mathsf{G}$,
    \begin{equation}
        \hat{\mathsf{G}} := \mathsf{G}. \label{eq:hgg}
    \end{equation}
\end{enumerate} 
Like before, we take the coset representatives as elements of such a quotient group, which is the reason why we ``defined'' $\hat{\mathsf{G}}$ as $\mathsf{G}$ in \eqref{eq:hgg}. Correspondingly there is arbitrariness in the choice of its elements: Indeed, saying $P \in \hat{\mathsf{G}}$ is no different from saying $P\in \mathsf{G}$. It is only when we compare sets does it make a difference: For example, if we say a set $\mathbb{S}$ is equal to $\hat{\mathsf{G}}$, then there should exist no two elements in $\mathbb{S}$ that differ by a nontrivial multiplication factor, cf. \eqref{eq:hgg} and later \eqref{eq:hss}.

\section{Commutativity of Pauli subgroups}
\label{sec:commpa}

Here is a property concerning subgroups of Pauli groups:
\begin{lemma} \label{lem:pasub}
    For any subgroup $\mathsf{G}$ of a Pauli group, $-I \notin \mathsf{G}$ if and only if $\mathsf{G}$ is linearly independent, and only if $\mathsf{G}$ is composed wholly of involutions and is abelian.
\end{lemma}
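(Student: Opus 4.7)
\medskip

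\noindent\textbf{Proof proposal.} The plan is to split the statement into three implications and dispatch them using two standard facts about Pauli operators: (a) the square of any Pauli operator is $\pm I$, and (b) any two Pauli operators either commute or anticommute. I will also use the phaseless Pauli group $\hat{\mathsf{P}}^n$ from Appendix~\ref{sec:phaseless}, which is a linearly independent basis of $\lmap{\mathbb{C}^{2^n}}$.

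First I would show that $-I \notin \mathsf{G}$ forces every element to be an involution and forces $\mathsf{G}$ to be abelian. For any $P \in \mathsf{G}$, fact (a) gives $P^2 = \pm I$; if $P^2 = -I$, then $-I \in \mathsf{G}$, contradicting the hypothesis, so $P^2 = I$. For commutativity, if $P,Q \in \mathsf{G}$ anticommute then by fact (b) the commutator $P Q P^{-1} Q^{-1}$ equals $-I$, and since $\mathsf{G}$ is a group this commutator lies in $\mathsf{G}$, again a contradiction. Hence any two elements of $\mathsf{G}$ commute.

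Next, the easy direction of the ``iff'': if $\mathsf{G}$ is linearly independent then $-I \notin \mathsf{G}$, because $I$ always lies in the subgroup $\mathsf{G}$ and $I$ together with $-I$ would be linearly dependent. For the converse, suppose $-I \notin \mathsf{G}$ and consider the natural map $\mathsf{G} \hookrightarrow \mathsf{P}^n \twoheadrightarrow \hat{\mathsf{P}}^n$. If two distinct $P_1,P_2 \in \mathsf{G}$ mapped to the same coset, then $P_1 P_2^{-1} = c I \in \mathsf{G}$ for some $c \in \{-1, i, -i\}$; the case $c=-1$ contradicts the hypothesis immediately, and the cases $c = \pm i$ yield $(cI)^2 = -I \in \mathsf{G}$, again a contradiction. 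Thus distinct elements of $\mathsf{G}$ represent distinct cosets of $\hat{\mathsf{P}}^n$, and since $\hat{\mathsf{P}}^n$ is linearly independent in $\lmap{\mathbb{C}^{2^n}}$, so is $\mathsf{G}$.

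The main obstacle, such as it is, lies in the last step: one must remember that linear independence of $\mathsf{G}$ as operators is not literally inherited from $\hat{\mathsf{P}}^n$ unless one first rules out two group elements being equal up to the phase factor $\pm 1$ or $\pm i$, and this is exactly what the squaring trick $(\pm i I)^2 = -I$ is there to handle. Everything else is a one-line use of the two structural facts (a) and (b) about Pauli operators.
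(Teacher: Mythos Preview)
Your proposal is correct and follows essentially the same route as the paper's proof: the involution and abelian claims are handled by the squaring and commutator arguments, and the linear-independence equivalence is established by first noting that $\{I,-I\}$ is dependent and then ruling out phase-related elements (via $(\pm iI)^2=-I$) so that $\mathsf{G}$ injects into the linearly independent phaseless Pauli group $\hat{\mathsf{P}}^n$. The only cosmetic difference is that you phrase the last step as injectivity of the quotient map $\mathsf{G}\to\hat{\mathsf{P}}^n$, whereas the paper argues directly that no two elements of $\mathsf{G}$ differ by a scalar; these are the same observation.
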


\begin{proof}
    Suppose $-I\in \mathsf{G}$. Being a subgroup of the Pauli group, every element of $\mathsf{G}$ is either an involution or a counterinvolution. If $A\in \mathsf{G}$ were a counterinvolution, then $A^2 = -I$ would also be in $\mathsf{G}$, a contradiction, so every element is involutory. Next, a pair of Pauli operators either commute or anticommute. If $A,B\in \mathsf{G}$ anticommuted, $ABA^{-1} = - B = (- I) B\in \mathsf{G}$, so $ABA^{-1} B^{-1} = -I\in \mathsf{G}$, a contradiction. Hence every element in $\mathsf{G}$ commutes with one another, meaning $\mathsf{G}$ is abelian. Note that because $-I$ is an involution and commutes with all operators, $\mathsf{G}$ being abelian and comprising purely involutions does not imply $-I\in \mathsf{G}$

    Because $\left\{I,-I\right\}$ is linearly dependent, clearly a linearly independent subgroup should not contain $-I$. The other way around, assume $-I \notin \mathsf{G}$. Since the phaseless Paulian group (or the collection of its coset representatives) is a basis \cite{Gottesman97,Knill96_1,Knill96_2}, any subset of it is also linearly independent---In other words, any subset of a Pauli group, if no element differs from another by a multiplication factor, is linearly independent. As $(i I)^2 = -I$, $-I\notin \mathsf{G}$ implies that $i I$ and $-i I$ are not in $\mathsf{G}$ either; if $A\in \mathsf{G}$ and $aA \in \mathsf{G}$ for some nontrivial multiplication factor $a$ (namely $a$ is $-1$ or $i$ or $-i$), then $a I\in \mathsf{G}$ because $A^{-1} (aA) = a I$, a contradiction. Hence $\mathsf{G}$ is linearly independent.
\end{proof}

As $-I$ and $\pm i I$ commute with all operators, an abelian subgroup of $\mathsf{P}^n$ can contain the subgroup $\left\{  I, -I \right\}$ or $\left\{ I, -I , i I, -i I \right\}$, in which case the abelian subgroup is linearly dependent. To get rid of these extra factors, we can take the phaseless group of it, as we did in \eqref{eq:quo}. When we refer to a subgroup $\mathsf{S}$ of a Pauli, or Paulian, group as \textbf{maximally linearly independent and abelian}, it means that we cannot add any more Pauli or Paulian operators to it while keeping the subgroup both linearly independent and abelian; to put it another way, $\mathsf{S}$ is abelian and
\begin{equation}
    \hat{\mathsf{S}} = \mathsf{S}. \label{eq:hss}
\end{equation}
Some properties of such a group are revealed by Lemma~\ref{lem:pasub}; in particular this lemma shows that for a Pauli or Paulian\footnote{In this work a Paulian group is unitarily similar to a ``Pauli'' group (see Sec.~\ref{sec:stbgp}), so Lemma~\ref{lem:pasub} also holds for Paulian subgroups.} subgroup to be linearly independent, it must be abelian, so calling it abelian is actually redundant, but it helps show off this important attribute. 

\section{Proof for Lemma~\ref{lem:i+u}} \label{sec:prlem}

If: Let $A = a I + b U$ be an operator for which $a$ and $b$ are scalars and $U$ is a unitary operator. Since $I$ and $U$ commute, apparently $A^\dagger A = A A^\dagger$. Note this holds true whether $A$ is on $\mathbb{C}^2$ and whether $U$ is Paulian.

Only if: Let $A$ be a normal operator on $\mathbb{C}^2$. Because it is normal, it is unitarily diagonalizable and the eigenspaces are orthogonal. If there is only one eigenvalue, then $A$ is proportional to $I$; if it has two different eigenvalues $c_1$ and $c_2$, we can solve the equations $c_1 = a + b$ and $c_2 = a - b$. Suppose $A$ becomes diagonal under a unitary $V$, which in terms of a matrix means
\begin{equation}
    V A V^{-1} = \begin{pmatrix}
        a + b & 0 \\
        0 & a - b
    \end{pmatrix}.
\end{equation}
Consider the matrix of Pauli $Z$:
\begin{equation}
    Z = \begin{pmatrix}
        1 & 0 \\
        0 & -1
    \end{pmatrix};
\end{equation}
we have $V A V^{-1} = a I + b Z$, so $A = a I + b V^{-1} Z V$. Because $V^{-1} Z V$ is unitarily similar to $Z$, it has the same spectrum $\left\{ 1,-1 \right\}$ and is Paulian. 

\section{Codeword stabilized codes}
\label{sec:costb}

First, we will establish some properties of codeword stabilized codes that are essential in constructing Paulian stabilizers in Sec.~\ref{sec:codpre}, and then we will provide the steps to do so, and details of the two codes from Sec.~\ref{sec:csc}. In the following discussion we will assume the system is composed of $n$ qubits; please refer to Sec.~\ref{sec:csc} for the relevant terminologies and how codeword stabilized codes are constructed. 

In this section we will express Pauli operators in the following way, e.g., 
\begin{equation}
    XIZ = X\otimes I \otimes Z,
\end{equation}
and suppose the system is composed of three qubits:
\begin{equation}
    X_2 = I \otimes X \otimes I = IXI.
\end{equation}
In addition, $I$ in this subsection will refer to the identity operator on a single qubit. To distinguish the identity operator on the whole system from those on individual qubits, we will label the former as $I_n$, assuming the whole system is composed of $n$ qubits, so for example
\begin{equation}
    I_3 = III.
\end{equation}

\subsection{Preliminaries}
\label{sec:codpre}

Let $\mathsf{S}_w$ denote the word stabilizer. In general we will consider a particular generating set $g$ of $\mathsf{S}_w$, which will be taken as a tuple of generators, so we can associate each simultaneous eigenspace of $\mathsf{S}_w$ with a unique $n$-tuple of $\pm 1$ that is the simultaneous eigenvalues with respect to $g$, just like the error syndromes in relation to Paulian stabilizers. Such a tuple of simultaneous eigenvalues will be denoted by $\hat{t}$, and the set of all these tuples by $\mathbb{W}$. Like the tuples of syndromes $(t)$, we will use $\hat{t}$ to label spaces and the like, e.g. $\hil_{\hat{t}}$ is the $\hat{t}$-simultaneous eigenspace of $g$, which, to put another way, is a bijective map from $\mathbb{W}$ to the collection of all simultaneous eigenspaces of $\mathsf{S}_w$ or $g$:
\begin{equation}
    \mathbb{W} \ni \hat{t} \mapsto \hil_{\hat{t}},
\end{equation} 
cf. the syndrome map in Sec.~\ref{sec:stbgp}.

Let the state stabilized by the word stabilizer be $\ket{s}$, and $W$ be any Pauli operator in $\mathsf{P}^n$. Due to commutativity and anticommutativity, for any simultaneous eigenvector of a set or group of commutative Pauli operators, after acted upon by a Pauli operator it is still a simultaneous eigenvector of the same set or group of Pauli operators, so $W\ket{s}$ is also a simultaneous eigenvector of $\mathsf{S}_w$, which implies:
\begin{lemma}\label{lem:p1p2or}
    Given a codeword stabilized code, for two Pauli operators $P_1,P_2\in \mathsf{P}^n$, either $P_1 \ket{s} \propto P_2\ket{s}$ or $P_1 \ket{s} \perp P_2\ket{s}$. Hence, either $P_1 \hils{C} \perp P_2 \hils{C}$ or $P_1 \hils{C} \cap P_2 \hils{C} \neq \left\{ 0 \right\}$, i.e., $P_1 \hils{C} \perp P_2 \hils{C}$ if and only if $P_1 \hils{C} \cap P_2 \hils{C} = \left\{ 0 \right\}$.
\end{lemma}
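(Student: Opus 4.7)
The plan is to prove the statement in two stages: first establish the claim for the individual vectors $P_1\ket{s}$ and $P_2\ket{s}$, and then lift it to the subspaces $P_1\hils{C}$ and $P_2\hils{C}$ using the spanning description of the code space.

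The crucial input is the observation, already made in the paragraph preceding the lemma, that $W\ket{s}$ is a simultaneous eigenvector of the word stabilizer $\mathsf{S}_w$ for every Pauli $W\in\mathsf{P}^n$: for each $S\in\mathsf{S}_w$, either $SW=WS$ or $SW=-WS$, so $S(W\ket{s})=\pm W\ket{s}$. I write $\hat{t}(W)\in\mathbb{W}$ for the resulting tuple of $\pm 1$ eigenvalues against a fixed generating set of $\mathsf{S}_w$. Because $\mathsf{S}_w$ is maximally linearly independent and abelian (see Appendix~\ref{sec:commpa} and Sec.~\ref{sec:csc}), every simultaneous eigenspace $\hil_{\hat{t}}$ is one-dimensional. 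Applying this to $W=P_1$ and $W=P_2$: if $\hat{t}(P_1)=\hat{t}(P_2)$ the two nonzero vectors share the same one-dimensional eigenspace and so $P_1\ket{s}\propto P_2\ket{s}$; if $\hat{t}(P_1)\neq\hat{t}(P_2)$ some generator of $\mathsf{S}_w$ takes distinct eigenvalues on them, and orthogonality follows from the self-adjointness of that generator. This settles the first sentence of the lemma.

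For the second sentence, I expand $\hils{C}=\mathrm{span}\{W_i\ket{s}\}$ using the word operators $\{W_i\}$, so that $P_j\hils{C}=\mathrm{span}\{P_jW_i\ket{s}\}_i$ for $j=1,2$. Orthogonality of the two subspaces plainly forces $P_1\hils{C}\cap P_2\hils{C}=\{0\}$. For the converse, suppose $P_1\hils{C}$ and $P_2\hils{C}$ are not orthogonal; bilinearly expanding an inner product between generic elements of the two spans then yields indices $i,j$ with $\langle s|W_i^\dagger P_1^\dagger P_2 W_j|s\rangle\neq 0$. Since $P_1W_i$ and $P_2W_j$ are again elements of $\mathsf{P}^n$, the first sentence of the lemma applied to this pair gives $P_1W_i\ket{s}\propto P_2W_j\ket{s}$, exhibiting a common nonzero vector in $P_1\hils{C}\cap P_2\hils{C}$. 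The main obstacle is really just the bookkeeping in this final step: one must keep in mind that products of Pauli operators remain Pauli so that Stage 1 can be reused, and that every $P_jW_i\ket{s}$ sits in a single one-dimensional simultaneous eigenspace of $\mathsf{S}_w$, which is what rules out any wiggle room in the ``proportional or orthogonal'' dichotomy.
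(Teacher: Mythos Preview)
Your proof is correct and follows the same idea as the paper: both hinge on the fact that Pauli operators send simultaneous eigenvectors of $\mathsf{S}_w$ to simultaneous eigenvectors, and that those eigenspaces are one-dimensional. The only cosmetic difference is in the second step: where you expand $\hils{C}$ via the word operators and argue through a nonvanishing inner product, the paper simply notes that $P_j\hils{C}$ is itself an orthogonal direct sum of one-dimensional simultaneous eigenspaces of $\mathsf{S}_w$, so two such subspaces either share a summand (nontrivial intersection) or are orthogonal---a slightly more structural phrasing of the same argument.
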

\begin{proof}
    Since $P_1\ket{s}$ and $P_2\ket{s}$ are both simultaneous eigenvectors of $\mathsf{S}_w$, and because each simultaneous eigenspace is one-dimensional, they are either in the same eigenspace, i.e., proportional, or orthogonal. Likewise, as $\hils{C}$ is an orthogonal direct sum of simultaneous eigenspaces of $\mathsf{S}_w$, so are $P_1\hils{C}$ and $P_2\hils{C}$. Again, because the simultaneous eigenspaces of $\mathsf{S}_w$ are one-dimensional, either $P_1 \hils{C} \perp P_2 \hils{C}$ or $P_1 \hils{C} \cap P_2 \hils{C} \neq \left\{ 0 \right\}$.
\end{proof}

This lemma leads to
\begin{corollary}\label{cor:stbco}
    For a codeword stabilized code, a Pauli operator $P$ obeys $P|_\mathrm{C} \propto I_\mathrm{C}$ if and only if $P \in \mathsf{S}_w \left\{ I_n,-I_n,iI_n,-iI_n \right\}$ and $P$ commutes, or anticommutes, with all the word operators at the same time. In other words, a Pauli operator $P$ stabilizes the code if and only if 
    \begin{enumerate}[label={(\roman*)}] 
        \item either $P \in \mathsf{S}_w$ and $P$ commutes with all the word operators, \label{it:stbp}
        \item or $P\in -\mathsf{S}_w$ and $P$ anticommutes with all the word operators.
    \end{enumerate}
    Clearly, if $I_n$ is a word operator, the only possibility is $P$ commuting with all the word operators.
\end{corollary}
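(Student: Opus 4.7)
The plan is to prove the forward direction by tracking how $P$ acts on each codeword $W_i\ket{s}$, first deducing that all $W_i$ must have the same commutation sign with $P$ and then that $P$ commutes with every element of $\mathsf{S}_w$; the converse and the refined form in (i)/(ii) will follow by direct computation.

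First I would assume $P|_\mathrm{C}=cI_\mathrm{C}$ with $c\neq 0$ (unitarity). Since $\{W_i\ket{s}\}$ spans $\hils{C}$, one has $PW_i\ket{s}=cW_i\ket{s}$ for every $i$. Two Pauli operators commute or anticommute, so write $PW_i=\epsilon_iW_iP$ with $\epsilon_i\in\{+1,-1\}$; multiplying through by $W_i^{-1}$ yields $P\ket{s}=\epsilon_ic\ket{s}$. The left-hand side is independent of $i$, so all $\epsilon_i$ must agree; otherwise $c=-c$ and $c=0$, contradicting unitarity. This proves the ``commutes or anticommutes with all at the same time'' part and also gives $P\ket{s}=\alpha\ket{s}$ for a single scalar $\alpha$.

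Next I would show $P$ centralizes $\mathsf{S}_w$. For any $S\in\mathsf{S}_w$, either $SP=PS$ or $SP=-PS$; using $S\ket{s}=\ket{s}$, the first case gives $SP\ket{s}=\alpha\ket{s}$ consistently, while the second forces $\alpha=-\alpha=0$, again contradicting unitarity. Hence $P$ commutes with every element of $\mathsf{S}_w$. The main obstacle is concluding from this that $P\in\mathsf{S}_w\{I_n,-I_n,iI_n,-iI_n\}$: it requires the fact (see Sec.~\ref{sec:commpa}) that $\hat{\mathsf{S}}_w$ is maximally linearly independent and abelian, of size $2^n$, so any Pauli operator commuting with all of $\mathsf{S}_w$ must coincide with some element of $\mathsf{S}_w$ up to a phase in $\{\pm 1,\pm i\}$. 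I would appeal to this centralizer property rather than re-derive it.

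For the converse and the finer statement, write $P=\eta S$ with $S\in\mathsf{S}_w$, $\eta\in\{\pm 1,\pm i\}$, and common commutation sign $\epsilon$ with every $W_i$. Then $PW_i\ket{s}=\epsilon W_iP\ket{s}=\epsilon\eta W_i\ket{s}$, so $P|_\mathrm{C}=\epsilon\eta I_\mathrm{C}$, which is proportional to $I_\mathrm{C}$ as required. Stabilization ($P|_\mathrm{C}=I_\mathrm{C}$) singles out $\epsilon\eta=1$, leaving exactly the two listed options: $\eta=\epsilon=1$ (case (i), $P\in\mathsf{S}_w$ commuting with every $W_i$) or $\eta=\epsilon=-1$ (case (ii), $P\in-\mathsf{S}_w$ anticommuting with every $W_i$). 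The concluding remark follows because if $I_n$ is a word operator, the anticommuting branch is impossible ($PI_n=I_nP$ always), so only case (i) survives.
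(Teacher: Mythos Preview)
Your proposal is correct and follows essentially the same approach as the paper's proof: both deduce that $P$ must commute with every element of $\mathsf{S}_w$ (the paper argues via $W\ket{s}$ being a simultaneous eigenvector of $\mathsf{S}_w$, you via $\ket{s}$ directly), then invoke maximality of $\mathsf{S}_w$ to force $P\in\mathsf{S}_w\{I_n,-I_n,iI_n,-iI_n\}$, and finally read off the uniform commutation sign with the word operators. Your version is somewhat more explicit in deriving the $(\mathrm{i})/(\mathrm{ii})$ dichotomy and the last remark, which the paper leaves as ``pretty obvious,'' but the underlying argument is the same.
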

As we discussed in Sec.~\ref{sec:commpa}, because $\mathsf{S}_w$ is a maximal linearly independent and abelian subgroup of $\mathsf{P}^n$, $\mathsf{S}_w \left\{ I_n,-I_n,iI_n,-iI_n \right\}$ is a maximal abelian subgroup. A Pauli stabilizer for a codeword stabilizer code that is guaranteed to exist is $I$. As shown in Ref.~\cite{Cross09}, Pauli stabilizer codes are a special case of codeword stabilizer codes, which of course have nontrivial Pauli stabilizers. 
\begin{proof}
    Given a Pauli operator $W\in \mathsf{P}^n$, for $W\ket{s}$ to be an eigenvector of another Pauli operator $P$, $P$ must commute with all elements in $\mathsf{S}_w$ as $W\ket{s}$ is a simultaneous eigenvector of $\mathsf{S}_w$; because $\mathsf{S}_w$ is a maximal linearly independent and abelian subgroup of $\mathsf{P}^n$, it implies $P\in \mathsf{S}_w \left\{ I_n,-I_n,iI_n,-iI_n \right\}$. Furthermore, if $PW = \pm W P$, $PW\ket{s} = \pm WP\ket{s}$, which leads to the requirement on commutation relations between $P$ and the word operators. The other direction is pretty obvious and hence omitted.
\end{proof}

An error $E$ is detectable if and only if $E$ satisfies \cite{Preskill,Cross09}
    \begin{equation}
        \Pis{C}E\Pis{C} \propto \Pis{C}. \label{eq:pecp}
    \end{equation}
The following lemma considers a property of detectable errors:
\begin{lemma}\label{lem:a}
    Let $E$ be a unitary operator which obeys $\Pis{C}E\Pis{C} = a \Pis{C}$ for some scalar $a$; note $|a|\leq 1$ by unitarity of $E$. For such an operator we can find:
    \begin{enumerate}[label={(\roman*)}] 
        \item $|a|=0$ if and only if $E\hils{C} \perp \hils{C}$, in which case $E\hils{C}\cap \hils{C} = \left\{ 0 \right\}$. \label{con:a0}
        \item $|a|=1$ if and only if $E|_{\mathrm{C}} = e^{i\theta} I_{\mathrm{C}}$ for some real $\theta$, in which case $E\hils{C} = \hils{C}$.\label{con:a1}
        \item $|a|\in (0,1)$ if and only if $E\hils{C}$ and $\hils{C}$ are not orthogonal and $E\hils{C}\cap \hils{C} = \left\{ 0 \right\}$.\label{con:0a1}
    \end{enumerate}
    Correspondingly, the following conditions are equivalent:
    \begin{enumerate}[label={(\Roman*)}]        
        \item $|a| = 1$. \label{con:e1}
        \item $E|_{\mathrm{C}} = e^{i\theta} I_{\mathrm{C}}$ for some real $\theta$. \label{con:e2}
        \item $E\hils{C} = \hils{C}$. \label{con:e3}
        \item $E\hils{C} \cap \hils{C} \neq \left\{ 0 \right\}$. \label{con:e4}    
    \end{enumerate}
\end{lemma}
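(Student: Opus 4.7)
The plan is to write, for every $\ket{\psi}\in\hils{C}$,
\begin{equation*}
E\ket{\psi} = a\ket{\psi} + \ket{\phi_\psi},\qquad \ket{\phi_\psi}\in\hils{C}^\perp,
\end{equation*}
which is forced by the hypothesis $\Pis{C}E\Pis{C}=a\Pis{C}$, and then to exploit unitarity of $E$ to obtain the single identity
\begin{equation*}
\|\ket{\phi_\psi}\|^2 = (1-|a|^2)\,\|\ket{\psi}\|^2.
\end{equation*}
This norm identity is the engine that drives all three cases, and incidentally supplies the promised bound $|a|\le 1$. From it, (i) and (ii) essentially fall out: $|a|=0$ forces every $E\ket{\psi}$ to lie in $\hils{C}^\perp$, giving $E\hils{C}\perp\hils{C}$; $|a|=1$ forces every $\ket{\phi_\psi}$ to vanish, so $E\ket{\psi}=a\ket{\psi}$ on all of $\hils{C}$, i.e.\ $E|_{\mathrm{C}}=e^{i\theta}I_{\mathrm{C}}$. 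The converses in both cases are one-line computations. Case (iii) is then whatever remains, $|a|\in(0,1)$, since (i) and (ii) already cover the endpoints.

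The only step requiring any care is the claim $E\hils{C}\cap\hils{C}=\{0\}$ in case (iii). My plan is to assume $\ket{\psi}\in E\hils{C}\cap\hils{C}$, write $\ket{\psi}=E\ket{\phi}$ with $\ket{\phi}\in\hils{C}$, and apply $\Pis{C}$ to both sides: since $\ket{\psi}\in\hils{C}$ we get $\ket{\psi}=\Pis{C}E\ket{\phi}=a\ket{\phi}$, whence $E\ket{\phi}=a\ket{\phi}$. Taking norms and using unitarity gives $\|\ket{\phi}\|=|a|\,\|\ket{\phi}\|$, and $|a|<1$ forces $\ket{\phi}=0$, hence $\ket{\psi}=0$. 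Non-orthogonality in case (iii) is direct: for any nonzero $\ket{\psi}\in\hils{C}$, $\bra{\psi}E\ket{\psi}=a\|\ket{\psi}\|^2\neq 0$.

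The block of equivalences (I)--(IV) then follows by chaining the three cases together with the assumption $\hils{C}\neq\{0\}$. Specifically, (I)$\Leftrightarrow$(II) is case (ii); (II)$\Rightarrow$(III) holds because $E|_{\mathrm{C}}=e^{i\theta}I_{\mathrm{C}}$ sends $\hils{C}$ onto itself; (III)$\Rightarrow$(IV) is immediate since $\hils{C}\neq\{0\}$; and (IV)$\Rightarrow$(I) is the contrapositive of cases (i) and (iii), both of which rule out a nontrivial intersection once $|a|<1$. The main obstacle, to the extent there is one, is the intersection argument in case (iii); everything else is bookkeeping around the norm identity stated at the outset.
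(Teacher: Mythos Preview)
Your proof is correct and follows essentially the same approach as the paper: both arguments hinge on comparing $\|E\ket{\psi}\|$ with $\|\Pis{C}E\ket{\psi}\|$ for $\ket{\psi}\in\hils{C}$ and using unitarity of $E$, with your orthogonal decomposition $E\ket{\psi}=a\ket{\psi}+\ket{\phi_\psi}$ being a clean repackaging of the paper's projection inequality $\|\Pis{C}\ket{w}\|\le\|\ket{w}\|$. Your treatment of the intersection claim in (iii) via $\ket{\psi}=E\ket{\phi}=a\ket{\phi}$ is likewise equivalent to the paper's contradiction argument.
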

\begin{proof}
    \ref{con:a0} is obvious. 
    
    \ref{con:a1}: It is apparent that $E|_{\mathrm{C}} = e^{i\theta} I_{\mathrm{C}}$ implies $a = e^{i\theta}$ and thus $|a|=1$. To show the converse, we first note 
    \begin{equation}
        ||\Pis{C}\ket{w}|| \leq ||\ket{w}||
    \end{equation}
    which becomes an equality if and only if $\ket{w} \in \hils{C}$. Now, consider any $\ket{v}\in \hils{C}$ and let $a = e^{i\theta}$ we have
    \begin{align}
        ||\ket{v}|| & = ||\Pis{C}E\Pis{C}\ket{v}|| \nonumber \\
        &\leq || E \ket{v}|| = ||\ket{v}||,
    \end{align}
    implying $E\ket{v}\in \hils{C}$. As this is true for all $\ket{v} \in \hils{C}$ and $E$ is unitary, we have $E\hils{C} = \hils{C}$. Next, since $e^{i\theta}\ket{v} = \Pis{C}E \ket{v} = E\ket{v}$ for all $\ket{v} \in \hils{C}$, we obtain $E|_{\mathrm{C}} = e^{i\theta} I_{\mathrm{C}}$.
    
    \ref{con:0a1}: If $E\hils{C} \cap \hils{C} = \left\{ 0 \right\}$, for any nonzero $\ket{v} \in \hils{C}$, because $E\Pis{C}\ket{v} = E\ket{v} \notin \hils{C}$, 
    \begin{align}
        |a| \,||\ket{v}|| &= ||\Pis{C} E \Pis{C}\ket{v}|| \nonumber \\
        &< ||E\ket{v}|| \nonumber \\
        &= ||\ket{v}||,
    \end{align}
    implying $|a| < 1$, and because $E\hils{C}$ and $\hils{C}$ are not orthogonal, $|a|>0$. On the contrary, when $|a|\in (0,1)$, since $a\neq 0$, $E\hils{C}$ must not be orthogonal to $\hils{C}$. Should $E\hils{C}$ and $\hils{C}$ have nontrivial intersection, there exist nonzero $\ket{v} \in \hils{C}$ such that $E\ket{v}\in \hils{C}$, and for such $\ket{v}$, we have $||E \Pis{C} \ket{v}|| = ||E\ket{v}|| = ||\ket{v}||$, so
    \begin{equation}
        |a|\, ||\ket{v}|| = ||\Pis{C} E \Pis{C}\ket{v}|| = ||\ket{v}||,
    \end{equation}
    a contradiction; therefore $E\hils{C} \cap \hils{C} = \left\{ 0 \right\}$.

    Let's go on to show why conditions \ref{con:e1} to \ref{con:e4} are equivalent:
    \begin{itemize}
        \item By \ref{con:a1}, \ref{con:e1} $\rightarrow$ \ref{con:e2}, and \ref{con:e2} $\rightarrow$ \ref{con:e3}.
        \item Clearly \ref{con:e3} $\rightarrow$ \ref{con:e4}.
        \item Because only when $|a| = 1$ do $E\hils{C}$ and $\hils{C}$ intersect nontrivially, \ref{con:e4} implies \ref{con:e1}.
    \end{itemize}
    This completes the proof.
\end{proof}

\begin{corollary} \label{cor:codep}
    For a codeword stabilized code:
    \begin{enumerate}[label={(\alph*)}]  
        \item Every detectable Pauli error $P$ obeys either $P\hils{C}\perp \hils{C}$ or $P \in \mathsf{S}_w \left\{ I_n,-I_n,iI_n,-iI_n \right\}$; in the latter case $P|_\mathrm{C} \propto I_\mathrm{C}$ and it is hence also correctable. \label{it:p1}
        \item For every pair of correctable Pauli errors $P_1$ and $P_2$, either $P_1$ and $P_2$ are orthonormal or there exists $S \in \mathsf{S}_w \left\{ I_n,-I_n,iI_n,-iI_n \right\}$ such that $P_2 = P_1 S$; in the latter case $S|_\mathrm{C} \propto I_\mathrm{C}$ so $P_2|_\mathrm{C} \propto P_1|_\mathrm{C}.$ \label{it:p1p2}
    \end{enumerate}
\end{corollary}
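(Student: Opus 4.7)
My plan is to derive both parts directly from Lemma~\ref{lem:p1p2or}, Lemma~\ref{lem:a}, and Corollary~\ref{cor:stbco}, which already encode the structural information about simultaneous eigenspaces of $\mathsf{S}_w$ that we need.

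For part~\ref{it:p1}, I would take any detectable Pauli error $P$ and write $\Pis{C} P \Pis{C} = a \Pis{C}$. Lemma~\ref{lem:a} then presents three alternatives, distinguished by $|a|=0$, $|a|=1$, or $|a|\in(0,1)$. The crucial observation is that Lemma~\ref{lem:p1p2or} rules out precisely the intermediate regime: applied with the two Pauli operators $I_n$ and $P$, it asserts that $P\hils{C}$ and $\hils{C}$ must either be orthogonal or intersect non-trivially, whereas case (iii) of Lemma~\ref{lem:a} is exactly the forbidden combination (non-orthogonal with trivial intersection). So only the alternatives $P\hils{C}\perp \hils{C}$ and $P|_\mathrm{C}=e^{i\theta}I_\mathrm{C}$ survive. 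Corollary~\ref{cor:stbco} converts the second one into the membership $P\in \mathsf{S}_w\{I_n,-I_n,iI_n,-iI_n\}$, and correctability is then automatic since a Pauli acting as a scalar on $\hils{C}$ trivially satisfies the Knill--Laflamme conditions~\eqref{eq:errcon} for $\{I_n,P\}$.

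For part~\ref{it:p1p2}, I would apply part~\ref{it:p1} to the operator $R := P_1^\dagger P_2$, which is a Pauli since $\mathsf{P}^n$ is closed under adjoints and products. By the Knill--Laflamme condition~\eqref{eq:errcon} applied to the correctable set containing $P_1$ and $P_2$, we have $\Pis{C} R \Pis{C}\propto \Pis{C}$, so $R$ is a detectable Pauli error. The two alternatives from part~\ref{it:p1} translate directly: $R\hils{C}\perp \hils{C}$ is the statement $P_2\hils{C}\perp P_1\hils{C}$, which, combined with unitarity of $P_1,P_2$ giving $\Pis{C}P_i^\dagger P_i\Pis{C}=\Pis{C}$, is exactly the orthonormality condition~\eqref{eq:errEi}; alternatively $R = S$ with $S\in \mathsf{S}_w\{I_n,-I_n,iI_n,-iI_n\}$, which rearranges to $P_2 = P_1 S$, and $S|_\mathrm{C}\propto I_\mathrm{C}$ from part~\ref{it:p1} immediately yields $P_2|_\mathrm{C}\propto P_1|_\mathrm{C}$.

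The main (and really only) obstacle is spotting that Lemma~\ref{lem:p1p2or} closes off the intermediate case of Lemma~\ref{lem:a}; for a generic code one would expect a genuine trichotomy, but the one-dimensionality of the simultaneous eigenspaces of $\mathsf{S}_w$ collapses it into a dichotomy. Once this is noticed, everything else---$P_1^\dagger P_2$ being Pauli, unitarity promoting ``orthogonal'' to ``orthonormal,'' and the scalar factor passing through $P_2=P_1 S$---is routine bookkeeping.
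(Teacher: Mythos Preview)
Your proposal is correct and follows essentially the same route as the paper: for part~\ref{it:p1}, you invoke Lemma~\ref{lem:a}, eliminate the intermediate case via Lemma~\ref{lem:p1p2or}, and translate $P|_\mathrm{C}\propto I_\mathrm{C}$ into membership in $\mathsf{S}_w\{I_n,-I_n,iI_n,-iI_n\}$ via Corollary~\ref{cor:stbco}. For part~\ref{it:p1p2}, your bootstrapping through part~\ref{it:p1} applied to $R=P_1^\dagger P_2$ is a slight streamlining of the paper's argument, which instead applies Lemma~\ref{lem:p1p2or} directly to $P_1,P_2$ and handles the non-orthogonal case with an informal correctability remark (``else one could not invert the action of the other''); your version is cleaner but the content is the same.
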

Note that a correctable unitary operator is ``normalized'' according to our definition, so a correctable Pauli error is normalized.
\begin{proof}
    \ref{it:p1}: As the Pauli error $P$ is detectable, we will make use of Lemma~\ref{lem:a}: If $P\hils{C} = \hils{C}$, it obeys condition~\ref{con:a1} of Lemma~\ref{lem:a}, and according to Corollary~\ref{cor:stbco} it must be in $\mathsf{S}_w \left\{ I_n,-I_n,iI_n,-iI_n \right\}$; if not, it satisfies either \ref{con:a0} or \ref{con:0a1} of Lemma~\ref{lem:a}. As Lemma~\ref{lem:p1p2or} shows, \ref{con:0a1} cannot occur, so only \ref{con:a0} and \ref{con:a1} are possible, completing the proof for \ref{it:p1}. 
    
    \ref{it:p1p2}: For correctable Pauli operators $P_1$ and $P_2$, they satisfy $\Pis{C}P_1^\dagger P_2 \Pis{C} \propto \Pis{C}$. By Lemma~\ref{lem:p1p2or}, either $P_1\hils{C} \perp P_2\hils{C}$ or $P_1\hils{C} \cap P_2\hils{C} \neq \left\{ 0 \right\}$. If it is the former, $\Pis{C}P_1^\dagger P_2 \Pis{C} = 0$, i.e., they are orthonormal. If the latter, $\Pis{C}P_1^\dagger P_2 \Pis{C} \neq 0$, in which case we must have $P_1 \hils{C} = P_2 \hils{C}$ and they must act identically except for a multiplication factor on $\hils{C}$, else one could not invert the action of the other on $\hils{C}$. As $S:= P_1^{-1} P_2$ is also a Pauli operator and $S|_\mathrm{C}\propto I_\mathrm{C}$, $S\in\mathsf{S}_w \left\{ I_n,-I_n,iI_n,-iI_n \right\}$ by Corollary~\ref{cor:stbco}.
\end{proof}

Corollary~\ref{cor:csc} is \ref{it:deg} of the following corollary combined with \ref{it:p1} of Corollary~\ref{cor:codep}; below ``wt'' refers to the weight of an operator:
\begin{corollary}\label{cor:nonde}
    Consider an $n$-qubit codeword stabilized code with distance $d$, for which $\mathrm{wt}S \geq d$ for all $S\in \mathsf{S}_w\setminus \left\{ I_n \right\}$. 
    \begin{enumerate}[label={(\roman*)}]  
        \item If $P\in \mathsf{P}^n$ has $\mathrm{wt}P < d$ and is not proportional to $I_n$, then $I_n$ and $P$ are orthogonal. \label{it:pnoti}
        \item For $P_1,P_2,P_1 P_2\in \mathsf{P}^n$, suppose that their weights are all less than $d$ and that they are linearly independent; then elements in $\left\{ I_n, P_1,P_2,P_1 P_2 \right\}$ are mutually orthogonal. \label{it:ppid}
        \item For a pair of operators in $\mathsf{P}^n$, if their weights are both no higher than $\left\lfloor (d-1)/2 \right\rfloor$ and if they are not proportional to each other, they are orthonormal. \label{it:pap}
        \item The code is nondegenerate \cite{Gottesman97,Gottesman10}: Indeed, for a codeword stabilized code with distance $d$, it is nondegenerate if and only if $\mathrm{wt}S \geq d$ for all $S\in \mathsf{S}_w\setminus \left\{ I_n \right\}$. \label{it:deg}
    \end{enumerate}
\end{corollary}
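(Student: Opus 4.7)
The plan is to prove the four items in order, reducing (ii), (iii), and (iv) to (i). For item (i), I use that $\mathrm{wt}P < d$ makes $P$ detectable. Then Corollary~\ref{cor:codep}\ref{it:p1} yields two options: either $P\hils{C} \perp \hils{C}$, which is exactly the desired orthogonality with $I_n$, or $P \in \mathsf{S}_w\{I_n,-I_n,iI_n,-iI_n\}$. The second option, combined with $P \not\propto I_n$, produces (after stripping a global phase) an element of $\mathsf{S}_w \setminus \{I_n\}$ of weight $< d$, contradicting the hypothesis; hence the first option holds.

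For item (ii), three applications of (i) give orthogonality with $I_n$ of each of $P_1$, $P_2$, $P_1 P_2$. The three remaining pairwise orthogonalities reduce to (i) using the Pauli algebra identities $P_a^\dagger P_a = I_n$ and $P_a^\dagger P_b = c_{ab}\, P_a P_b$ for a phase $c_{ab}$: for instance, $\Pis{C} P_1^\dagger P_2 \Pis{C} = c_{12}\, \Pis{C} P_1 P_2 \Pis{C} = 0$ by (i) applied to $P_1 P_2$, and the pairs $(P_1, P_1 P_2)$ and $(P_2, P_1 P_2)$ reduce to $\Pis{C} P_2 \Pis{C} = 0$ and $\pm \Pis{C} P_1 \Pis{C} = 0$ respectively. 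For item (iii), the bound $\mathrm{wt}(P_1 P_2) \leq 2\lfloor (d-1)/2 \rfloor \leq d-1$ together with the equivalence $P_1 \not\propto P_2 \Leftrightarrow P_1 P_2 \not\propto I_n$ lets me invoke (i); unitarity gives the ``normalized'' part $\Pis{C} P_i^\dagger P_i \Pis{C} = \Pis{C}$ for free.

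For item (iv), the ``if'' direction follows immediately from (iii), which gives orthonormality of all distinct non-proportional Pauli errors of weight $\leq \lfloor (d-1)/2 \rfloor$, the usual hallmark of nondegeneracy. For the ``only if'' direction, I argue the contrapositive. Suppose some $S \in \mathsf{S}_w \setminus \{I_n\}$ has $\mathrm{wt}S < d$. Detectability (from $\mathrm{wt}S < d$) gives $\Pis{C} S \Pis{C} = \alpha \Pis{C}$ via \eqref{eq:pecp}, while $S\ket{s} = \ket{s}$ with $\ket{s} \in \hils{C}$ forces $\alpha \neq 0$; Lemma~\ref{lem:a}\ref{con:a1} then upgrades this to $S|_{\mathrm{C}} = e^{i\theta} I_{\mathrm{C}}$. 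So $I_n$ and the distinct non-identity Pauli $S$ both act as scalar multiples of $I_\mathrm{C}$ on $\hils{C}$ while having weight $< d$, exhibiting precisely the degeneracy we wished to rule out.

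The main subtlety I foresee is reconciling the two flavors of ``nondegeneracy'' implicitly at play --- one is the orthonormality of low-weight correctable errors (the direct content of (iii)), and the other is the absence of a low-weight non-identity Pauli acting as scalar identity on the code (the stabilizer-code flavor). The bridge is exactly that any detectable element of $\mathsf{S}_w$ automatically lies in the ``effective stabilizer'' via Lemma~\ref{lem:a}\ref{con:a1}; this is what makes the ``only if'' part of (iv) clean, and it is the only place where a nontrivial input beyond (i) is needed.
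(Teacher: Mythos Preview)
Your argument for parts \ref{it:pnoti}--\ref{it:pap} and the ``if'' direction of \ref{it:deg} is correct and is essentially the paper's proof; the paper packages \ref{it:ppid} via the phaseless group $\hat{\mathsf{G}}=\{I_n,P_1,P_2,P_1P_2\}$ and routes \ref{it:pap} through \ref{it:ppid} rather than directly through \ref{it:pnoti}, but the content is the same.

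For the ``only if'' direction of \ref{it:deg} the paper takes a different and slightly stronger route. You stop after showing that a low-weight $S\in\mathsf{S}_w\setminus\{I_n\}$ satisfies $S|_{\mathrm{C}}\propto I_{\mathrm{C}}$, and declare that $I_n$ and $S$ (both of weight $<d$) constitute the degeneracy. That suffices only under the ``effective-stabilizer'' reading of nondegeneracy that you flag in your closing paragraph. The paper instead uses the standard correctable-errors definition (the one it cites from \cite{Gottesman97,Gottesman10}): it \emph{factors} $S$ as $S=P_2^{-1}P_1$ with $\mathrm{wt}\,P_i\le\lfloor(d-1)/2\rfloor$, thereby exhibiting two genuinely correctable Pauli errors $P_1\neq P_2$ that act identically on $\hils{C}$. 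Your Lemma~\ref{lem:a} step is fine, but it does not by itself produce errors of correctable weight; the factoring step is what bridges your two ``flavors'' of nondegeneracy, and you should add it if the intended definition is the correctable-errors one.
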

Note this corollary does not imply that a codeword stabilized code whose word stabilizers obey the condition on weights above will have distance $d$---The code having distance $d$ is part of the assumption. Also, it may seem weird at first sight that word operators did not show up, even though they are essential in formulating a codeword stabilizer code. Their roles here are implicit: As we have assumed the code has distance $d$, Pauli errors of certain weights must obey specific conditions with the word stabilizer and word operators as listed in Ref.~\cite{Cross09}.
\begin{proof}
    \ref{it:pnoti}: Because $P$ is not in $\mathsf{S}_w \left\{ I_n,-I_n,iI_n,-iI_n \right\}$ (due to its weight) and is detectable, Corollary~\ref{cor:codep} implies $P$ and $I_n$ are orthogonal.

    \ref{it:ppid}: First, due to their weights and linear independence,\footnote{By linear independence none of them can be proportional to $I_n$.} $P_1,P_2,P_1 P_2$ are not in $\mathsf{S}_w \left\{ I_n,-I_n,iI_n,-iI_n \right\}$. Let $\mathsf{G}$ be the group generated by $P_1$, $P_2$, and their adjoints. By linear independence and the fact that the adjoint of a Pauli operator differs at most by a multiplication factor, we have
    \begin{equation}
        \hat{\mathsf{G}} = \left\{ I_n, P_1,P_2,P_1 P_2 \right\}.
    \end{equation}
    By Corollary~\ref{cor:codep} or \ref{it:pnoti}, for all $O\in \hat{\mathsf{G}}$ except $I_n$, $\Pis{C} O \Pis{C} = 0$, so for all $O_1,O_2\in \hat{\mathsf{G}}$ with $O_1\neq O_2$
    \begin{equation}
        \Pis{C} O_1^\dagger O_2 \Pis{C} = 0, \label{eq:orthooo}
    \end{equation}
    because $O_1^\dagger O_2$ is also in $\mathsf{G}$ and is not proportional to $I_n$; $O_1$ and $O_2$ are therefore orthogonal.
    
    \ref{it:pap}: Suppose we have $P_1,P_2\in \mathsf{P}^n$ that are not proportional to each other and whose weights are no higher than $\left\lfloor (d-1)/2 \right\rfloor$. If one of them is proportional to $I_n$, then they are orthonormal by \ref{it:pnoti}; else, $P_1 P_2$ will not be proportional to $I_n$ and $\mathrm{wt} P_1 P_2 < d$, so by \ref{it:ppid} $P_1$ and $P_2$ are orthonormal.

    \ref{it:deg}: By \ref{it:pap}, the condition on weights is a sufficient condition for nondegeneracy. To show it is a necessary condition, suppose there exists $S\in \mathsf{S}_w\setminus \left\{ I_n \right\}$ whose weight is less than $d$. From its weight, $S$ is detectable, and because it is in $\mathsf{S}_w$, it must act like an identity on the code space (according to Corollary~\ref{cor:codep}). Then there would exist nontrivial Pauli operators $P_1\neq P_2$ for which $\mathrm{wt} P_i \leq \left\lfloor (d-1)/2 \right\rfloor$, $i=1,2$ such that $P_1 = P_2 S$, so $P_1$ and $P_2$ act the same on the code space, implying the code is degenerate.
\end{proof}

\subsection{Constructing Paulian stabilizers}
\label{sec:conpastb}

In this part we will discuss how to construct Paulian stabilizers for codeword stabilized codes, where we will make heavy use of $\hil_{\hat{t}}$'s, i.e., the simultaneous eigenspaces of $\mathsf{S}_w$. Unlike the proof for Proposition~\ref{pro:main}, we will not attempt to build the minimal group first and expand upon it. A quick reminder: $g$ denotes a tuple of generators of $\mathsf{S}_w$ and $\mathbb{W}$ is the collection of all the tuples of simultaneous eigenvalues with respect to $g$; please refer back to the start of Sec.~\ref{sec:codpre}. Let's lay out the procedure: 
\begin{enumerate}[label={A.\arabic*}] 
    \item \label{it:a1} First, check if \eqref{eq:m'+1} is satisfied to see whether it is possible to correct all errors with Paulian stabilizers---Here we will assume this is true; we then choose a set $\mathbb{F}$ of orthonormal correctable Pauli errors. For a code with a given distance, we can use \eqref{eq:fcos2}, and select linearly independent Pauli errors with weight no higher than $\left\lfloor (d-1)/2 \right\rfloor$ as orthonormal correctable errors; specifically, if the code is nondegenerate, which can be easily checked via Corollary~\ref{cor:nonde}, we can choose all of them. 
    \item \label{it:a2} As discussed in the proof for Lemma~\ref{lem:p1p2or}, $P\hils{C}$ is a direct sum of simultaneous eigenspaces of $\mathbb{S}_w$ for any Pauli operator $P$; hence for $F\in \mathbb{F}$, let $\mathbb{W}_F$ denote the subset of $\mathbb{W}$ such that
    \begin{equation}
        F \hils{C} = \bigoplus_{\hat{t} \in \mathbb{W}_F} \hil_{\hat{t}}.\label{eq:wf}
    \end{equation}
    $\mathbb{W}_F$ can be found out by making use of the commutation relations between the word operators and $g$, and those between the word operators and $F$. 
    \item \label{it:a3} Since  
    \begin{equation}
        \bigoplus_{F\in \mathbb{F}} F \hils{C} = \bigoplus_{F \in \mathbb{F}} \bigoplus_{\hat{t} \in \mathbb{W}_F} \hil_{\hat{t}},
    \end{equation}
    and since for all $F_1,F_2\in \mathbb{F}$
    \begin{equation}
        \mathbb{W}_{F_1} \cap \mathbb{W}_{F_2} = \varnothing \;\text{if } F_1\neq F_2,
    \end{equation}
    where $\varnothing$ refers to the empty set, with
    \begin{equation}
        \mathbb{W}_\perp := \mathbb{W}\setminus \bigcup_{F\in \mathbb{F}} \mathbb{W}_F, \label{eq:wperp}
    \end{equation}
    we have
    \begin{equation}
        \left( \bigoplus_{F\in \mathbb{F}} F \hils{C} \right)^\perp = \bigoplus_{\hat{t} \in \mathbb{W}_\perp} \hil_{\hat{t}}.
    \end{equation} 
    The set $\mathbb{W}_\perp$ and the associated simultaneous eigenspaces will be used as ``spares.'' 
    \item \label{it:a4} Let $m = \left \lceil \log_2 \left|\mathbb{F}\right| \right \rceil$, and $\mathbb{T}$, as before, be the collection of all $m$-tuples of $\pm 1$, i.e., syndromes. Choose a unique syndrome for each error in $\mathbb{F}$, namely, a one-to-one map $f_\mathrm{sym}:\mathbb{F} \rightarrow \mathbb{T}$, the ``syndrome map,''\footnote{The syndrome map in Sec.~\ref{sec:stbgp} was defined on $\mathbb{F}'$ instead of $\mathbb{F}$, so it was bijective besides injective.} and we require $f_\mathrm{sym}(I_n) = (I)$, where $(I)$ is the tuple whose components are all 1. If $m > \log_2|\mathbb{F}|$, there will be ``excess'' syndromes that do not correspond to correctable errors; i.e., they are members of 
    \begin{equation}
        \mathbb{T} \setminus f_\mathrm{sym}(\mathbb{F}).
    \end{equation}
    The total number of excess syndromes is
    \begin{equation}
        \left|\mathbb{T} \setminus f_\mathrm{sym}(\mathbb{F}) \right| = \left|\mathbb{T} \right| - \left|f_\mathrm{sym}(\mathbb{F}) \right| = 2^m - |\mathbb{F}|.
    \end{equation}
    If $m = \log_2 \left|\mathbb{F}\right|$, then this already gives us the ``minimal'' Paulian stabilizers; see Sec.~\ref{sec:stbgp} and \ref{it:a6} on how to define Paulian stabilizers given the syndrome spaces. 
    \item \label{it:a5} Now let's designate all the syndrome spaces. The properties we desire of them are (cf. Sec.~\ref{sec:extdo}):
    \begin{enumerate}
        \item The syndrome space associated with each error $F \in \mathbb{F}$ should contain $F\hils{C}$: 
        \begin{equation}
            F \hils{C} \subseteq \hil_{f_\mathrm{sym}(F)} \; \forall F \in \mathbb{F}. \label{eq:fsymf}
        \end{equation}
        \item The syndrome spaces are orthogonal: For any two distinct syndromes $(s)$ and $(t)$,
        \begin{equation}
            \hil_{(s)} \perp \hil_{(t)}. \label{eq:sperpt}
        \end{equation}
        \item All syndrome spaces are isomorphic:
        \begin{equation}
            \hil_{(s)} \cong \hil_{(t)} \; \forall (s),(t) \in \mathbb{T}. \label{eq:scongt}
        \end{equation}
    \end{enumerate}
    To achieve them, for every $(t)\in \mathbb{T}$ we choose a subset $\mathbb{W}_{(t)}$ of $\mathbb{W}$ and demand the syndrome spaces be
    \begin{equation}
        \hil_{(t)} := \bigoplus_{\hat{s} \in \mathbb{W}_{(t)}} \hil_{\hat{s}};
    \end{equation}
    $\mathbb{W}_{(t)}$'s shall satisfy the following conditions:
    \begin{enumerate}
        \item To comply with \eqref{eq:fsymf}, for all $(t)\in f_\mathrm{sym}(\mathbb{F})$ we require
        \begin{equation}
            \mathbb{W}_{F_{(t)}} \subseteq \mathbb{W}_{(t)}; \label{eq:incw}
        \end{equation}
        see the definition of $\mathbb{W}_F$ for all $F\in \mathbb{F}$ in \eqref{eq:wf}.
        \item To satisfy \eqref{eq:sperpt},
        \begin{equation}
            \mathbb{W}_{(s)} \cap \mathbb{W}_{(t)} = \varnothing \;\forall (s)\neq (t). \label{eq:noninterw}
        \end{equation}
        \item To obey \eqref{eq:scongt},
        \begin{equation}
            |\mathbb{W}_{(s)}| = |\mathbb{W}_{(t)}| \;\forall (s),(t) \in \mathbb{T}. \label{eq:stcar}
        \end{equation}
        Since $\dim \hil_{\hat{t}} = 1$, $|\mathbb{W}_{(t)}|$ is the dimension of each syndrome space, and $|\mathbb{W}_{(t)}| - \dim \hils{C}$ can show how much we extend the domain of the Paulian stabilizers.
    \end{enumerate}
    Note 
    \begin{align}
        \mathbb{W}_{(t)} \setminus \mathbb{W}_{F_{(t)}} &\subseteq \mathbb{W}_\perp \; \forall (t)\in f_\mathrm{sym}(\mathbb{F}), \\
        \mathbb{W}_{(t)} &\subseteq \mathbb{W}_\perp \; \forall (t)\in \mathbb{T} \setminus f_\mathrm{sym}(\mathbb{F}),
    \end{align} 
    so the spares---$\mathbb{W}_\perp$ of \eqref{eq:wperp} and the associated eigenspaces---are used to fill in each syndrome space. Finally, if the stabilizers are to be Paulian on the entire space $\hil$, the dimension of each syndrome space is
    \begin{equation}
        \dim\hil_{(t)} = 2^n / 2^m = 2^{n-m}, 
    \end{equation}
    i.e., each is composed of $n-m$ qubits. 
    \item \label{it:a6} With the syndrome spaces specified, we have the corresponding Paulian stabilizers: For all $i = 1,\dotsc,m$,
    \begin{equation}
        Z_{i}^\mathrm{S} := \Pi_{\bigoplus_{(t)_i = 1,(t)\in \mathbb{T},} \hil_{(t)}} - \Pi_{\bigoplus_{(t)_i = -1, (t)\in \mathbb{T}} \hil_{(t)}},
    \end{equation}
    where $(t)_i$ is the $i$-th component of $(t)$. The domain of the Paulian stabilizers, $\hil'$ of Proposition~\ref{pro:main}, is thus
    \begin{equation}
        \hil' = \bigoplus_{(t)\in \mathbb{T}} \hil_{(t)}.
    \end{equation}
    Defined this way, each $Z_{i}^\mathrm{S}$ is clearly Paulian to the restriction of $\hil'$. They commute, with $(t)$-simultaneous eigenspaces $\hil_{(t)}$; i.e., $(t)$'s are the error syndromes and $\hil_{(t)}$'s are the corresponding syndrome spaces. Because we have demanded $I_n$ have syndrome $(I)$, $Z_{i}^\mathrm{S}$'s are stabilizers. 
\end{enumerate}
In the examples to come, we will demonstrate how to put them into practice.

\subsection{\texorpdfstring{$((5,6,2))$}{((5,6,2))}-code} \label{sec:562}
Let's start off with the $((5,6,2))$-code from Refs.~\cite{Rains97,Cross09}. As discussed in Sec.~\ref{sec:csc}, it is impossible to find a Paulian stabilizer group that can detect all the weight-1 errors for this code, but due to its low dimensions, it is easier to demonstrate the procedure shown in Sec.~\ref{sec:conpastb} with this code, and we can also show how to adapt the methods for error-detecting codes. 

The word stabilizer of this code is generated by $ZXZII$ and all its cyclic shifts, i.e.,
\begin{align} 
    g = ( & ZXZII, XZIIZ, ZIIZX, \nonumber \\
    &IIZXZ, IZXZI ), \label{eq:sw62}
\end{align}
and the word operators are 
\begin{align}
    &IIIII, ZZIZI, IZZIZ,  \nonumber \\
    &ZIZZI, IZIZZ, ZIZIZ.
\end{align}
Now let's follow the steps listed in Sec.~\ref{sec:conpastb}:
\begin{itemize}
    \item \ref{it:a1}: As this code is an error-detecting code, how do we choose orthonormal Pauli errors? In fact, because the code has distance 2, we infer from \ref{it:ppid} of Corollary~\ref{cor:nonde} that $\mathbb{F} = \left\{ X_i,Y_i,Z_i,I_5 \right\}$ is orthonormal for a fixed $i=1,\dotsc,5$, and we will use them as the orthonormal ``correctable'' errors.
    \item \ref{it:a2}: We should find $\mathbb{W}_F$ for each $F\in \mathbb{F}$. As a demonstration, we will show how to find $\mathbb{W}_{X_1}$. First, consider the word operator $ZZIZI$. Its commutation relation with $g$ of \eqref{eq:sw62}, if expressed as a tuple of $\pm 1$ with $+1$ for commuting and $-1$ for anticommuting, is 
    \begin{equation}
        (-1,-1,1,-1,1), \label{eq:1-1t1}
    \end{equation}
    which is exactly the simultaneous eigenvalues of the vector $ZZIZI\ket{s}$ with respect to $g$. Repeating for all the word operators, we obtain $\hils{C}$ as the direct sum of simultaneous eigenspaces of $g$ or $\mathsf{S}_w$. To obtain $\mathbb{W}_{X_1}$, we check how $X_1$ commutes with $g$: The commutation relation is
    \begin{equation}
        (-1,1,-1,1,1),\label{eq:1-1t2}
    \end{equation}
    which means $X_1 (ZZIZI\ket{s})$ has simultaneous eigenvalues 
    \begin{align}
        &\left(-1 \times (-1),-1 \times 1,1 \times (-1),-1 \times 1,1 \times 1\right) \nonumber \\
        = & (1,-1,-1,-1,1), \label{eq:1-1t3}
    \end{align}
    namely multiplications component by component between \eqref{eq:1-1t1} and \eqref{eq:1-1t2}; $(1,-1,-1,-1,1)$ from \eqref{eq:1-1t3} is therefore an element of $\mathbb{W}_{X_1}$. Doing this all over again for all the word operators gives us $\mathbb{W}_{X_1}$.
    \item \ref{it:a3}: After obtaining each $\mathbb{W}_F$, $\mathbb{W}_\perp$ should have $2^5 - \dim\hils{C} \times 4 = 8$ elements.
    When $i=1$, they are
    \begin{align}
        \hat{a} &:= (-1, -1, -1, 1, -1),\nonumber \\
        \hat{b} &:= (-1, -1, -1, 1, 1),\nonumber \\
        \hat{c} &:= (-1, 1, 1, 1, -1),\nonumber \\
        \hat{d} &:= (1, -1, -1, -1, -1),\nonumber \\
        \hat{e} &:= (-1, 1, 1, 1, 1),\nonumber \\
        \hat{f} &:= (1, -1, -1, -1, 1),\nonumber \\
        \hat{g} &:= (1, 1, 1, -1, -1),\nonumber \\
        \hat{h} &:= (1, 1, 1, -1, 1).
    \end{align}
    \item \ref{it:a4}: Since $\log_2 |\mathbb{F}| = 2$ is an integer, in this case we do not have any excess syndromes. Let's choose the syndrome for each element of $\mathbb{F}$, e.g.
    \begin{align}
        &F_{(1,1)} = I_5, \;
        F_{(1,-1)} = X_i, \nonumber \\ 
        &F_{(-1,1)} = Y_i, \;
        F_{(-1,-1)} = Z_i;
    \end{align}
    a quick reminder: $(I) = (1,1)$ in this case. They give us the minimal Paulian stabilizers:
    \begin{align}
        Z_1^\mathrm{S}|_{\overline{\hil}} &= \Pi_{\hils{C} \oplus X_i\hils{C}} - \Pi_{Y_i\hils{C} \oplus Z_i\hils{C}}, \nonumber \\
        Z_2^\mathrm{S}|_{\overline{\hil}} &= \Pi_{\hils{C} \oplus Y_i\hils{C}} - \Pi_{X_i\hils{C} \oplus Z_i\hils{C}}. \label{eq:z1z2}
    \end{align}
    \item \ref{it:a5}: As addressed in the previous point, we have already had the minimal Paulian stabilizers, and we would like to extend their domain to the whole space while keeping them Paulian. We can choose
    \begin{align}
        \mathbb{W}_{(1,1)} \setminus \mathbb{W}_{F_{(1,1)}} &= \big\{ \hat{a},\hat{b} \big\}, \nonumber\\
        \mathbb{W}_{(1,-1)} \setminus \mathbb{W}_{F_{(1,-1)}} &= \big\{ \hat{c},\hat{d} \big\}, \nonumber\\
        \mathbb{W}_{(-1,1)} \setminus \mathbb{W}_{F_{(-1,1)}} &= \big\{ \hat{e},\hat{f} \big\}, \nonumber\\
        \mathbb{W}_{(-1,-1)} \setminus \mathbb{W}_{F_{(-1,-1)}} &= \big\{ \hat{g},\hat{h} \big\},
    \end{align}
    so
    \begin{align}
        \hil_{(1,1)} &= \hils{C} \oplus \hil_{\hat{a}} \oplus \hil_{\hat{b}}, \nonumber \\
        \hil_{(1,-1)} &= X_i\hils{C} \oplus \hil_{\hat{c}} \oplus \hil_{\hat{d}}, \nonumber \\
        \hil_{(-1,1)} &= Y_i\hils{C} \oplus \hil_{\hat{e}} \oplus \hil_{\hat{f}}, \nonumber \\
        \hil_{(-1,-1)} &= Z_i\hils{C} \oplus \hil_{\hat{g}} \oplus \hil_{\hat{h}}.
    \end{align}
    \item \ref{it:a6}: Now we have commutative stabilizers that are Paulian on the whole space:
    \begin{align}
        Z_1^\mathrm{S} &= \Pi_{\hil_{(1,1)} \oplus \hil_{(1,-1)} } - \Pi_{ \hil_{(-1,1)} \oplus \hil_{(-1,-1)} } \nonumber \\
        Z_2^\mathrm{S} &= \Pi_{\hil_{(1,1)} \oplus \hil_{(-1,1)} } - \Pi_{ \hil_{(1,-1)} \oplus \hil_{(-1,-1)} }.
    \end{align}
\end{itemize}

With $X_i$, $Y_i$, $Z_i$, and $I_5$ chosen as the orthonormal correctable errors, they have distinct syndromes with respect to the Paulian stabilizers, and we can correct their linear combinations, i.e., all errors occurring on the $i$-th qubit. As discussed earlier, the Paulian stabilizers for this code cannot detect all weight-1 errors; however, it can be found that with our choice of the syndrome spaces all single $X$ errors can be detected: Each single $X$ error maps $\hils{C}$ to a subspace of $\hil_{(I)}^\perp$, so the syndrome is different from $(I)$ and is detectable.

\subsection{\texorpdfstring{$((9,12,3))$}{((9,12,3))}-code}\label{sec:9123}
Now let's consider the $((9,12,3))$-code from Refs.~\cite{Yu08,Cross09}, which, unlike the previous example, is a legitimate error-correcting code. Since we have by and large demonstrated the methods in our previous example, we will only focus on the key points, and since the dimension is too large we will not give explicit forms of the Paulian stabilizers. 

\begin{enumerate}
    \item \ref{it:a1}: The word stabilizer is generated by $ZXZIIIIII$ and all its cyclic shifts, so it is apparent that
    \begin{equation}
        \mathrm{wt}S \geq d = 3 \; \forall S\in \mathsf{S}_W\setminus \left\{ I \right\}.
    \end{equation}
    By Corollary~\ref{cor:nonde} the code is nondegenerate, so we choose all linearly independent Pauli errors with weight no larger than $2$, which means by \eqref{eq:fcos} we have
    $$
        \left|\mathbb{F}\right| = 1 + 9 \times 3 = 28.
    $$
    Because 
    $$
        2^{\left \lceil \log_2 \left|\mathbb{F}\right| \right \rceil} \dim \hils{C} = 2^5 \times 12 < 2^5 \times 2^4 = \dim \hil = 2^9,
    $$
    it is possible for this code to have Paulian stabilizers that correct all the relevant errors. 
    \item \ref{it:a2} and \ref{it:a3} are routine. 
    \item \ref{it:a4}: As $m = \left\lceil \log_2 \left|\mathbb{F}\right| \right\rceil = 5 > \log_2 \left|\mathbb{F}\right|$, we will have excess syndromes in this case, and they are $2^m-\left|\mathbb{F}\right| = 4$ in total.
    \item \ref{it:a5}: If we want the stabilizers to be Paulian on the whole space, then each syndrome space is composed of $n-m = 4$ qubits. For each syndrome $(t)$ that points to an error $F_{(t)}$ in $\mathbb{F}$, $\dim \hil_{(t)} - \dim F_{(t)} \hils{C} = \dim \hil_{(t)} - \dim \hils{C} = 4$, so we need four elements of $\mathbb{W}_\perp$ to construct the associated syndrome space $\hil_{(t)}$, while for each excess syndrome we need $2^4 = 16$ elements of $\mathbb{W}_\perp$. 
    \item We can define the Paulian stabilizers following \ref{it:a6}. As there are four excess syndromes, there are four syndrome spaces no correctable errors will map the code space into---They exist to make the stabilizers Paulian.
\end{enumerate}

If we want to use the three Pauli stabilizers from Ref.~\cite{Yu08}, since they are also part of the word stabilizer (Corollary~\ref{cor:stbco}), it is better to let them be in the tuple of generators $g$, and steps \ref{it:a4} and \ref{it:a5} should be done accordingly; e.g., in \ref{it:a4} the syndrome for each orthonormal Pauli error should be chosen by how the error commutes with the Pauli stabilizers---so that these Pauli stabilizers will be among the Paulian stabilizers built in step \ref{it:a6}.

\section{Gottesman-Kitaev-Preskill codes} \label{sec:gkp}

Let $q = \left( a^\dagger+ a \right) / \sqrt{2}$ and $p = i\left( a^\dagger - a \right) / \sqrt{2}$ be conjugate quadrature operators. A Gottesman-Kitaev-Preskill (GKP) code for a single oscillator has two stabilizers, which are $e^{2i\pi q/\alpha}$ and $e^{-i n\alpha p}$ for some real $\alpha$, where $n$ is the dimension of the code space \cite{Gottesman01,Terhal16}; clearly the stabilizers are not Paulian. Such codes can correct small shifts in both $q$ and $p$; specifically, they can correct displacements with $\left| \Delta q \right| < \alpha /2$ and $\left| \Delta p \right| < \pi/ (n \alpha)$ \cite{Gottesman01}. The eigenstates of these stabilizers are not physical in that they are infinitely-squeezed, so in practice finitely-squeezed states are used; the error probability can be acceptably low if the state is squeezed sufficiently \cite{Gottesman01,Campagne20}. If the anticipated errors in $q$ and $p$ are comparable in magnitude, ``square'' GKP codes can be used, by choosing $\alpha = \sqrt{2\pi/n}$. When $n = \dim\hils{C} = 2$, the stabilizers are $e^{2i\sqrt{\pi} q}$ and $e^{-2i\sqrt{\pi} p}$ \cite{Gottesman01,Terhal20,Campagne20}.

To measure the syndrome, one way is by preparing the ancilla in a GKP state, and utilizing the Steane circuit to ascertain the amount of shifts by measuring the ancilla \cite{Gottesman01,Steane97,Vuillot19,Terhal20}. The outcomes are analog (or connected) rather than binary \cite{Terhal20}, and the corresponding measurement on the system is therefore not Paulian. Another avenue is phase estimation \cite{Terhal16,Terhal20}: Given a unitary operator $U$ on a system, if the system is in an $e^{i\theta}$-eigenstate, the procedure to estimate the phase, i.e., $\theta$ is called phase estimation. Because the stabilizers of GKP codes are unitary, we can obtain the syndrome this way; furthermore, as the simultaneous eigenspaces of $e^{2i\pi q/\alpha}$ and $e^{-i n\alpha p}$ are translations of the code space in $p$ and $q$, they are orthogonal and isomorphic \cite{Gottesman01}.

Phase estimation can be achieved by coupling the system and ancilla qubits via controlled-$U^k$ gates, and after performing suitable operations and measurements on the ancilla qubits we are able to approximate the phase $\theta$ \cite{Nielsen,Kitaev95,Griffiths96,Higgins07,Svore13,*Svore13arXiv}. It may seem that each measurement of an ancilla qubit is equivalent to measuring a Paulian operator on the system, as we have two measurement outcomes and they are equally likely (cf. Sec.~\ref{sec:pau}); however, it can be easily checked that such measurements in general are not orthogonal measurements, which is also evident from the coupling between the system and the ancilla being controlled-$U^k$, cf. Sec.~\ref{sec:cnot}---Hence, we cannot describe each measurement with a single self-adjoint operator, let alone a Paulian operator. 

Theoretically, we can construct commutative ``Paulian'' operators $Z_j^\mathrm{S}$'s for phase estimation: For convenience, let's rescale $\theta$, so that the eigenvalues of $U$ are $e^{2i\pi \theta}$ with $\theta \in [0,1)$ \cite{Nielsen}. Each $Z_j^\mathrm{S}$ is to measure the $2^{-j}$ digit of $\theta$ in binary representation, and $\theta = 0$ would correspond to the $(1,1,\cdots)$-simultaneous eigenvalues of $Z_j^\mathrm{S}$'s. Hence, with $\hil_{\theta}$ denoting the $e^{2i\pi\theta}$-eigenspace of $U$, we let the $1$ and $-1$-eigenspaces of $Z_j^\mathrm{S}$ be the direct sums of $\hil_\theta$ over all $\theta$ whose $2^{-j}$ digit in binary representation are $0$ and $1$, respectively. Under this construction, $Z_j^\mathrm{S}$'s shall be commutative and stabilize $1$-eigenvectors of $U$ (i.e., $\theta = 0$), and we can measure $Z_j^\mathrm{S}$'s to estimate the phase: For example, for an eigenvector of $U$ with $\theta = 0.1010$ in binary representation, it is a $(0,1,0)$-simultaneous eigenvector of $Z_j^\mathrm{S}$'s for $j = 1,2,3$. However, whether they are truly Paulian or not (as defined in Sec.~\ref{sec:pau}) depends on the spectral structure of $U$---The $\pm 1$-eigenspaces may fail to be isomorphic.

For GKP codes, we can construct phase estimation operators for $e^{2i\pi q/\alpha}$ and $e^{-i n\alpha p}$ respectively according to the previous paragraph, and these phase estimation operators are truly Paulian. The issue is that, even though they exist, to carry them out we need to couple very specific intervals of $\theta$ with the ancilla; see the $\pm 1$-eigenspaces of each $Z_j^\mathrm{S}$ above and Sec.~\ref{sec:cnot}. Hence, existing schemes for error correction, such as those in Refs.~\cite{Terhal16,Vuillot19,Terhal20,Campagne20,Hastrup21}, are more practical.

A closing remark: As discussed in Sec.~\ref{sec:dis}, commutative Paulian stabilizers are not unique, nor are the ones shown above. However, to construct practical Paulian stabilizers, appropriate syndrome spaces should be chosen, and this poses a great challenge, especially given the ``continuous'' nature of the errors for GKP codes. That being said, in practice states that approximate the true GKP codewords are used, and if confined to these physical states, we might be able to find suitable syndrome spaces to build practical Paulian stabilizers. This is, however, beyond the scope of this work.



\end{appendix}

\bibliographystyle{apsrev4-2}
\bibliography{bib}

\begin{thebibliography}{104}%
\makeatletter
\providecommand \@ifxundefined [1]{%
 \@ifx{#1\undefined}
}%
\providecommand \@ifnum [1]{%
 \ifnum #1\expandafter \@firstoftwo
 \else \expandafter \@secondoftwo
 \fi
}%
\providecommand \@ifx [1]{%
 \ifx #1\expandafter \@firstoftwo
 \else \expandafter \@secondoftwo
 \fi
}%
\providecommand \natexlab [1]{#1}%
\providecommand \enquote  [1]{``#1''}%
\providecommand \bibnamefont  [1]{#1}%
\providecommand \bibfnamefont [1]{#1}%
\providecommand \citenamefont [1]{#1}%
\providecommand \href@noop [0]{\@secondoftwo}%
\providecommand \href [0]{\begingroup \@sanitize@url \@href}%
\providecommand \@href[1]{\@@startlink{#1}\@@href}%
\providecommand \@@href[1]{\endgroup#1\@@endlink}%
\providecommand \@sanitize@url [0]{\catcode `\\12\catcode `\$12\catcode
  `\&12\catcode `\#12\catcode `\^12\catcode `\_12\catcode `\%12\relax}%
\providecommand \@@startlink[1]{}%
\providecommand \@@endlink[0]{}%
\providecommand \url  [0]{\begingroup\@sanitize@url \@url }%
\providecommand \@url [1]{\endgroup\@href {#1}{\urlprefix }}%
\providecommand \urlprefix  [0]{URL }%
\providecommand \Eprint [0]{\href }%
\providecommand \doibase [0]{https://doi.org/}%
\providecommand \selectlanguage [0]{\@gobble}%
\providecommand \bibinfo  [0]{\@secondoftwo}%
\providecommand \bibfield  [0]{\@secondoftwo}%
\providecommand \translation [1]{[#1]}%
\providecommand \BibitemOpen [0]{}%
\providecommand \bibitemStop [0]{}%
\providecommand \bibitemNoStop [0]{.\EOS\space}%
\providecommand \EOS [0]{\spacefactor3000\relax}%
\providecommand \BibitemShut  [1]{\csname bibitem#1\endcsname}%
\let\auto@bib@innerbib\@empty
\bibitem [{\citenamefont {Gottesman}(1997{\natexlab{a}})}]{Gottesman97}%
  \BibitemOpen
  \bibfield  {author} {\bibinfo {author} {\bibfnamefont {D.}~\bibnamefont
  {Gottesman}},\ }\emph {\bibinfo {title} {Stabilizer Codes and Quantum Error
  Correction}},\ \href@noop {} {Ph.D. thesis},\ \bibinfo  {school} {California
  Institute of Technology} (\bibinfo {year} {1997}{\natexlab{a}})\BibitemShut
  {NoStop}%
\bibitem [{\citenamefont {Gottesman}(1997{\natexlab{b}})}]{Gottesman97arXiv}%
  \BibitemOpen
  \bibfield  {author} {\bibinfo {author} {\bibfnamefont {D.}~\bibnamefont
  {Gottesman}},\ }\Eprint {https://arxiv.org/abs/quant-ph/9705052}
  {arXiv:quant-ph/9705052}  (\bibinfo {year} {1997}{\natexlab{b}})\BibitemShut
  {NoStop}%
\bibitem [{\citenamefont {Gottesman}(2010)}]{Gottesman10}%
  \BibitemOpen
  \bibfield  {author} {\bibinfo {author} {\bibfnamefont {D.}~\bibnamefont
  {Gottesman}},\ }in\ \href@noop {} {\emph {\bibinfo {booktitle} {Quantum
  information science and its contributions to mathematics}}},\ \bibinfo
  {series} {Proceedings of Symposia in Applied Mathematics}, Vol.~\bibinfo
  {volume} {68},\ \bibinfo {editor} {edited by\ \bibinfo {editor}
  {\bibfnamefont {S.~J.}\ \bibnamefont {Lomonaco~Jr.}}}\ (\bibinfo  {publisher}
  {American Mathematical Society},\ \bibinfo {address} {Washington, D.C.},\
  \bibinfo {year} {2010})\ pp.\ \bibinfo {pages} {13--58}\BibitemShut {NoStop}%
\bibitem [{\citenamefont {Gottesman}(2009)}]{Gottesman09arXiv}%
  \BibitemOpen
  \bibfield  {author} {\bibinfo {author} {\bibfnamefont {D.}~\bibnamefont
  {Gottesman}},\ }\Eprint {https://arxiv.org/abs/0904.2557} {arXiv:0904.2557
  [quant-ph]}  (\bibinfo {year} {2009})\BibitemShut {NoStop}%
\bibitem [{\citenamefont {Terhal}(2015)}]{Terhal15}%
  \BibitemOpen
  \bibfield  {author} {\bibinfo {author} {\bibfnamefont {B.~M.}\ \bibnamefont
  {Terhal}},\ }\href {https://doi.org/10.1103/RevModPhys.87.307} {\bibfield
  {journal} {\bibinfo  {journal} {Rev. Mod. Phys.}\ }\textbf {\bibinfo {volume}
  {87}},\ \bibinfo {pages} {307} (\bibinfo {year} {2015})}\BibitemShut
  {NoStop}%
\bibitem [{\citenamefont {Nielsen}\ and\ \citenamefont
  {Chuang}(2011)}]{Nielsen}%
  \BibitemOpen
  \bibfield  {author} {\bibinfo {author} {\bibfnamefont {M.~A.}\ \bibnamefont
  {Nielsen}}\ and\ \bibinfo {author} {\bibfnamefont {I.~L.}\ \bibnamefont
  {Chuang}},\ }\href@noop {} {\emph {\bibinfo {title} {Quantum Computation and
  Quantum Information: 10th Anniversary Edition}}}\ (\bibinfo  {publisher}
  {Cambridge University Press, Cambridge},\ \bibinfo {year} {2011})\BibitemShut
  {NoStop}%
\bibitem [{\citenamefont {Roffe}(2019)}]{Roffe19}%
  \BibitemOpen
  \bibfield  {author} {\bibinfo {author} {\bibfnamefont {J.}~\bibnamefont
  {Roffe}},\ }\href {https://doi.org/10.1080/00107514.2019.1667078} {\bibfield
  {journal} {\bibinfo  {journal} {Contemporary Physics}\ }\textbf {\bibinfo
  {volume} {60}},\ \bibinfo {pages} {226} (\bibinfo {year} {2019})}\BibitemShut
  {NoStop}%
\bibitem [{\citenamefont {Buluta}\ and\ \citenamefont {Nori}(2009)}]{Buluta09}%
  \BibitemOpen
  \bibfield  {author} {\bibinfo {author} {\bibfnamefont {I.}~\bibnamefont
  {Buluta}}\ and\ \bibinfo {author} {\bibfnamefont {F.}~\bibnamefont {Nori}},\
  }\href {https://doi.org/10.1126/science.1177838} {\bibfield  {journal}
  {\bibinfo  {journal} {Science}\ }\textbf {\bibinfo {volume} {326}},\ \bibinfo
  {pages} {108} (\bibinfo {year} {2009})}\BibitemShut {NoStop}%
\bibitem [{\citenamefont {Georgescu}\ \emph {et~al.}(2014)\citenamefont
  {Georgescu}, \citenamefont {Ashhab},\ and\ \citenamefont
  {Nori}}]{Georgescu14}%
  \BibitemOpen
  \bibfield  {author} {\bibinfo {author} {\bibfnamefont {I.~M.}\ \bibnamefont
  {Georgescu}}, \bibinfo {author} {\bibfnamefont {S.}~\bibnamefont {Ashhab}},\
  and\ \bibinfo {author} {\bibfnamefont {F.}~\bibnamefont {Nori}},\ }\href
  {https://doi.org/10.1103/RevModPhys.86.153} {\bibfield  {journal} {\bibinfo
  {journal} {Rev. Mod. Phys.}\ }\textbf {\bibinfo {volume} {86}},\ \bibinfo
  {pages} {153} (\bibinfo {year} {2014})}\BibitemShut {NoStop}%
\bibitem [{\citenamefont {Preskill}(2018)}]{Preskill18}%
  \BibitemOpen
  \bibfield  {author} {\bibinfo {author} {\bibfnamefont {J.}~\bibnamefont
  {Preskill}},\ }\href {https://doi.org/10.22331/q-2018-08-06-79} {\bibfield
  {journal} {\bibinfo  {journal} {{Quantum}}\ }\textbf {\bibinfo {volume}
  {2}},\ \bibinfo {pages} {79} (\bibinfo {year} {2018})}\BibitemShut {NoStop}%
\bibitem [{\citenamefont {Cirac}(2021)}]{Cirac21}%
  \BibitemOpen
  \bibfield  {author} {\bibinfo {author} {\bibfnamefont {J.~I.}\ \bibnamefont
  {Cirac}},\ }\href {https://doi.org/doi:10.1515/nanoph-2020-0351} {\bibfield
  {journal} {\bibinfo  {journal} {Nanophotonics}\ }\textbf {\bibinfo {volume}
  {10}},\ \bibinfo {pages} {453} (\bibinfo {year} {2021})}\BibitemShut
  {NoStop}%
\bibitem [{\citenamefont {Gottesman}(1996)}]{Gottesman96}%
  \BibitemOpen
  \bibfield  {author} {\bibinfo {author} {\bibfnamefont {D.}~\bibnamefont
  {Gottesman}},\ }\href {https://doi.org/10.1103/PhysRevA.54.1862} {\bibfield
  {journal} {\bibinfo  {journal} {Phys. Rev. A}\ }\textbf {\bibinfo {volume}
  {54}},\ \bibinfo {pages} {1862} (\bibinfo {year} {1996})}\BibitemShut
  {NoStop}%
\bibitem [{\citenamefont {Calderbank}\ \emph {et~al.}(1998)\citenamefont
  {Calderbank}, \citenamefont {Rains}, \citenamefont {Shor},\ and\
  \citenamefont {Sloane}}]{Calderbank98}%
  \BibitemOpen
  \bibfield  {author} {\bibinfo {author} {\bibfnamefont {A.}~\bibnamefont
  {Calderbank}}, \bibinfo {author} {\bibfnamefont {E.}~\bibnamefont {Rains}},
  \bibinfo {author} {\bibfnamefont {P.}~\bibnamefont {Shor}},\ and\ \bibinfo
  {author} {\bibfnamefont {N.}~\bibnamefont {Sloane}},\ }\href
  {https://doi.org/10.1109/18.681315} {\bibfield  {journal} {\bibinfo
  {journal} {IEEE Trans. Inf. Theory}\ }\textbf {\bibinfo {volume} {44}},\
  \bibinfo {pages} {1369} (\bibinfo {year} {1998})}\BibitemShut {NoStop}%
\bibitem [{\citenamefont {Kitaev}(2003)}]{Kitaev03}%
  \BibitemOpen
  \bibfield  {author} {\bibinfo {author} {\bibfnamefont {A.~Y.}\ \bibnamefont
  {Kitaev}},\ }\href {https://doi.org/10.1016/S0003-4916(02)00018-0} {\bibfield
   {journal} {\bibinfo  {journal} {Ann. Phys.}\ }\textbf {\bibinfo {volume}
  {303}},\ \bibinfo {pages} {2} (\bibinfo {year} {2003})}\BibitemShut {NoStop}%
\bibitem [{\citenamefont {Dennis}\ \emph {et~al.}(2002)\citenamefont {Dennis},
  \citenamefont {Kitaev}, \citenamefont {Landahl},\ and\ \citenamefont
  {Preskill}}]{Dennis02}%
  \BibitemOpen
  \bibfield  {author} {\bibinfo {author} {\bibfnamefont {E.}~\bibnamefont
  {Dennis}}, \bibinfo {author} {\bibfnamefont {A.}~\bibnamefont {Kitaev}},
  \bibinfo {author} {\bibfnamefont {A.}~\bibnamefont {Landahl}},\ and\ \bibinfo
  {author} {\bibfnamefont {J.}~\bibnamefont {Preskill}},\ }\href
  {https://doi.org/10.1063/1.1499754} {\bibfield  {journal} {\bibinfo
  {journal} {J. Math. Phys.}\ }\textbf {\bibinfo {volume} {43}},\ \bibinfo
  {pages} {4452} (\bibinfo {year} {2002})},\ \Eprint
  {https://arxiv.org/abs/https://doi.org/10.1063/1.1499754}
  {https://doi.org/10.1063/1.1499754} \BibitemShut {NoStop}%
\bibitem [{\citenamefont {Bombin}\ and\ \citenamefont
  {Martin-Delgado}(2006)}]{Bombin06}%
  \BibitemOpen
  \bibfield  {author} {\bibinfo {author} {\bibfnamefont {H.}~\bibnamefont
  {Bombin}}\ and\ \bibinfo {author} {\bibfnamefont {M.~A.}\ \bibnamefont
  {Martin-Delgado}},\ }\href {https://doi.org/10.1103/PhysRevLett.97.180501}
  {\bibfield  {journal} {\bibinfo  {journal} {Phys. Rev. Lett.}\ }\textbf
  {\bibinfo {volume} {97}},\ \bibinfo {pages} {180501} (\bibinfo {year}
  {2006})}\BibitemShut {NoStop}%
\bibitem [{\citenamefont {Raussendorf}\ and\ \citenamefont
  {Harrington}(2007)}]{Raussendorf07PRL}%
  \BibitemOpen
  \bibfield  {author} {\bibinfo {author} {\bibfnamefont {R.}~\bibnamefont
  {Raussendorf}}\ and\ \bibinfo {author} {\bibfnamefont {J.}~\bibnamefont
  {Harrington}},\ }\href {https://doi.org/10.1103/PhysRevLett.98.190504}
  {\bibfield  {journal} {\bibinfo  {journal} {Phys. Rev. Lett.}\ }\textbf
  {\bibinfo {volume} {98}},\ \bibinfo {pages} {190504} (\bibinfo {year}
  {2007})}\BibitemShut {NoStop}%
\bibitem [{\citenamefont {Fowler}\ \emph
  {et~al.}(2012{\natexlab{a}})\citenamefont {Fowler}, \citenamefont
  {Whiteside}, \citenamefont {McInnes},\ and\ \citenamefont
  {Rabbani}}]{Fowler12PRX}%
  \BibitemOpen
  \bibfield  {author} {\bibinfo {author} {\bibfnamefont {A.~G.}\ \bibnamefont
  {Fowler}}, \bibinfo {author} {\bibfnamefont {A.~C.}\ \bibnamefont
  {Whiteside}}, \bibinfo {author} {\bibfnamefont {A.~L.}\ \bibnamefont
  {McInnes}},\ and\ \bibinfo {author} {\bibfnamefont {A.}~\bibnamefont
  {Rabbani}},\ }\href {https://doi.org/10.1103/PhysRevX.2.041003} {\bibfield
  {journal} {\bibinfo  {journal} {Phys. Rev. X}\ }\textbf {\bibinfo {volume}
  {2}},\ \bibinfo {pages} {041003} (\bibinfo {year}
  {2012}{\natexlab{a}})}\BibitemShut {NoStop}%
\bibitem [{\citenamefont {Bravyi}\ and\ \citenamefont
  {Kitaev}(1998)}]{Bravyi98}%
  \BibitemOpen
  \bibfield  {author} {\bibinfo {author} {\bibfnamefont {S.~B.}\ \bibnamefont
  {Bravyi}}\ and\ \bibinfo {author} {\bibfnamefont {A.~Y.}\ \bibnamefont
  {Kitaev}},\ }\href {https://arxiv.org/abs/quant-ph/9811052} {\bibfield
  {journal} {\bibinfo  {journal} {arXiv:quant-ph/9811052}\ } (\bibinfo {year}
  {1998})}\BibitemShut {NoStop}%
\bibitem [{\citenamefont {Freedman}\ and\ \citenamefont
  {Meyer}(2001)}]{Freedman01}%
  \BibitemOpen
  \bibfield  {author} {\bibinfo {author} {\bibfnamefont {M.}~\bibnamefont
  {Freedman}}\ and\ \bibinfo {author} {\bibfnamefont {D.}~\bibnamefont
  {Meyer}},\ }\href {https://doi.org/10.1007/s102080010013} {\bibfield
  {journal} {\bibinfo  {journal} {Found. Comput. Math.}\ }\textbf {\bibinfo
  {volume} {1}},\ \bibinfo {pages} {325–332} (\bibinfo {year}
  {2001})}\BibitemShut {NoStop}%
\bibitem [{\citenamefont {Horsman}\ \emph {et~al.}(2012)\citenamefont
  {Horsman}, \citenamefont {Fowler}, \citenamefont {Devitt},\ and\
  \citenamefont {Meter}}]{Horsman12}%
  \BibitemOpen
  \bibfield  {author} {\bibinfo {author} {\bibfnamefont {C.}~\bibnamefont
  {Horsman}}, \bibinfo {author} {\bibfnamefont {A.~G.}\ \bibnamefont {Fowler}},
  \bibinfo {author} {\bibfnamefont {S.}~\bibnamefont {Devitt}},\ and\ \bibinfo
  {author} {\bibfnamefont {R.~V.}\ \bibnamefont {Meter}},\ }\href
  {https://doi.org/10.1088/1367-2630/14/12/123011} {\bibfield  {journal}
  {\bibinfo  {journal} {New J. Phys.}\ }\textbf {\bibinfo {volume} {14}},\
  \bibinfo {pages} {123011} (\bibinfo {year} {2012})}\BibitemShut {NoStop}%
\bibitem [{\citenamefont {Fowler}\ \emph
  {et~al.}(2012{\natexlab{b}})\citenamefont {Fowler}, \citenamefont
  {Mariantoni}, \citenamefont {Martinis},\ and\ \citenamefont
  {Cleland}}]{Fowler12}%
  \BibitemOpen
  \bibfield  {author} {\bibinfo {author} {\bibfnamefont {A.~G.}\ \bibnamefont
  {Fowler}}, \bibinfo {author} {\bibfnamefont {M.}~\bibnamefont {Mariantoni}},
  \bibinfo {author} {\bibfnamefont {J.~M.}\ \bibnamefont {Martinis}},\ and\
  \bibinfo {author} {\bibfnamefont {A.~N.}\ \bibnamefont {Cleland}},\ }\href
  {https://doi.org/10.1103/PhysRevA.86.032324} {\bibfield  {journal} {\bibinfo
  {journal} {Phys. Rev. A}\ }\textbf {\bibinfo {volume} {86}},\ \bibinfo
  {pages} {032324} (\bibinfo {year} {2012}{\natexlab{b}})}\BibitemShut
  {NoStop}%
\bibitem [{\citenamefont {Hill}\ \emph {et~al.}(2015)\citenamefont {Hill},
  \citenamefont {Peretz}, \citenamefont {Hile}, \citenamefont {House},
  \citenamefont {Fuechsle}, \citenamefont {Rogge}, \citenamefont {Simmons},\
  and\ \citenamefont {Hollenberg}}]{Hill15}%
  \BibitemOpen
  \bibfield  {author} {\bibinfo {author} {\bibfnamefont {C.~D.}\ \bibnamefont
  {Hill}}, \bibinfo {author} {\bibfnamefont {E.}~\bibnamefont {Peretz}},
  \bibinfo {author} {\bibfnamefont {S.~J.}\ \bibnamefont {Hile}}, \bibinfo
  {author} {\bibfnamefont {M.~G.}\ \bibnamefont {House}}, \bibinfo {author}
  {\bibfnamefont {M.}~\bibnamefont {Fuechsle}}, \bibinfo {author}
  {\bibfnamefont {S.}~\bibnamefont {Rogge}}, \bibinfo {author} {\bibfnamefont
  {M.~Y.}\ \bibnamefont {Simmons}},\ and\ \bibinfo {author} {\bibfnamefont
  {L.~C.~L.}\ \bibnamefont {Hollenberg}},\ }\href
  {https://doi.org/10.1126/sciadv.1500707} {\bibfield  {journal} {\bibinfo
  {journal} {Sci. Adv.}\ }\textbf {\bibinfo {volume} {1}},\ \bibinfo {pages}
  {e1500707} (\bibinfo {year} {2015})}\BibitemShut {NoStop}%
\bibitem [{\citenamefont {Versluis}\ \emph {et~al.}(2017)\citenamefont
  {Versluis}, \citenamefont {Poletto}, \citenamefont {Khammassi}, \citenamefont
  {Tarasinski}, \citenamefont {Haider}, \citenamefont {Michalak}, \citenamefont
  {Bruno}, \citenamefont {Bertels},\ and\ \citenamefont
  {DiCarlo}}]{Versluis17}%
  \BibitemOpen
  \bibfield  {author} {\bibinfo {author} {\bibfnamefont {R.}~\bibnamefont
  {Versluis}}, \bibinfo {author} {\bibfnamefont {S.}~\bibnamefont {Poletto}},
  \bibinfo {author} {\bibfnamefont {N.}~\bibnamefont {Khammassi}}, \bibinfo
  {author} {\bibfnamefont {B.}~\bibnamefont {Tarasinski}}, \bibinfo {author}
  {\bibfnamefont {N.}~\bibnamefont {Haider}}, \bibinfo {author} {\bibfnamefont
  {D.~J.}\ \bibnamefont {Michalak}}, \bibinfo {author} {\bibfnamefont
  {A.}~\bibnamefont {Bruno}}, \bibinfo {author} {\bibfnamefont
  {K.}~\bibnamefont {Bertels}},\ and\ \bibinfo {author} {\bibfnamefont
  {L.}~\bibnamefont {DiCarlo}},\ }\href
  {https://doi.org/10.1103/PhysRevApplied.8.034021} {\bibfield  {journal}
  {\bibinfo  {journal} {Phys. Rev. Applied}\ }\textbf {\bibinfo {volume} {8}},\
  \bibinfo {pages} {034021} (\bibinfo {year} {2017})}\BibitemShut {NoStop}%
\bibitem [{\citenamefont {Landau}\ \emph {et~al.}(2016)\citenamefont {Landau},
  \citenamefont {Plugge}, \citenamefont {Sela}, \citenamefont {Altland},
  \citenamefont {Albrecht},\ and\ \citenamefont {Egger}}]{Landau16}%
  \BibitemOpen
  \bibfield  {author} {\bibinfo {author} {\bibfnamefont {L.~A.}\ \bibnamefont
  {Landau}}, \bibinfo {author} {\bibfnamefont {S.}~\bibnamefont {Plugge}},
  \bibinfo {author} {\bibfnamefont {E.}~\bibnamefont {Sela}}, \bibinfo {author}
  {\bibfnamefont {A.}~\bibnamefont {Altland}}, \bibinfo {author} {\bibfnamefont
  {S.~M.}\ \bibnamefont {Albrecht}},\ and\ \bibinfo {author} {\bibfnamefont
  {R.}~\bibnamefont {Egger}},\ }\href
  {https://doi.org/10.1103/PhysRevLett.116.050501} {\bibfield  {journal}
  {\bibinfo  {journal} {Phys. Rev. Lett.}\ }\textbf {\bibinfo {volume} {116}},\
  \bibinfo {pages} {050501} (\bibinfo {year} {2016})}\BibitemShut {NoStop}%
\bibitem [{\citenamefont {Takita}\ \emph {et~al.}(2016)\citenamefont {Takita},
  \citenamefont {C\'orcoles}, \citenamefont {Magesan}, \citenamefont {Abdo},
  \citenamefont {Brink}, \citenamefont {Cross}, \citenamefont {Chow},\ and\
  \citenamefont {Gambetta}}]{Takita16}%
  \BibitemOpen
  \bibfield  {author} {\bibinfo {author} {\bibfnamefont {M.}~\bibnamefont
  {Takita}}, \bibinfo {author} {\bibfnamefont {A.~D.}\ \bibnamefont
  {C\'orcoles}}, \bibinfo {author} {\bibfnamefont {E.}~\bibnamefont {Magesan}},
  \bibinfo {author} {\bibfnamefont {B.}~\bibnamefont {Abdo}}, \bibinfo {author}
  {\bibfnamefont {M.}~\bibnamefont {Brink}}, \bibinfo {author} {\bibfnamefont
  {A.}~\bibnamefont {Cross}}, \bibinfo {author} {\bibfnamefont {J.~M.}\
  \bibnamefont {Chow}},\ and\ \bibinfo {author} {\bibfnamefont {J.~M.}\
  \bibnamefont {Gambetta}},\ }\href
  {https://doi.org/10.1103/PhysRevLett.117.210505} {\bibfield  {journal}
  {\bibinfo  {journal} {Phys. Rev. Lett.}\ }\textbf {\bibinfo {volume} {117}},\
  \bibinfo {pages} {210505} (\bibinfo {year} {2016})}\BibitemShut {NoStop}%
\bibitem [{\citenamefont {Yoder}\ and\ \citenamefont {Kim}(2017)}]{Yoder17}%
  \BibitemOpen
  \bibfield  {author} {\bibinfo {author} {\bibfnamefont {T.~J.}\ \bibnamefont
  {Yoder}}\ and\ \bibinfo {author} {\bibfnamefont {I.~H.}\ \bibnamefont
  {Kim}},\ }\href {https://doi.org/10.22331/q-2017-04-25-2} {\bibfield
  {journal} {\bibinfo  {journal} {{Quantum}}\ }\textbf {\bibinfo {volume}
  {1}},\ \bibinfo {pages} {2} (\bibinfo {year} {2017})}\BibitemShut {NoStop}%
\bibitem [{\citenamefont {Ataides}\ \emph {et~al.}(2021)\citenamefont
  {Ataides}, \citenamefont {Tuckett}, \citenamefont {Bartlett}, \citenamefont
  {Flammia},\ and\ \citenamefont {Brown}}]{Ataides21}%
  \BibitemOpen
  \bibfield  {author} {\bibinfo {author} {\bibfnamefont {J.~P.~B.}\
  \bibnamefont {Ataides}}, \bibinfo {author} {\bibfnamefont {D.~K.}\
  \bibnamefont {Tuckett}}, \bibinfo {author} {\bibfnamefont {S.~D.}\
  \bibnamefont {Bartlett}}, \bibinfo {author} {\bibfnamefont {S.~T.}\
  \bibnamefont {Flammia}},\ and\ \bibinfo {author} {\bibfnamefont {B.~J.}\
  \bibnamefont {Brown}},\ }\href {https://doi.org/10.1038/s41467-021-22274-1}
  {\bibfield  {journal} {\bibinfo  {journal} {Nat. Commun.}\ }\textbf {\bibinfo
  {volume} {12}},\ \bibinfo {pages} {2172} (\bibinfo {year}
  {2021})}\BibitemShut {NoStop}%
\bibitem [{\citenamefont {Camara}\ \emph {et~al.}(2007)\citenamefont {Camara},
  \citenamefont {Ollivier},\ and\ \citenamefont {Tillich}}]{Camara07}%
  \BibitemOpen
  \bibfield  {author} {\bibinfo {author} {\bibfnamefont {T.}~\bibnamefont
  {Camara}}, \bibinfo {author} {\bibfnamefont {H.}~\bibnamefont {Ollivier}},\
  and\ \bibinfo {author} {\bibfnamefont {J.-P.}\ \bibnamefont {Tillich}},\ }in\
  \href {https://doi.org/10.1109/ISIT.2007.4557324} {\emph {\bibinfo
  {booktitle} {2007 IEEE International Symposium on Information Theory}}}\
  (\bibinfo  {publisher} {IEEE},\ \bibinfo {address} {Nice, France},\ \bibinfo
  {year} {2007})\ pp.\ \bibinfo {pages} {811--815}\BibitemShut {NoStop}%
\bibitem [{\citenamefont {Gottesman}(2013)}]{Gottesman13arXiv}%
  \BibitemOpen
  \bibfield  {author} {\bibinfo {author} {\bibfnamefont {D.}~\bibnamefont
  {Gottesman}},\ }\Eprint {https://arxiv.org/abs/1310.2984} {arXiv:1310.2984
  [quant-ph]}  (\bibinfo {year} {2013})\BibitemShut {NoStop}%
\bibitem [{\citenamefont {Babar}\ \emph {et~al.}(2015)\citenamefont {Babar},
  \citenamefont {Botsinis}, \citenamefont {Alanis}, \citenamefont {Ng},\ and\
  \citenamefont {Hanzo}}]{Baber15}%
  \BibitemOpen
  \bibfield  {author} {\bibinfo {author} {\bibfnamefont {Z.}~\bibnamefont
  {Babar}}, \bibinfo {author} {\bibfnamefont {P.}~\bibnamefont {Botsinis}},
  \bibinfo {author} {\bibfnamefont {D.}~\bibnamefont {Alanis}}, \bibinfo
  {author} {\bibfnamefont {S.~X.}\ \bibnamefont {Ng}},\ and\ \bibinfo {author}
  {\bibfnamefont {L.}~\bibnamefont {Hanzo}},\ }\href
  {https://doi.org/10.1109/ACCESS.2015.2503267} {\bibfield  {journal} {\bibinfo
   {journal} {IEEE Access}\ }\textbf {\bibinfo {volume} {3}},\ \bibinfo {pages}
  {2492} (\bibinfo {year} {2015})}\BibitemShut {NoStop}%
\bibitem [{\citenamefont {Breuckmann}\ and\ \citenamefont
  {Eberhardt}(2021)}]{Breuckmann21}%
  \BibitemOpen
  \bibfield  {author} {\bibinfo {author} {\bibfnamefont {N.~P.}\ \bibnamefont
  {Breuckmann}}\ and\ \bibinfo {author} {\bibfnamefont {J.~N.}\ \bibnamefont
  {Eberhardt}},\ }\href {https://doi.org/10.1103/PRXQuantum.2.040101}
  {\bibfield  {journal} {\bibinfo  {journal} {PRX Quantum}\ }\textbf {\bibinfo
  {volume} {2}},\ \bibinfo {pages} {040101} (\bibinfo {year}
  {2021})}\BibitemShut {NoStop}%
\bibitem [{\citenamefont {Knill}(1996{\natexlab{a}})}]{Knill96_1}%
  \BibitemOpen
  \bibfield  {author} {\bibinfo {author} {\bibfnamefont {E.}~\bibnamefont
  {Knill}}} (\bibinfo {year} {1996}{\natexlab{a}}),\ \bibinfo {note} {{L}ANL
  report LAUR-96-2717}\BibitemShut {NoStop}%
\bibitem [{\citenamefont {Knill}(1996{\natexlab{b}})}]{Knill96_1arXiv}%
  \BibitemOpen
  \bibfield  {author} {\bibinfo {author} {\bibfnamefont {E.}~\bibnamefont
  {Knill}},\ }\Eprint {https://arxiv.org/abs/quant-ph/9608048}
  {arXiv:quant-ph/9608048}  (\bibinfo {year} {1996}{\natexlab{b}})\BibitemShut
  {NoStop}%
\bibitem [{\citenamefont {Knill}(1996{\natexlab{c}})}]{Knill96_2}%
  \BibitemOpen
  \bibfield  {author} {\bibinfo {author} {\bibfnamefont {E.}~\bibnamefont
  {Knill}}} (\bibinfo {year} {1996}{\natexlab{c}}),\ \bibinfo {note} {{L}ANL
  report LAUR-96-2807}\BibitemShut {NoStop}%
\bibitem [{\citenamefont {Knill}(1996{\natexlab{d}})}]{Knill96_2arXiv}%
  \BibitemOpen
  \bibfield  {author} {\bibinfo {author} {\bibfnamefont {E.}~\bibnamefont
  {Knill}},\ }\Eprint {https://arxiv.org/abs/quant-ph/9608049}
  {arXiv:quant-ph/9608049}  (\bibinfo {year} {1996}{\natexlab{d}}),\ \bibinfo
  {note} {{L}ANL report LAUR-96-2807}\BibitemShut {NoStop}%
\bibitem [{\citenamefont {Ashikhmin}\ and\ \citenamefont
  {Knill}(2001)}]{Ashikhmin01}%
  \BibitemOpen
  \bibfield  {author} {\bibinfo {author} {\bibfnamefont {A.}~\bibnamefont
  {Ashikhmin}}\ and\ \bibinfo {author} {\bibfnamefont {E.}~\bibnamefont
  {Knill}},\ }\href {https://doi.org/10.1109/18.959288} {\bibfield  {journal}
  {\bibinfo  {journal} {IEEE Trans. Inf. Theory}\ }\textbf {\bibinfo {volume}
  {47}},\ \bibinfo {pages} {3065} (\bibinfo {year} {2001})}\BibitemShut
  {NoStop}%
\bibitem [{\citenamefont {Ketkar}\ \emph {et~al.}(2006)\citenamefont {Ketkar},
  \citenamefont {Klappenecker}, \citenamefont {Kumar},\ and\ \citenamefont
  {Sarvepalli}}]{Ketkar06}%
  \BibitemOpen
  \bibfield  {author} {\bibinfo {author} {\bibfnamefont {A.}~\bibnamefont
  {Ketkar}}, \bibinfo {author} {\bibfnamefont {A.}~\bibnamefont
  {Klappenecker}}, \bibinfo {author} {\bibfnamefont {S.}~\bibnamefont
  {Kumar}},\ and\ \bibinfo {author} {\bibfnamefont {P.~K.}\ \bibnamefont
  {Sarvepalli}},\ }\href {https://doi.org/10.1109/TIT.2006.883612} {\bibfield
  {journal} {\bibinfo  {journal} {IEEE Trans. Inf. Theory}\ }\textbf {\bibinfo
  {volume} {52}},\ \bibinfo {pages} {4892} (\bibinfo {year}
  {2006})}\BibitemShut {NoStop}%
\bibitem [{\citenamefont {Nadkarni}\ and\ \citenamefont
  {Garani}(2021)}]{Nadkarni21}%
  \BibitemOpen
  \bibfield  {author} {\bibinfo {author} {\bibfnamefont {P.}~\bibnamefont
  {Nadkarni}}\ and\ \bibinfo {author} {\bibfnamefont {S.}~\bibnamefont
  {Garani}},\ }\href {https://doi.org/10.1007/s11128-021-03174-1} {\bibfield
  {journal} {\bibinfo  {journal} {Quantum Inf. Process.}\ }\textbf {\bibinfo
  {volume} {20}},\ \bibinfo {pages} {256} (\bibinfo {year} {2021})}\BibitemShut
  {NoStop}%
\bibitem [{\citenamefont {Ni}\ \emph {et~al.}(2015)\citenamefont {Ni},
  \citenamefont {Buerschaper},\ and\ \citenamefont {Van~den Nest}}]{Ni15}%
  \BibitemOpen
  \bibfield  {author} {\bibinfo {author} {\bibfnamefont {X.}~\bibnamefont
  {Ni}}, \bibinfo {author} {\bibfnamefont {O.}~\bibnamefont {Buerschaper}},\
  and\ \bibinfo {author} {\bibfnamefont {M.}~\bibnamefont {Van~den Nest}},\
  }\href {https://doi.org/10.1063/1.4920923} {\bibfield  {journal} {\bibinfo
  {journal} {J. Math. Phys.}\ }\textbf {\bibinfo {volume} {56}},\ \bibinfo
  {pages} {052201} (\bibinfo {year} {2015})}\BibitemShut {NoStop}%
\bibitem [{\citenamefont {Kribs}\ \emph
  {et~al.}(2005{\natexlab{a}})\citenamefont {Kribs}, \citenamefont {Laflamme},\
  and\ \citenamefont {Poulin}}]{Kribs05}%
  \BibitemOpen
  \bibfield  {author} {\bibinfo {author} {\bibfnamefont {D.}~\bibnamefont
  {Kribs}}, \bibinfo {author} {\bibfnamefont {R.}~\bibnamefont {Laflamme}},\
  and\ \bibinfo {author} {\bibfnamefont {D.}~\bibnamefont {Poulin}},\ }\href
  {https://doi.org/10.1103/PhysRevLett.94.180501} {\bibfield  {journal}
  {\bibinfo  {journal} {Phys. Rev. Lett.}\ }\textbf {\bibinfo {volume} {94}},\
  \bibinfo {pages} {180501} (\bibinfo {year} {2005}{\natexlab{a}})}\BibitemShut
  {NoStop}%
\bibitem [{\citenamefont {Poulin}(2005)}]{Poulin05}%
  \BibitemOpen
  \bibfield  {author} {\bibinfo {author} {\bibfnamefont {D.}~\bibnamefont
  {Poulin}},\ }\href {https://doi.org/10.1103/PhysRevLett.95.230504} {\bibfield
   {journal} {\bibinfo  {journal} {Phys. Rev. Lett.}\ }\textbf {\bibinfo
  {volume} {95}},\ \bibinfo {pages} {230504} (\bibinfo {year}
  {2005})}\BibitemShut {NoStop}%
\bibitem [{\citenamefont {Kribs}\ \emph {et~al.}(2006)\citenamefont {Kribs},
  \citenamefont {Laflamme}, \citenamefont {Poulin},\ and\ \citenamefont
  {Lesosky}}]{Kribs06}%
  \BibitemOpen
  \bibfield  {author} {\bibinfo {author} {\bibfnamefont {D.~W.}\ \bibnamefont
  {Kribs}}, \bibinfo {author} {\bibfnamefont {R.}~\bibnamefont {Laflamme}},
  \bibinfo {author} {\bibfnamefont {D.}~\bibnamefont {Poulin}},\ and\ \bibinfo
  {author} {\bibfnamefont {M.}~\bibnamefont {Lesosky}},\ }\href@noop {}
  {\bibfield  {journal} {\bibinfo  {journal} {Quant. Inf. \& Comp.}\ }\textbf
  {\bibinfo {volume} {6}},\ \bibinfo {pages} {383} (\bibinfo {year}
  {2006})}\BibitemShut {NoStop}%
\bibitem [{\citenamefont {Kribs}\ \emph
  {et~al.}(2005{\natexlab{b}})\citenamefont {Kribs}, \citenamefont {Laflamme},
  \citenamefont {Poulin},\ and\ \citenamefont {Lesosky}}]{Kribs06arXiv}%
  \BibitemOpen
  \bibfield  {author} {\bibinfo {author} {\bibfnamefont {D.~W.}\ \bibnamefont
  {Kribs}}, \bibinfo {author} {\bibfnamefont {R.}~\bibnamefont {Laflamme}},
  \bibinfo {author} {\bibfnamefont {D.}~\bibnamefont {Poulin}},\ and\ \bibinfo
  {author} {\bibfnamefont {M.}~\bibnamefont {Lesosky}},\ }\Eprint
  {https://arxiv.org/abs/quant-ph/0504189} {arXiv:quant-ph/0504189}  (\bibinfo
  {year} {2005}{\natexlab{b}})\BibitemShut {NoStop}%
\bibitem [{\citenamefont {Bacon}(2006)}]{Bacon06}%
  \BibitemOpen
  \bibfield  {author} {\bibinfo {author} {\bibfnamefont {D.}~\bibnamefont
  {Bacon}},\ }\href {https://doi.org/10.1103/PhysRevA.73.012340} {\bibfield
  {journal} {\bibinfo  {journal} {Phys. Rev. A}\ }\textbf {\bibinfo {volume}
  {73}},\ \bibinfo {pages} {012340} (\bibinfo {year} {2006})}\BibitemShut
  {NoStop}%
\bibitem [{\citenamefont {Aly}\ \emph {et~al.}(2006{\natexlab{a}})\citenamefont
  {Aly}, \citenamefont {Klappenecker},\ and\ \citenamefont
  {Sarvepalli}}]{Aly06}%
  \BibitemOpen
  \bibfield  {author} {\bibinfo {author} {\bibfnamefont {S.~A.}\ \bibnamefont
  {Aly}}, \bibinfo {author} {\bibfnamefont {A.}~\bibnamefont {Klappenecker}},\
  and\ \bibinfo {author} {\bibfnamefont {P.~K.}\ \bibnamefont {Sarvepalli}},\
  }in\ \href@noop {} {\emph {\bibinfo {booktitle} {44th Annual Allerton
  Conference on Communication, Control, and Computing}}}\ (\bibinfo
  {publisher} {University of Illinois Urbana-Champaign},\ \bibinfo {address}
  {Monticello, Illinois},\ \bibinfo {year} {2006})\BibitemShut {NoStop}%
\bibitem [{\citenamefont {Aly}\ \emph {et~al.}(2006{\natexlab{b}})\citenamefont
  {Aly}, \citenamefont {Klappenecker},\ and\ \citenamefont
  {Sarvepalli}}]{Aly06arXiv}%
  \BibitemOpen
  \bibfield  {author} {\bibinfo {author} {\bibfnamefont {S.~A.}\ \bibnamefont
  {Aly}}, \bibinfo {author} {\bibfnamefont {A.}~\bibnamefont {Klappenecker}},\
  and\ \bibinfo {author} {\bibfnamefont {P.~K.}\ \bibnamefont {Sarvepalli}},\
  }\Eprint {https://arxiv.org/abs/quant-ph/0610153} {arXiv:quant-ph/0610153}
  (\bibinfo {year} {2006}{\natexlab{b}})\BibitemShut {NoStop}%
\bibitem [{\citenamefont {Aliferis}\ and\ \citenamefont
  {Cross}(2007)}]{Aliferis07}%
  \BibitemOpen
  \bibfield  {author} {\bibinfo {author} {\bibfnamefont {P.}~\bibnamefont
  {Aliferis}}\ and\ \bibinfo {author} {\bibfnamefont {A.~W.}\ \bibnamefont
  {Cross}},\ }\href {https://doi.org/10.1103/PhysRevLett.98.220502} {\bibfield
  {journal} {\bibinfo  {journal} {Phys. Rev. Lett.}\ }\textbf {\bibinfo
  {volume} {98}},\ \bibinfo {pages} {220502} (\bibinfo {year}
  {2007})}\BibitemShut {NoStop}%
\bibitem [{\citenamefont {Higgott}\ and\ \citenamefont
  {Breuckmann}(2021)}]{Higgott21}%
  \BibitemOpen
  \bibfield  {author} {\bibinfo {author} {\bibfnamefont {O.}~\bibnamefont
  {Higgott}}\ and\ \bibinfo {author} {\bibfnamefont {N.~P.}\ \bibnamefont
  {Breuckmann}},\ }\href {https://doi.org/10.1103/PhysRevX.11.031039}
  {\bibfield  {journal} {\bibinfo  {journal} {Phys. Rev. X}\ }\textbf {\bibinfo
  {volume} {11}},\ \bibinfo {pages} {031039} (\bibinfo {year}
  {2021})}\BibitemShut {NoStop}%
\bibitem [{\citenamefont {Knill}\ and\ \citenamefont
  {Laflamme}(1996)}]{Knill96}%
  \BibitemOpen
  \bibfield  {author} {\bibinfo {author} {\bibfnamefont {E.}~\bibnamefont
  {Knill}}\ and\ \bibinfo {author} {\bibfnamefont {R.}~\bibnamefont
  {Laflamme}},\ }\href {https://arxiv.org/abs/quant-ph/9608012} {\bibfield
  {journal} {\bibinfo  {journal} {arXiv:quant-ph/9608012}\ } (\bibinfo {year}
  {1996})}\BibitemShut {NoStop}%
\bibitem [{\citenamefont {Roman}(2008)}]{Roman}%
  \BibitemOpen
  \bibfield  {author} {\bibinfo {author} {\bibfnamefont {S.}~\bibnamefont
  {Roman}},\ }\href@noop {} {\emph {\bibinfo {title} {Advanced Linear
  Algebra}}},\ \bibinfo {edition} {3rd}\ ed.\ (\bibinfo  {publisher} {Springer,
  New York},\ \bibinfo {year} {2008})\BibitemShut {NoStop}%
\bibitem [{\citenamefont {Loomis}\ and\ \citenamefont
  {Sternberg}(1990)}]{Loomis}%
  \BibitemOpen
  \bibfield  {author} {\bibinfo {author} {\bibfnamefont {L.~H.}\ \bibnamefont
  {Loomis}}\ and\ \bibinfo {author} {\bibfnamefont {S.}~\bibnamefont
  {Sternberg}},\ }\href@noop {} {\emph {\bibinfo {title} {Advanced
  Calculus}}},\ \bibinfo {edition} {rev.}\ ed.\ (\bibinfo  {publisher} {Jones
  and Bartlett Publishers, Boston},\ \bibinfo {year} {1990})\BibitemShut
  {NoStop}%
\bibitem [{\citenamefont {Kao}(2020{\natexlab{a}})}]{Kaophd}%
  \BibitemOpen
  \bibfield  {author} {\bibinfo {author} {\bibfnamefont {J.-Y.}\ \bibnamefont
  {Kao}},\ }\emph {\bibinfo {title} {Supremum of Entanglement Measure for
  Symmetric Gaussian States, and Entangling Capacity}},\ \href@noop {} {Ph.D.
  thesis},\ \bibinfo  {school} {National Cheng Kung University} (\bibinfo
  {year} {2020}{\natexlab{a}})\BibitemShut {NoStop}%
\bibitem [{\citenamefont {Kao}(2020{\natexlab{b}})}]{KaophdarXiv}%
  \BibitemOpen
  \bibfield  {author} {\bibinfo {author} {\bibfnamefont {J.-Y.}\ \bibnamefont
  {Kao}},\ }\Eprint {https://arxiv.org/abs/2008.03893} {arXiv:2008.03893
  [quant-ph]}  (\bibinfo {year} {2020}{\natexlab{b}})\BibitemShut {NoStop}%
\bibitem [{\citenamefont {Preskill}(2022)}]{Preskill}%
  \BibitemOpen
  \bibfield  {author} {\bibinfo {author} {\bibfnamefont {J.}~\bibnamefont
  {Preskill}},\ }\href@noop {} {\bibinfo {title} {Lecture notes for physics
  219/computer science 219}},\ \bibinfo {howpublished}
  {\url{http://theory.caltech.edu/~preskill/ph229/}} (\bibinfo {year}
  {2022})\BibitemShut {NoStop}%
\bibitem [{\citenamefont {Chen}\ \emph {et~al.}(2008)\citenamefont {Chen},
  \citenamefont {Zeng},\ and\ \citenamefont {Chuang}}]{Chen08PRA}%
  \BibitemOpen
  \bibfield  {author} {\bibinfo {author} {\bibfnamefont {X.}~\bibnamefont
  {Chen}}, \bibinfo {author} {\bibfnamefont {B.}~\bibnamefont {Zeng}},\ and\
  \bibinfo {author} {\bibfnamefont {I.~L.}\ \bibnamefont {Chuang}},\ }\href
  {https://doi.org/10.1103/PhysRevA.78.062315} {\bibfield  {journal} {\bibinfo
  {journal} {Phys. Rev. A}\ }\textbf {\bibinfo {volume} {78}},\ \bibinfo
  {pages} {062315} (\bibinfo {year} {2008})}\BibitemShut {NoStop}%
\bibitem [{\citenamefont {Cross}\ \emph {et~al.}(2009)\citenamefont {Cross},
  \citenamefont {Smith}, \citenamefont {Smolin},\ and\ \citenamefont
  {Zeng}}]{Cross09}%
  \BibitemOpen
  \bibfield  {author} {\bibinfo {author} {\bibfnamefont {A.}~\bibnamefont
  {Cross}}, \bibinfo {author} {\bibfnamefont {G.}~\bibnamefont {Smith}},
  \bibinfo {author} {\bibfnamefont {J.~A.}\ \bibnamefont {Smolin}},\ and\
  \bibinfo {author} {\bibfnamefont {B.}~\bibnamefont {Zeng}},\ }\href
  {https://doi.org/10.1109/TIT.2008.2008136} {\bibfield  {journal} {\bibinfo
  {journal} {IEEE Trans. Inf. Theory}\ }\textbf {\bibinfo {volume} {55}},\
  \bibinfo {pages} {433} (\bibinfo {year} {2009})}\BibitemShut {NoStop}%
\bibitem [{\citenamefont {Hall}(2015)}]{Hall}%
  \BibitemOpen
  \bibfield  {author} {\bibinfo {author} {\bibfnamefont {B.~C.}\ \bibnamefont
  {Hall}},\ }\href@noop {} {\emph {\bibinfo {title} {Lie Groups, Lie Algebras,
  and Representations: An Elementary Introduction}}},\ \bibinfo {edition}
  {2nd}\ ed.\ (\bibinfo  {publisher} {Springer, Cham},\ \bibinfo {year}
  {2015})\BibitemShut {NoStop}%
\bibitem [{\citenamefont {Landsman}(1998)}]{Landsman}%
  \BibitemOpen
  \bibfield  {author} {\bibinfo {author} {\bibfnamefont {N.}~\bibnamefont
  {Landsman}},\ }\href {https://arxiv.org/abs/math-ph/9807030} {\bibfield
  {journal} {\bibinfo  {journal} {arXiv:math-ph/9807030}\ } (\bibinfo {year}
  {1998})}\BibitemShut {NoStop}%
\bibitem [{\citenamefont {Putnam}(2019)}]{Putnam19lecture}%
  \BibitemOpen
  \bibfield  {author} {\bibinfo {author} {\bibfnamefont {I.~F.}\ \bibnamefont
  {Putnam}},\ }\href@noop {} {\bibinfo {title} {Lecture notes on
  c*-algebras}},\ \bibinfo {howpublished}
  {\url{https://www.math.uvic.ca/faculty/putnam/ln/C*-algebras.pdf}} (\bibinfo
  {year} {2019})\BibitemShut {NoStop}%
\bibitem [{\citenamefont {Blackadar}(2006)}]{Blackadar}%
  \BibitemOpen
  \bibfield  {author} {\bibinfo {author} {\bibfnamefont {B.}~\bibnamefont
  {Blackadar}},\ }\href@noop {} {\emph {\bibinfo {title} {Operator Algebras:
  Theory of $C^\ast$-Algebras and von Neumann Algebras}}}\ (\bibinfo
  {publisher} {Springer-Verlag, Berlin/Heidelberg},\ \bibinfo {year}
  {2006})\BibitemShut {NoStop}%
\bibitem [{\citenamefont {Hall}(2013)}]{HallQ}%
  \BibitemOpen
  \bibfield  {author} {\bibinfo {author} {\bibfnamefont {B.~C.}\ \bibnamefont
  {Hall}},\ }\href@noop {} {\emph {\bibinfo {title} {Quantum Theory for
  Mathematician}}}\ (\bibinfo  {publisher} {Springer, New York},\ \bibinfo
  {year} {2013})\BibitemShut {NoStop}%
\bibitem [{\citenamefont {Knill}\ and\ \citenamefont
  {Laflamme}(1997)}]{Knill97}%
  \BibitemOpen
  \bibfield  {author} {\bibinfo {author} {\bibfnamefont {E.}~\bibnamefont
  {Knill}}\ and\ \bibinfo {author} {\bibfnamefont {R.}~\bibnamefont
  {Laflamme}},\ }\href {https://doi.org/10.1103/PhysRevA.55.900} {\bibfield
  {journal} {\bibinfo  {journal} {Phys. Rev. A}\ }\textbf {\bibinfo {volume}
  {55}},\ \bibinfo {pages} {900} (\bibinfo {year} {1997})}\BibitemShut
  {NoStop}%
\bibitem [{\citenamefont {Roman}(2010)}]{RomanG}%
  \BibitemOpen
  \bibfield  {author} {\bibinfo {author} {\bibfnamefont {S.}~\bibnamefont
  {Roman}},\ }\href@noop {} {\emph {\bibinfo {title} {Fundamentals of Group
  Theory: An Advanced Approach}}}\ (\bibinfo  {publisher} {Birkhauser,
  Boston},\ \bibinfo {year} {2010})\BibitemShut {NoStop}%
\bibitem [{\citenamefont {Gottesman}(1998)}]{Gottesman98}%
  \BibitemOpen
  \bibfield  {author} {\bibinfo {author} {\bibfnamefont {D.}~\bibnamefont
  {Gottesman}},\ }\href {https://doi.org/10.1103/PhysRevA.57.127} {\bibfield
  {journal} {\bibinfo  {journal} {Phys. Rev. A}\ }\textbf {\bibinfo {volume}
  {57}},\ \bibinfo {pages} {127} (\bibinfo {year} {1998})}\BibitemShut
  {NoStop}%
\bibitem [{\citenamefont {Axler}(2015)}]{Axler}%
  \BibitemOpen
  \bibfield  {author} {\bibinfo {author} {\bibfnamefont {S.}~\bibnamefont
  {Axler}},\ }\href@noop {} {\emph {\bibinfo {title} {Linear Algebra Done
  Right}}},\ \bibinfo {edition} {3rd}\ ed.\ (\bibinfo  {publisher} {Springer,
  Cham},\ \bibinfo {year} {2015})\BibitemShut {NoStop}%
\bibitem [{\citenamefont {Di}\ and\ \citenamefont {Wei}(2013)}]{Di13}%
  \BibitemOpen
  \bibfield  {author} {\bibinfo {author} {\bibfnamefont {Y.-M.}\ \bibnamefont
  {Di}}\ and\ \bibinfo {author} {\bibfnamefont {H.-R.}\ \bibnamefont {Wei}},\
  }\href {https://doi.org/10.1103/PhysRevA.87.012325} {\bibfield  {journal}
  {\bibinfo  {journal} {Phys. Rev. A}\ }\textbf {\bibinfo {volume} {87}},\
  \bibinfo {pages} {012325} (\bibinfo {year} {2013})}\BibitemShut {NoStop}%
\bibitem [{\citenamefont {Di}\ and\ \citenamefont {Wei}(2015)}]{Di15}%
  \BibitemOpen
  \bibfield  {author} {\bibinfo {author} {\bibfnamefont {Y.-M.}\ \bibnamefont
  {Di}}\ and\ \bibinfo {author} {\bibfnamefont {H.-R.}\ \bibnamefont {Wei}},\
  }\href {https://doi.org/10.1103/PhysRevA.92.062317} {\bibfield  {journal}
  {\bibinfo  {journal} {Phys. Rev. A}\ }\textbf {\bibinfo {volume} {92}},\
  \bibinfo {pages} {062317} (\bibinfo {year} {2015})}\BibitemShut {NoStop}%
\bibitem [{\citenamefont {Pavlidis}\ and\ \citenamefont
  {Floratos}(2021)}]{Pavlidis21}%
  \BibitemOpen
  \bibfield  {author} {\bibinfo {author} {\bibfnamefont {A.}~\bibnamefont
  {Pavlidis}}\ and\ \bibinfo {author} {\bibfnamefont {E.}~\bibnamefont
  {Floratos}},\ }\href {https://doi.org/10.1103/PhysRevA.103.032417} {\bibfield
   {journal} {\bibinfo  {journal} {Phys. Rev. A}\ }\textbf {\bibinfo {volume}
  {103}},\ \bibinfo {pages} {032417} (\bibinfo {year} {2021})}\BibitemShut
  {NoStop}%
\bibitem [{\citenamefont {Saha}\ \emph {et~al.}(2022)\citenamefont {Saha},
  \citenamefont {Majumdar}, \citenamefont {Saha}, \citenamefont {Chakrabarti},\
  and\ \citenamefont {Sur-Kolay}}]{Saha22}%
  \BibitemOpen
  \bibfield  {author} {\bibinfo {author} {\bibfnamefont {A.}~\bibnamefont
  {Saha}}, \bibinfo {author} {\bibfnamefont {R.}~\bibnamefont {Majumdar}},
  \bibinfo {author} {\bibfnamefont {D.}~\bibnamefont {Saha}}, \bibinfo {author}
  {\bibfnamefont {A.}~\bibnamefont {Chakrabarti}},\ and\ \bibinfo {author}
  {\bibfnamefont {S.}~\bibnamefont {Sur-Kolay}},\ }\href
  {https://doi.org/10.1103/PhysRevA.105.062453} {\bibfield  {journal} {\bibinfo
   {journal} {Phys. Rev. A}\ }\textbf {\bibinfo {volume} {105}},\ \bibinfo
  {pages} {062453} (\bibinfo {year} {2022})}\BibitemShut {NoStop}%
\bibitem [{\citenamefont {Barenco}\ \emph {et~al.}(1995)\citenamefont
  {Barenco}, \citenamefont {Bennett}, \citenamefont {Cleve}, \citenamefont
  {DiVincenzo}, \citenamefont {Margolus}, \citenamefont {Shor}, \citenamefont
  {Sleator}, \citenamefont {Smolin},\ and\ \citenamefont
  {Weinfurter}}]{Barenco95}%
  \BibitemOpen
  \bibfield  {author} {\bibinfo {author} {\bibfnamefont {A.}~\bibnamefont
  {Barenco}}, \bibinfo {author} {\bibfnamefont {C.~H.}\ \bibnamefont
  {Bennett}}, \bibinfo {author} {\bibfnamefont {R.}~\bibnamefont {Cleve}},
  \bibinfo {author} {\bibfnamefont {D.~P.}\ \bibnamefont {DiVincenzo}},
  \bibinfo {author} {\bibfnamefont {N.}~\bibnamefont {Margolus}}, \bibinfo
  {author} {\bibfnamefont {P.}~\bibnamefont {Shor}}, \bibinfo {author}
  {\bibfnamefont {T.}~\bibnamefont {Sleator}}, \bibinfo {author} {\bibfnamefont
  {J.~A.}\ \bibnamefont {Smolin}},\ and\ \bibinfo {author} {\bibfnamefont
  {H.}~\bibnamefont {Weinfurter}},\ }\href
  {https://doi.org/10.1103/PhysRevA.52.3457} {\bibfield  {journal} {\bibinfo
  {journal} {Phys. Rev. A}\ }\textbf {\bibinfo {volume} {52}},\ \bibinfo
  {pages} {3457} (\bibinfo {year} {1995})}\BibitemShut {NoStop}%
\bibitem [{\citenamefont {Liu}\ \emph {et~al.}(2008)\citenamefont {Liu},
  \citenamefont {Long},\ and\ \citenamefont {Sun}}]{Liu08}%
  \BibitemOpen
  \bibfield  {author} {\bibinfo {author} {\bibfnamefont {Y.}~\bibnamefont
  {Liu}}, \bibinfo {author} {\bibfnamefont {G.~L.}\ \bibnamefont {Long}},\ and\
  \bibinfo {author} {\bibfnamefont {Y.}~\bibnamefont {Sun}},\ }\href
  {https://doi.org/10.1142/S0219749908003621} {\bibfield  {journal} {\bibinfo
  {journal} {Int. J. Quantum Inf.}\ }\textbf {\bibinfo {volume} {6}},\ \bibinfo
  {pages} {447} (\bibinfo {year} {2008})}\BibitemShut {NoStop}%
\bibitem [{\citenamefont {Kribs}\ and\ \citenamefont
  {Spekkens}(2006)}]{Kribs06PRA}%
  \BibitemOpen
  \bibfield  {author} {\bibinfo {author} {\bibfnamefont {D.~W.}\ \bibnamefont
  {Kribs}}\ and\ \bibinfo {author} {\bibfnamefont {R.~W.}\ \bibnamefont
  {Spekkens}},\ }\href {https://doi.org/10.1103/PhysRevA.74.042329} {\bibfield
  {journal} {\bibinfo  {journal} {Phys. Rev. A}\ }\textbf {\bibinfo {volume}
  {74}},\ \bibinfo {pages} {042329} (\bibinfo {year} {2006})}\BibitemShut
  {NoStop}%
\bibitem [{\citenamefont {Grassl}\ \emph {et~al.}(2009)\citenamefont {Grassl},
  \citenamefont {Shor}, \citenamefont {Smith}, \citenamefont {Smolin},\ and\
  \citenamefont {Zeng}}]{Grassl09}%
  \BibitemOpen
  \bibfield  {author} {\bibinfo {author} {\bibfnamefont {M.}~\bibnamefont
  {Grassl}}, \bibinfo {author} {\bibfnamefont {P.}~\bibnamefont {Shor}},
  \bibinfo {author} {\bibfnamefont {G.}~\bibnamefont {Smith}}, \bibinfo
  {author} {\bibfnamefont {J.}~\bibnamefont {Smolin}},\ and\ \bibinfo {author}
  {\bibfnamefont {B.}~\bibnamefont {Zeng}},\ }\href
  {https://doi.org/10.1103/PhysRevA.79.050306} {\bibfield  {journal} {\bibinfo
  {journal} {Phys. Rev. A}\ }\textbf {\bibinfo {volume} {79}},\ \bibinfo
  {pages} {050306} (\bibinfo {year} {2009})}\BibitemShut {NoStop}%
\bibitem [{\citenamefont {Wang}\ \emph {et~al.}(2013)\citenamefont {Wang},
  \citenamefont {Zeng}, \citenamefont {Grassl},\ and\ \citenamefont
  {Sanders}}]{Wang13}%
  \BibitemOpen
  \bibfield  {author} {\bibinfo {author} {\bibfnamefont {Y.-J.}\ \bibnamefont
  {Wang}}, \bibinfo {author} {\bibfnamefont {B.}~\bibnamefont {Zeng}}, \bibinfo
  {author} {\bibfnamefont {M.}~\bibnamefont {Grassl}},\ and\ \bibinfo {author}
  {\bibfnamefont {B.~C.}\ \bibnamefont {Sanders}},\ }in\ \href
  {https://doi.org/10.1109/ISIT.2013.6620282} {\emph {\bibinfo {booktitle}
  {2013 IEEE International Symposium on Information Theory}}}\ (\bibinfo
  {publisher} {IEEE},\ \bibinfo {address} {Istanbul, T\"urkiye},\ \bibinfo
  {year} {2013})\ pp.\ \bibinfo {pages} {529--533}\BibitemShut {NoStop}%
\bibitem [{\citenamefont {G\"{u}nther}\ \emph {et~al.}(2021)\citenamefont
  {G\"{u}nther}, \citenamefont {Tacchino}, \citenamefont {Wootton},
  \citenamefont {Tavernelli},\ and\ \citenamefont {Barkoutsos}}]{Gunther22}%
  \BibitemOpen
  \bibfield  {author} {\bibinfo {author} {\bibfnamefont {J.~M.}\ \bibnamefont
  {G\"{u}nther}}, \bibinfo {author} {\bibfnamefont {F.}~\bibnamefont
  {Tacchino}}, \bibinfo {author} {\bibfnamefont {J.~R.}\ \bibnamefont
  {Wootton}}, \bibinfo {author} {\bibfnamefont {I.}~\bibnamefont
  {Tavernelli}},\ and\ \bibinfo {author} {\bibfnamefont {P.~K.}\ \bibnamefont
  {Barkoutsos}},\ }\href {https://doi.org/10.1088/2058-9565/ac3386} {\bibfield
  {journal} {\bibinfo  {journal} {Quantum Sci. Technol.}\ }\textbf {\bibinfo
  {volume} {7}},\ \bibinfo {pages} {015009} (\bibinfo {year}
  {2021})}\BibitemShut {NoStop}%
\bibitem [{\citenamefont {Michael}\ \emph {et~al.}(2016)\citenamefont
  {Michael}, \citenamefont {Silveri}, \citenamefont {Brierley}, \citenamefont
  {Albert}, \citenamefont {Salmilehto}, \citenamefont {Jiang},\ and\
  \citenamefont {Girvin}}]{Michael16}%
  \BibitemOpen
  \bibfield  {author} {\bibinfo {author} {\bibfnamefont {M.~H.}\ \bibnamefont
  {Michael}}, \bibinfo {author} {\bibfnamefont {M.}~\bibnamefont {Silveri}},
  \bibinfo {author} {\bibfnamefont {R.~T.}\ \bibnamefont {Brierley}}, \bibinfo
  {author} {\bibfnamefont {V.~V.}\ \bibnamefont {Albert}}, \bibinfo {author}
  {\bibfnamefont {J.}~\bibnamefont {Salmilehto}}, \bibinfo {author}
  {\bibfnamefont {L.}~\bibnamefont {Jiang}},\ and\ \bibinfo {author}
  {\bibfnamefont {S.~M.}\ \bibnamefont {Girvin}},\ }\href
  {https://doi.org/10.1103/PhysRevX.6.031006} {\bibfield  {journal} {\bibinfo
  {journal} {Phys. Rev. X}\ }\textbf {\bibinfo {volume} {6}},\ \bibinfo {pages}
  {031006} (\bibinfo {year} {2016})}\BibitemShut {NoStop}%
\bibitem [{\citenamefont {Hu}\ \emph {et~al.}(2019)\citenamefont {Hu},
  \citenamefont {Ma}, \citenamefont {Cai}, \citenamefont {Mu}, \citenamefont
  {Xu}, \citenamefont {Wang}, \citenamefont {Wu}, \citenamefont {Wang},
  \citenamefont {Song}, \citenamefont {Zou} \emph {et~al.}}]{Hu19}%
  \BibitemOpen
  \bibfield  {author} {\bibinfo {author} {\bibfnamefont {L.}~\bibnamefont
  {Hu}}, \bibinfo {author} {\bibfnamefont {Y.}~\bibnamefont {Ma}}, \bibinfo
  {author} {\bibfnamefont {W.}~\bibnamefont {Cai}}, \bibinfo {author}
  {\bibfnamefont {X.}~\bibnamefont {Mu}}, \bibinfo {author} {\bibfnamefont
  {Y.}~\bibnamefont {Xu}}, \bibinfo {author} {\bibfnamefont {W.}~\bibnamefont
  {Wang}}, \bibinfo {author} {\bibfnamefont {Y.}~\bibnamefont {Wu}}, \bibinfo
  {author} {\bibfnamefont {H.}~\bibnamefont {Wang}}, \bibinfo {author}
  {\bibfnamefont {Y.}~\bibnamefont {Song}}, \bibinfo {author} {\bibfnamefont
  {C.-L.}\ \bibnamefont {Zou}}, \emph {et~al.},\ }\href
  {https://doi.org/10.1038/s41567-018-0414-3} {\bibfield  {journal} {\bibinfo
  {journal} {Nat. Phys.}\ }\textbf {\bibinfo {volume} {15}},\ \bibinfo {pages}
  {503} (\bibinfo {year} {2019})}\BibitemShut {NoStop}%
\bibitem [{\citenamefont {Terhal}\ \emph {et~al.}(2020)\citenamefont {Terhal},
  \citenamefont {Conrad},\ and\ \citenamefont {Vuillot}}]{Terhal20}%
  \BibitemOpen
  \bibfield  {author} {\bibinfo {author} {\bibfnamefont {B.~M.}\ \bibnamefont
  {Terhal}}, \bibinfo {author} {\bibfnamefont {J.}~\bibnamefont {Conrad}},\
  and\ \bibinfo {author} {\bibfnamefont {C.}~\bibnamefont {Vuillot}},\ }\href
  {https://doi.org/10.1088/2058-9565/ab98a5} {\bibfield  {journal} {\bibinfo
  {journal} {Quantum Sci. Technol.}\ }\textbf {\bibinfo {volume} {5}},\
  \bibinfo {pages} {043001} (\bibinfo {year} {2020})}\BibitemShut {NoStop}%
\bibitem [{\citenamefont {Sun}\ \emph {et~al.}(2014)\citenamefont {Sun},
  \citenamefont {Petrenko}, \citenamefont {Leghtas}, \citenamefont {Vlastakis},
  \citenamefont {Kirchmair}, \citenamefont {Sliwa}, \citenamefont {Narla},
  \citenamefont {Hatridge}, \citenamefont {Shankar}, \citenamefont {Blumoff}
  \emph {et~al.}}]{Sun14}%
  \BibitemOpen
  \bibfield  {author} {\bibinfo {author} {\bibfnamefont {L.}~\bibnamefont
  {Sun}}, \bibinfo {author} {\bibfnamefont {A.}~\bibnamefont {Petrenko}},
  \bibinfo {author} {\bibfnamefont {Z.}~\bibnamefont {Leghtas}}, \bibinfo
  {author} {\bibfnamefont {B.}~\bibnamefont {Vlastakis}}, \bibinfo {author}
  {\bibfnamefont {G.}~\bibnamefont {Kirchmair}}, \bibinfo {author}
  {\bibfnamefont {K.}~\bibnamefont {Sliwa}}, \bibinfo {author} {\bibfnamefont
  {A.}~\bibnamefont {Narla}}, \bibinfo {author} {\bibfnamefont
  {M.}~\bibnamefont {Hatridge}}, \bibinfo {author} {\bibfnamefont
  {S.}~\bibnamefont {Shankar}}, \bibinfo {author} {\bibfnamefont
  {J.}~\bibnamefont {Blumoff}}, \emph {et~al.},\ }\href
  {https://doi.org/10.1038/nature13436} {\bibfield  {journal} {\bibinfo
  {journal} {Nature}\ }\textbf {\bibinfo {volume} {511}},\ \bibinfo {pages}
  {444} (\bibinfo {year} {2014})}\BibitemShut {NoStop}%
\bibitem [{\citenamefont {Vlastakis}\ \emph {et~al.}(2013)\citenamefont
  {Vlastakis}, \citenamefont {Kirchmair}, \citenamefont {Leghtas},
  \citenamefont {Nigg}, \citenamefont {Frunzio}, \citenamefont {Girvin},
  \citenamefont {Mirrahimi}, \citenamefont {Devoret},\ and\ \citenamefont
  {Schoelkopf}}]{Vlastakis13}%
  \BibitemOpen
  \bibfield  {author} {\bibinfo {author} {\bibfnamefont {B.}~\bibnamefont
  {Vlastakis}}, \bibinfo {author} {\bibfnamefont {G.}~\bibnamefont
  {Kirchmair}}, \bibinfo {author} {\bibfnamefont {Z.}~\bibnamefont {Leghtas}},
  \bibinfo {author} {\bibfnamefont {S.~E.}\ \bibnamefont {Nigg}}, \bibinfo
  {author} {\bibfnamefont {L.}~\bibnamefont {Frunzio}}, \bibinfo {author}
  {\bibfnamefont {S.~M.}\ \bibnamefont {Girvin}}, \bibinfo {author}
  {\bibfnamefont {M.}~\bibnamefont {Mirrahimi}}, \bibinfo {author}
  {\bibfnamefont {M.~H.}\ \bibnamefont {Devoret}},\ and\ \bibinfo {author}
  {\bibfnamefont {R.~J.}\ \bibnamefont {Schoelkopf}},\ }\href
  {https://doi.org/10.1126/science.1243289} {\bibfield  {journal} {\bibinfo
  {journal} {Science}\ }\textbf {\bibinfo {volume} {342}},\ \bibinfo {pages}
  {607} (\bibinfo {year} {2013})}\BibitemShut {NoStop}%
\bibitem [{\citenamefont {Chuang}\ \emph {et~al.}(1997)\citenamefont {Chuang},
  \citenamefont {Leung},\ and\ \citenamefont {Yamamoto}}]{Chuang97}%
  \BibitemOpen
  \bibfield  {author} {\bibinfo {author} {\bibfnamefont {I.~L.}\ \bibnamefont
  {Chuang}}, \bibinfo {author} {\bibfnamefont {D.~W.}\ \bibnamefont {Leung}},\
  and\ \bibinfo {author} {\bibfnamefont {Y.}~\bibnamefont {Yamamoto}},\ }\href
  {https://doi.org/10.1103/PhysRevA.56.1114} {\bibfield  {journal} {\bibinfo
  {journal} {Phys. Rev. A}\ }\textbf {\bibinfo {volume} {56}},\ \bibinfo
  {pages} {1114} (\bibinfo {year} {1997})}\BibitemShut {NoStop}%
\bibitem [{\citenamefont {Leghtas}\ \emph {et~al.}(2013)\citenamefont
  {Leghtas}, \citenamefont {Kirchmair}, \citenamefont {Vlastakis},
  \citenamefont {Schoelkopf}, \citenamefont {Devoret},\ and\ \citenamefont
  {Mirrahimi}}]{Leghtas13}%
  \BibitemOpen
  \bibfield  {author} {\bibinfo {author} {\bibfnamefont {Z.}~\bibnamefont
  {Leghtas}}, \bibinfo {author} {\bibfnamefont {G.}~\bibnamefont {Kirchmair}},
  \bibinfo {author} {\bibfnamefont {B.}~\bibnamefont {Vlastakis}}, \bibinfo
  {author} {\bibfnamefont {R.~J.}\ \bibnamefont {Schoelkopf}}, \bibinfo
  {author} {\bibfnamefont {M.~H.}\ \bibnamefont {Devoret}},\ and\ \bibinfo
  {author} {\bibfnamefont {M.}~\bibnamefont {Mirrahimi}},\ }\href
  {https://doi.org/10.1103/PhysRevLett.111.120501} {\bibfield  {journal}
  {\bibinfo  {journal} {Phys. Rev. Lett.}\ }\textbf {\bibinfo {volume} {111}},\
  \bibinfo {pages} {120501} (\bibinfo {year} {2013})}\BibitemShut {NoStop}%
\bibitem [{\citenamefont {Mirrahimi}\ \emph {et~al.}(2014)\citenamefont
  {Mirrahimi}, \citenamefont {Leghtas}, \citenamefont {Albert}, \citenamefont
  {Touzard}, \citenamefont {Schoelkopf}, \citenamefont {Jiang},\ and\
  \citenamefont {Devoret}}]{Mirrahimi14}%
  \BibitemOpen
  \bibfield  {author} {\bibinfo {author} {\bibfnamefont {M.}~\bibnamefont
  {Mirrahimi}}, \bibinfo {author} {\bibfnamefont {Z.}~\bibnamefont {Leghtas}},
  \bibinfo {author} {\bibfnamefont {V.~V.}\ \bibnamefont {Albert}}, \bibinfo
  {author} {\bibfnamefont {S.}~\bibnamefont {Touzard}}, \bibinfo {author}
  {\bibfnamefont {R.~J.}\ \bibnamefont {Schoelkopf}}, \bibinfo {author}
  {\bibfnamefont {L.}~\bibnamefont {Jiang}},\ and\ \bibinfo {author}
  {\bibfnamefont {M.~H.}\ \bibnamefont {Devoret}},\ }\href
  {https://doi.org/10.1088/1367-2630/16/4/045014} {\bibfield  {journal}
  {\bibinfo  {journal} {New J. Phys.}\ }\textbf {\bibinfo {volume} {16}},\
  \bibinfo {pages} {045014} (\bibinfo {year} {2014})}\BibitemShut {NoStop}%
\bibitem [{\citenamefont {Ofek}\ \emph {et~al.}(2016)\citenamefont {Ofek},
  \citenamefont {Petrenko}, \citenamefont {Heeres}, \citenamefont {Reinhold},
  \citenamefont {Leghtas}, \citenamefont {Vlastakis}, \citenamefont {Liu},
  \citenamefont {Frunzio}, \citenamefont {Girvin}, \citenamefont {Jiang} \emph
  {et~al.}}]{Ofek16}%
  \BibitemOpen
  \bibfield  {author} {\bibinfo {author} {\bibfnamefont {N.}~\bibnamefont
  {Ofek}}, \bibinfo {author} {\bibfnamefont {A.}~\bibnamefont {Petrenko}},
  \bibinfo {author} {\bibfnamefont {R.}~\bibnamefont {Heeres}}, \bibinfo
  {author} {\bibfnamefont {P.}~\bibnamefont {Reinhold}}, \bibinfo {author}
  {\bibfnamefont {Z.}~\bibnamefont {Leghtas}}, \bibinfo {author} {\bibfnamefont
  {B.}~\bibnamefont {Vlastakis}}, \bibinfo {author} {\bibfnamefont
  {Y.}~\bibnamefont {Liu}}, \bibinfo {author} {\bibfnamefont {L.}~\bibnamefont
  {Frunzio}}, \bibinfo {author} {\bibfnamefont {S.}~\bibnamefont {Girvin}},
  \bibinfo {author} {\bibfnamefont {L.}~\bibnamefont {Jiang}}, \emph {et~al.},\
  }\href {https://doi.org/10.1038/nature18949} {\bibfield  {journal} {\bibinfo
  {journal} {Nature}\ }\textbf {\bibinfo {volume} {536}},\ \bibinfo {pages}
  {441} (\bibinfo {year} {2016})}\BibitemShut {NoStop}%
\bibitem [{\citenamefont {Gottesman}\ \emph {et~al.}(2001)\citenamefont
  {Gottesman}, \citenamefont {Kitaev},\ and\ \citenamefont
  {Preskill}}]{Gottesman01}%
  \BibitemOpen
  \bibfield  {author} {\bibinfo {author} {\bibfnamefont {D.}~\bibnamefont
  {Gottesman}}, \bibinfo {author} {\bibfnamefont {A.}~\bibnamefont {Kitaev}},\
  and\ \bibinfo {author} {\bibfnamefont {J.}~\bibnamefont {Preskill}},\ }\href
  {https://doi.org/10.1103/PhysRevA.64.012310} {\bibfield  {journal} {\bibinfo
  {journal} {Phys. Rev. A}\ }\textbf {\bibinfo {volume} {64}},\ \bibinfo
  {pages} {012310} (\bibinfo {year} {2001})}\BibitemShut {NoStop}%
\bibitem [{\citenamefont {Campagne-Ibarcq}\ \emph {et~al.}(2020)\citenamefont
  {Campagne-Ibarcq}, \citenamefont {Eickbusch}, \citenamefont {Touzard},
  \citenamefont {Zalys-Geller}, \citenamefont {Frattini}, \citenamefont
  {Sivak}, \citenamefont {Reinhold}, \citenamefont {Puri}, \citenamefont
  {Shankar}, \citenamefont {Schoelkopf} \emph {et~al.}}]{Campagne20}%
  \BibitemOpen
  \bibfield  {author} {\bibinfo {author} {\bibfnamefont {P.}~\bibnamefont
  {Campagne-Ibarcq}}, \bibinfo {author} {\bibfnamefont {A.}~\bibnamefont
  {Eickbusch}}, \bibinfo {author} {\bibfnamefont {S.}~\bibnamefont {Touzard}},
  \bibinfo {author} {\bibfnamefont {E.}~\bibnamefont {Zalys-Geller}}, \bibinfo
  {author} {\bibfnamefont {N.~E.}\ \bibnamefont {Frattini}}, \bibinfo {author}
  {\bibfnamefont {V.~V.}\ \bibnamefont {Sivak}}, \bibinfo {author}
  {\bibfnamefont {P.}~\bibnamefont {Reinhold}}, \bibinfo {author}
  {\bibfnamefont {S.}~\bibnamefont {Puri}}, \bibinfo {author} {\bibfnamefont
  {S.}~\bibnamefont {Shankar}}, \bibinfo {author} {\bibfnamefont {R.~J.}\
  \bibnamefont {Schoelkopf}}, \emph {et~al.},\ }\href
  {https://doi.org/10.1038/s41586-020-2603-3} {\bibfield  {journal} {\bibinfo
  {journal} {Nature}\ }\textbf {\bibinfo {volume} {584}},\ \bibinfo {pages}
  {368} (\bibinfo {year} {2020})}\BibitemShut {NoStop}%
\bibitem [{\citenamefont {Chuang}\ \emph {et~al.}(2009)\citenamefont {Chuang},
  \citenamefont {Cross}, \citenamefont {Smith}, \citenamefont {Smolin},\ and\
  \citenamefont {Zeng}}]{Chuang09}%
  \BibitemOpen
  \bibfield  {author} {\bibinfo {author} {\bibfnamefont {I.}~\bibnamefont
  {Chuang}}, \bibinfo {author} {\bibfnamefont {A.}~\bibnamefont {Cross}},
  \bibinfo {author} {\bibfnamefont {G.}~\bibnamefont {Smith}}, \bibinfo
  {author} {\bibfnamefont {J.}~\bibnamefont {Smolin}},\ and\ \bibinfo {author}
  {\bibfnamefont {B.}~\bibnamefont {Zeng}},\ }\href
  {https://doi.org/10.1063/1.3086833} {\bibfield  {journal} {\bibinfo
  {journal} {J. Math. Phys.}\ }\textbf {\bibinfo {volume} {50}},\ \bibinfo
  {pages} {042109} (\bibinfo {year} {2009})}\BibitemShut {NoStop}%
\bibitem [{\citenamefont {Yu}\ \emph {et~al.}(2008)\citenamefont {Yu},
  \citenamefont {Chen}, \citenamefont {Lai},\ and\ \citenamefont {Oh}}]{Yu08}%
  \BibitemOpen
  \bibfield  {author} {\bibinfo {author} {\bibfnamefont {S.}~\bibnamefont
  {Yu}}, \bibinfo {author} {\bibfnamefont {Q.}~\bibnamefont {Chen}}, \bibinfo
  {author} {\bibfnamefont {C.~H.}\ \bibnamefont {Lai}},\ and\ \bibinfo {author}
  {\bibfnamefont {C.~H.}\ \bibnamefont {Oh}},\ }\href
  {https://doi.org/10.1103/PhysRevLett.101.090501} {\bibfield  {journal}
  {\bibinfo  {journal} {Phys. Rev. Lett.}\ }\textbf {\bibinfo {volume} {101}},\
  \bibinfo {pages} {090501} (\bibinfo {year} {2008})}\BibitemShut {NoStop}%
\bibitem [{\citenamefont {Rains}\ \emph {et~al.}(1997)\citenamefont {Rains},
  \citenamefont {Hardin}, \citenamefont {Shor},\ and\ \citenamefont
  {Sloane}}]{Rains97}%
  \BibitemOpen
  \bibfield  {author} {\bibinfo {author} {\bibfnamefont {E.~M.}\ \bibnamefont
  {Rains}}, \bibinfo {author} {\bibfnamefont {R.~H.}\ \bibnamefont {Hardin}},
  \bibinfo {author} {\bibfnamefont {P.~W.}\ \bibnamefont {Shor}},\ and\
  \bibinfo {author} {\bibfnamefont {N.~J.~A.}\ \bibnamefont {Sloane}},\ }\href
  {https://doi.org/10.1103/PhysRevLett.79.953} {\bibfield  {journal} {\bibinfo
  {journal} {Phys. Rev. Lett.}\ }\textbf {\bibinfo {volume} {79}},\ \bibinfo
  {pages} {953} (\bibinfo {year} {1997})}\BibitemShut {NoStop}%
\bibitem [{\citenamefont {Zanardi}\ and\ \citenamefont
  {Rasetti}(1997)}]{Zanardi97}%
  \BibitemOpen
  \bibfield  {author} {\bibinfo {author} {\bibfnamefont {P.}~\bibnamefont
  {Zanardi}}\ and\ \bibinfo {author} {\bibfnamefont {M.}~\bibnamefont
  {Rasetti}},\ }\href {https://doi.org/10.1103/PhysRevLett.79.3306} {\bibfield
  {journal} {\bibinfo  {journal} {Phys. Rev. Lett.}\ }\textbf {\bibinfo
  {volume} {79}},\ \bibinfo {pages} {3306} (\bibinfo {year}
  {1997})}\BibitemShut {NoStop}%
\bibitem [{\citenamefont {Zanardi}(2001)}]{Zanardi01PRL}%
  \BibitemOpen
  \bibfield  {author} {\bibinfo {author} {\bibfnamefont {P.}~\bibnamefont
  {Zanardi}},\ }\href {https://doi.org/10.1103/PhysRevLett.87.077901}
  {\bibfield  {journal} {\bibinfo  {journal} {Phys. Rev. Lett.}\ }\textbf
  {\bibinfo {volume} {87}},\ \bibinfo {pages} {077901} (\bibinfo {year}
  {2001})}\BibitemShut {NoStop}%
\bibitem [{\citenamefont {Zanardi}\ \emph {et~al.}(2004)\citenamefont
  {Zanardi}, \citenamefont {Lidar},\ and\ \citenamefont
  {Lloyd}}]{Zanardi04PRL}%
  \BibitemOpen
  \bibfield  {author} {\bibinfo {author} {\bibfnamefont {P.}~\bibnamefont
  {Zanardi}}, \bibinfo {author} {\bibfnamefont {D.~A.}\ \bibnamefont {Lidar}},\
  and\ \bibinfo {author} {\bibfnamefont {S.}~\bibnamefont {Lloyd}},\ }\href
  {https://doi.org/10.1103/PhysRevLett.92.060402} {\bibfield  {journal}
  {\bibinfo  {journal} {Phys. Rev. Lett.}\ }\textbf {\bibinfo {volume} {92}},\
  \bibinfo {pages} {060402} (\bibinfo {year} {2004})}\BibitemShut {NoStop}%
\bibitem [{\citenamefont {Lang}(2002)}]{Lang}%
  \BibitemOpen
  \bibfield  {author} {\bibinfo {author} {\bibfnamefont {S.}~\bibnamefont
  {Lang}},\ }\href@noop {} {\emph {\bibinfo {title} {Algebra}}},\ \bibinfo
  {edition} {revised 3rd}\ ed.\ (\bibinfo  {publisher} {Springer, New York},\
  \bibinfo {year} {2002})\BibitemShut {NoStop}%
\bibitem [{\citenamefont {Levick}\ \emph {et~al.}(2016)\citenamefont {Levick},
  \citenamefont {Jochym-O'Connor}, \citenamefont {Kribs}, \citenamefont
  {Laflamme},\ and\ \citenamefont {Pereira}}]{Levick16}%
  \BibitemOpen
  \bibfield  {author} {\bibinfo {author} {\bibfnamefont {J.}~\bibnamefont
  {Levick}}, \bibinfo {author} {\bibfnamefont {T.}~\bibnamefont
  {Jochym-O'Connor}}, \bibinfo {author} {\bibfnamefont {D.~W.}\ \bibnamefont
  {Kribs}}, \bibinfo {author} {\bibfnamefont {R.}~\bibnamefont {Laflamme}},\
  and\ \bibinfo {author} {\bibfnamefont {R.}~\bibnamefont {Pereira}},\ }\href
  {https://doi.org/10.1088/1751-8113/49/12/125302} {\bibfield  {journal}
  {\bibinfo  {journal} {J. Phys. A}\ }\textbf {\bibinfo {volume} {49}},\
  \bibinfo {pages} {125302} (\bibinfo {year} {2016})}\BibitemShut {NoStop}%
\bibitem [{\citenamefont {Terhal}\ and\ \citenamefont
  {Weigand}(2016)}]{Terhal16}%
  \BibitemOpen
  \bibfield  {author} {\bibinfo {author} {\bibfnamefont {B.~M.}\ \bibnamefont
  {Terhal}}\ and\ \bibinfo {author} {\bibfnamefont {D.}~\bibnamefont
  {Weigand}},\ }\href {https://doi.org/10.1103/PhysRevA.93.012315} {\bibfield
  {journal} {\bibinfo  {journal} {Phys. Rev. A}\ }\textbf {\bibinfo {volume}
  {93}},\ \bibinfo {pages} {012315} (\bibinfo {year} {2016})}\BibitemShut
  {NoStop}%
\bibitem [{\citenamefont {Steane}(1997)}]{Steane97}%
  \BibitemOpen
  \bibfield  {author} {\bibinfo {author} {\bibfnamefont {A.~M.}\ \bibnamefont
  {Steane}},\ }\href {https://doi.org/10.1103/PhysRevLett.78.2252} {\bibfield
  {journal} {\bibinfo  {journal} {Phys. Rev. Lett.}\ }\textbf {\bibinfo
  {volume} {78}},\ \bibinfo {pages} {2252} (\bibinfo {year}
  {1997})}\BibitemShut {NoStop}%
\bibitem [{\citenamefont {Vuillot}\ \emph {et~al.}(2019)\citenamefont
  {Vuillot}, \citenamefont {Asasi}, \citenamefont {Wang}, \citenamefont
  {Pryadko},\ and\ \citenamefont {Terhal}}]{Vuillot19}%
  \BibitemOpen
  \bibfield  {author} {\bibinfo {author} {\bibfnamefont {C.}~\bibnamefont
  {Vuillot}}, \bibinfo {author} {\bibfnamefont {H.}~\bibnamefont {Asasi}},
  \bibinfo {author} {\bibfnamefont {Y.}~\bibnamefont {Wang}}, \bibinfo {author}
  {\bibfnamefont {L.~P.}\ \bibnamefont {Pryadko}},\ and\ \bibinfo {author}
  {\bibfnamefont {B.~M.}\ \bibnamefont {Terhal}},\ }\href
  {https://doi.org/10.1103/PhysRevA.99.032344} {\bibfield  {journal} {\bibinfo
  {journal} {Phys. Rev. A}\ }\textbf {\bibinfo {volume} {99}},\ \bibinfo
  {pages} {032344} (\bibinfo {year} {2019})}\BibitemShut {NoStop}%
\bibitem [{\citenamefont {Kitaev}(1995)}]{Kitaev95}%
  \BibitemOpen
  \bibfield  {author} {\bibinfo {author} {\bibfnamefont {A.~Y.}\ \bibnamefont
  {Kitaev}},\ }\Eprint {https://arxiv.org/abs/quant-ph/9511026}
  {quant-ph/9511026}  (\bibinfo {year} {1995})\BibitemShut {NoStop}%
\bibitem [{\citenamefont {Griffiths}\ and\ \citenamefont
  {Niu}(1996)}]{Griffiths96}%
  \BibitemOpen
  \bibfield  {author} {\bibinfo {author} {\bibfnamefont {R.~B.}\ \bibnamefont
  {Griffiths}}\ and\ \bibinfo {author} {\bibfnamefont {C.-S.}\ \bibnamefont
  {Niu}},\ }\href {https://doi.org/10.1103/PhysRevLett.76.3228} {\bibfield
  {journal} {\bibinfo  {journal} {Phys. Rev. Lett.}\ }\textbf {\bibinfo
  {volume} {76}},\ \bibinfo {pages} {3228} (\bibinfo {year}
  {1996})}\BibitemShut {NoStop}%
\bibitem [{\citenamefont {Higgins}\ \emph {et~al.}(2007)\citenamefont
  {Higgins}, \citenamefont {Berry}, \citenamefont {Bartlett}, \citenamefont
  {Wiseman},\ and\ \citenamefont {Pryde}}]{Higgins07}%
  \BibitemOpen
  \bibfield  {author} {\bibinfo {author} {\bibfnamefont {B.~L.}\ \bibnamefont
  {Higgins}}, \bibinfo {author} {\bibfnamefont {D.~W.}\ \bibnamefont {Berry}},
  \bibinfo {author} {\bibfnamefont {S.~D.}\ \bibnamefont {Bartlett}}, \bibinfo
  {author} {\bibfnamefont {H.~M.}\ \bibnamefont {Wiseman}},\ and\ \bibinfo
  {author} {\bibfnamefont {G.~J.}\ \bibnamefont {Pryde}},\ }\href
  {https://doi.org/10.1038/nature06257} {\bibfield  {journal} {\bibinfo
  {journal} {Nature}\ }\textbf {\bibinfo {volume} {450}},\ \bibinfo {pages}
  {393} (\bibinfo {year} {2007})}\BibitemShut {NoStop}%
\bibitem [{\citenamefont {Svore}\ \emph
  {et~al.}(2013{\natexlab{a}})\citenamefont {Svore}, \citenamefont {Hastings},\
  and\ \citenamefont {Freedman}}]{Svore13}%
  \BibitemOpen
  \bibfield  {author} {\bibinfo {author} {\bibfnamefont {K.~M.}\ \bibnamefont
  {Svore}}, \bibinfo {author} {\bibfnamefont {M.~B.}\ \bibnamefont
  {Hastings}},\ and\ \bibinfo {author} {\bibfnamefont {M.}~\bibnamefont
  {Freedman}},\ }\href@noop {} {\bibfield  {journal} {\bibinfo  {journal}
  {Quant. Inf. \& Comp.}\ }\textbf {\bibinfo {volume} {14}},\ \bibinfo {pages}
  {306} (\bibinfo {year} {2013}{\natexlab{a}})}\BibitemShut {NoStop}%
\bibitem [{\citenamefont {Svore}\ \emph
  {et~al.}(2013{\natexlab{b}})\citenamefont {Svore}, \citenamefont {Hastings},\
  and\ \citenamefont {Freedman}}]{Svore13arXiv}%
  \BibitemOpen
  \bibfield  {author} {\bibinfo {author} {\bibfnamefont {K.~M.}\ \bibnamefont
  {Svore}}, \bibinfo {author} {\bibfnamefont {M.~B.}\ \bibnamefont
  {Hastings}},\ and\ \bibinfo {author} {\bibfnamefont {M.}~\bibnamefont
  {Freedman}},\ }\Eprint {https://arxiv.org/abs/1304.0741} {arXiv:1304.0741
  [quant-ph]}  (\bibinfo {year} {2013}{\natexlab{b}})\BibitemShut {NoStop}%
\bibitem [{\citenamefont {Hastrup}\ \emph {et~al.}(2021)\citenamefont
  {Hastrup}, \citenamefont {Park}, \citenamefont {Brask}, \citenamefont
  {Filip},\ and\ \citenamefont {Andersen}}]{Hastrup21}%
  \BibitemOpen
  \bibfield  {author} {\bibinfo {author} {\bibfnamefont {J.}~\bibnamefont
  {Hastrup}}, \bibinfo {author} {\bibfnamefont {K.}~\bibnamefont {Park}},
  \bibinfo {author} {\bibfnamefont {J.~B.}\ \bibnamefont {Brask}}, \bibinfo
  {author} {\bibfnamefont {R.}~\bibnamefont {Filip}},\ and\ \bibinfo {author}
  {\bibfnamefont {U.~L.}\ \bibnamefont {Andersen}},\ }\href
  {https://doi.org/10.1038/s41534-020-00353-3} {\bibfield  {journal} {\bibinfo
  {journal} {npj Quantum Inf.}\ }\textbf {\bibinfo {volume} {7}},\ \bibinfo
  {pages} {17} (\bibinfo {year} {2021})}\BibitemShut {NoStop}%
\end{thebibliography}%
\end{document}